\newcommand {\bel}[1]{\begin{align*}}
\newcommand {\eel}[1]{\end{align*}}
\newcommand {\bea}{\begin{eqnarray}}
\newcommand {\eea}{\end{eqnarray}}
\def\myfrac#1#2{
    \hspace{3pt}\!\!\!$_{#1}\!\!\hspace{1pt}\backslash\hspace{2pt}\!\!^{#2}$\!\!\!\hspace{3pt}
    }
\newcommand{\pr}{\mathbb{P}}
\newcommand{\E}{\mathbb{E}}
\newcommand{\ignore}[1]{\relax}
\def\rZ{{\mathbb Z}}
\newtheorem{theorem}{Theorem}
\newcommand{\remark}{{\bf Remark : }}
\newtheorem{lemma}{Lemma}
\newtheorem{proposition}{Proposition}
\newtheorem{corollary}{Corollary}
\newtheorem{definition}{Definition}
\begin{document}
\title{Randomized greedy algorithms for independent sets and matchings in regular
graphs: Exact results and finite girth corrections.}
\author{
{\sf David Gamarnik }
\thanks{Operations Research Center and Sloan School of Management, MIT, Cambridge, MA,  02139, e-mail: {\tt
gamarnik@mit.edu}}
\and
{\sf David A. Goldberg}
\thanks{Operations Research Center, MIT, Cambridge, MA,
02139, e-mail: {\tt dag3141@mit.edu}}
}
\maketitle
\begin{abstract}
We derive new results for the performance of a simple greedy algorithm for finding large
independent sets and matchings in constant degree regular graphs.  We
show that for $r$-regular graphs with $n$ nodes and girth at least $g$, the algorithm
finds an independent set of expected cardinality $f(r)n -
O\big(\frac{(r-1)^{\frac{g}{2}}}{ \frac{g}{2}!} n\big)$, where $f(r)$ is a function which
we explicitly compute. A similar result is established for matchings.
Our results imply improved bounds for the size of the largest independent set in these graphs, and
provide the first results of this type for matchings. As an implication we show that the greedy algorithm
returns a nearly perfect matching when both the degree $r$ and girth $g$ are large.
Furthermore, we show that the cardinality of independent sets and matchings produced by the greedy algorithm
in \emph{arbitrary} bounded degree graphs is concentrated around the mean.
Finally, we analyze the performance of the greedy algorithm for the case of random i.i.d. weighted independent
sets and matchings,
and obtain a remarkably simple expression for the limiting expected values produced by the algorithm.
In fact, all the other results are obtained as straightforward corollaries from the results for the weighted case.
\end{abstract}
\pagenumbering{arabic}

\section{Introduction}\label{section:introsec}
\subsection{Regular graphs, independent sets, matchings and randomized greedy algorithms}
 An $r$-regular graph is
a graph in which every node has degree exactly $r$.  The girth $g$
of a graph is the size of the smallest cycle.  Let $G(g,r)$ denote
the family of all $r$-regular graphs with girth at
least $g$.  For a graph $G$, we denote the set of nodes  and edges  by
$V(G)$ and $E(G)$, respectively.  A set of nodes $I$ is defined to be
an independent set if no two nodes of $I$ are adjacent. For a
graph $G$, let ${\cal I}(G)$ denote (any) maximum cardinality
independent set  ($MIS$) of $G$, and $|{\cal I}(G)|$ its cardinality.
Throughout the paper we will drop the explicit reference to the underlying
graph $G$ when there is no ambiguity. For example we use ${\cal I}$ instead of
${\cal I}(G)$ or $V$ instead of $V(G)$.

Suppose the nodes of a graph are equipped with some non-negative
weights $W_i, 1\leq i \leq n\triangleq |V|$.  The weight $W[I]$ of a given independent set $I$ is
the sum of the weights of the nodes in $I$.  When the nodes of a graph are equipped with weights which
are generated i.i.d. using a continuous distribution function $F(t) = \pr(W_i \leq t)$ with
non-negative support, we denote by ${\cal I}_W$ the random unique with probability $1$ (w.p.1)
maximum weight independent set $(MWIS)$ of $G$.

A (partial) matching is a set of edges $M$ in a graph $G$ such that every node
is incident to at most one edge in $M$.
For a graph $G$, let ${\cal M}$ denote (any) maximum cardinality matching ($MM$) of $G$.
Suppose the edges of a graph are equipped with some non-negative weights $W_e, e\in E$.
The weight $W[M]$ of a given matching $M$ is
the sum of the weights of the edges in $M$.
When the edges of a graph $G$ are equipped with weights generated i.i.d. using a continuous distribution function $F$ with
non-negative support, we denote by ${\cal M}_W$ the random unique (w.p.1) maximum weight matching $(MWM)$
of $G$.

In this paper we analyze the performance of a simple greedy
algorithm, which we call $GREEDY$,  for finding large independent sets and matchings.
The description of the $GREEDY$ algorithm is as follows. For independent sets, $GREEDY$
iteratively selects a node $i$ uniformly at random (u.a.r) from all
remaining nodes of the graph, adds $i$ to the independent set,
deletes all remaining nodes adjacent to $i$ and repeats. Note that while the underlying
graph is non-random, the independent set produced by $GREEDY$ is random as it is based on
randomized choices. For
$MWIS$, $GREEDY$ iteratively selects the node $i$ with the  greatest
weight from all the remaining nodes, adds $i$ to the independent set,
deletes all the remaining nodes adjacent to $i$ and repeats.  Note that when acting on a fixed weighted graph, the action of $GREEDY$ is non-random.  In this setting, the randomness will come from the fact that the weighting itself is i.i.d.  For matchings $GREEDY$
operates similarly, except that it chooses edges instead of nodes,
and deletes edges incident to the chosen edge.

Let ${\cal IG}({\cal MG})$ denote the random independent set (matching) returned by
$GREEDY$ when run on an unweighted or (randomly) weighted graph $G$, depending on context.
Denote by $W[{\cal IG}] (W[{\cal MG}])$
the weight of ${\cal IG} ({\cal MG})$ (for the weighted case), and by $|{\cal IG}|(|{\cal MG}|)$ the respective cardinalities (in the unweighted case).
Our goal is obtaining bounds on
the expectation and variance of
$|{\cal IG}|$, $|{\cal MG}|$, $W[{\cal IG}],$ $W[{\cal MG}]$, where the latter two will be considered
for the case of i.i.d. continuous non-negative weight distributions. One of the
motivations is to derive new lower bounds on largest
 independent set in
constant degree regular graphs with large  girth.

\subsection{Summary of our results and prior work}
Our main results are Theorems~\ref{theorem:MWISWEIGHT},\ref{theorem:MWMWEIGHT} which provide
remarkably explicit upper and lower bounds on the expected weight of the independent set and matching produced by $GREEDY$
in a regular graph of large fixed girth when the weights are generated i.i.d. from a continuous non-negative distribution. Since the gap between
the upper and lower bound is of the order $\approx (r-1)^{g/2}/(g/2)!$, we also obtain the limiting expression
for the weight of the independent set and matching produced by $GREEDY$ in a regular graph when
the girth diverges to infinity. These results are Corollaries~\ref{coro:MWISWEIGHT},\ref{coro:MWMWEIGHT}.

As a corollary we obtain upper and lower bounds on $\E[|{\cal IG}|]$ and $\E[|{\cal MG}|]$, by considering a uniform distribution which
is highly concentrated around $1$. These results are stated as Theorems~\ref{theorem:MIS},\ref{theorem:MM}.
Again the gap between the upper and lower bounds is of the order $\approx (r-1)^{g/2}/(g/2)!$ and we obtain
a limiting expression when the girth diverges to infinity, as stated in Corollaries~\ref{coro:MICORunweight},~\ref{coro:MMCORunweight}.
While Corollary~\ref{coro:MMCORunweight} is a new result, Corollary~\ref{coro:MICORunweight} is not. This result was recently
established by Lauer and Wormald ~\cite{L.} using a different approach called the `nibble'  method.
Thus our Theorem~\ref{theorem:MIS} can be viewed as an explicit finite girth correction
to the limiting result (Corollary ~\ref{coro:MICORunweight}) derived earlier in~\cite{ L.} and proved here using different methods.

Our results on the performance of the $GREEDY$ algorithm, as well as
the results of \cite{ L.}, are motivated by the problem of obtaining
lower bounds on the size of the largest independent set in regular
graphs, and specifically regular graphs with large girth. The
history of this problem is very
long~\cite{HopkinsStaton},\cite{S.83},\cite{S.91},\cite{S.95} with
\cite{ L.} being the latest on the subject. In particular, the lower
bounds obtained in \cite{ L.} are the best known for the case $r\ge
7$ and sufficiently large girth, and in this range they beat
previous best bounds obtained by Shearer~\cite{S.91}.   Although
these bounds are the best known as the girth diverges to infinity
(for any fixed $r$), the bounds given in \cite{L.} for any fixed
girth are very difficult to evaluate, as they are given implicitly
as the solution to a large-scale optimization problem.  Our bounds
match those of \cite{L.} for any fixed $r$ as the girth diverges to
infinity, \emph{and} give simple explicit bounds for any fixed girth
as a finite girth correction of the order  $\approx
(r-1)^{g/2}/(g/2)!$.  In addition, our bounds are superior for
several instances discussed in \cite{L.} where bounds were derived
numerically by lower-bounding the aforementioned large-scale
optimization problem. The details of this comparison are presented
in Section~\ref{section:NUMERICS}.

Our corresponding results for matchings  are new, both the limiting
version, Corollary~\ref{coro:MMCORunweight}, and the finite girth
correction, Theorem~\ref{theorem:MM}.  Interestingly, by considering
the upper and lower bounds in Theorem~\ref{theorem:MM} and taking a
double limit $r,g\rightarrow\infty$, we find that the $GREEDY$
algorithm produces a nearly perfect matching in the double limit
$r,g\rightarrow\infty$. This partially answers an open problem posed
by Frieze~\cite{FG} regarding the construction of a simple,
decentralized algorithm for finding a nearly perfect matching in
constant degree regular graphs with large girth.

Our second set of results, Theorems~\ref{theorem:VAR12} and~\ref{theorem:VAR11},
concerns the variance of the weight (cardinality) of the independent
set and matching produced by $GREEDY$ in \emph{arbitrary} graphs
with bounded degree. That is no additional assumptions on girth or
regularity are adopted. We show that when the weights are i.i.d. and have finite
second moment, and when the graph has bounded degree, the variance,
appropriately normalized, is of the order $O(1/n)$ in both cases.
We are also able to
give explicit bounds in terms of the graph degree, the number of
nodes, and the second moment of the weighting distribution.  We also
give similar results for the unweighted case.  Thus the answers
produced by $GREEDY$ are highly concentrated around their means, and in
this sense the $GREEDY$ algorithm is very robust.  We believe these
are the first results on the variance of the $GREEDY$ algorithm.

We now review some additional relevant literature.   The
$MIS,MWIS,MM$ and $MWM$ problems are obviously well-studied and
central to the field of combinatorial optimization. The $MIS$
problem is known to be NP-Complete, even for the case of cubic
planar graphs \cite{G.76} and graphs of polynomially large girth
\cite{M.92}, and is known to be $MAX-SNP$ complete even when
restricted to graphs with degree at most 3 \cite{B.94}.   From both
an approximation algorithm  and existential standpoint, the $MIS$
problem has been well-studied for bounded degree graphs
\cite{H.94a}, \cite{H.94b}, \cite{B.94}; graphs with large girth
\cite{M.85}, \cite{M.92}; triangle-free graphs with a given degree
sequence \cite{A.80}, \cite{A.81}, \cite{G.83}, \cite{S.83},
\cite{S.91}; and large-girth graphs with a  given degree sequence,
including regular graphs with large girth \cite{BollobasRandReg},
\cite{HopkinsStaton}, \cite{S.91}, \cite{D.94}, \cite{S.95},
\cite{L.}.  We note that, as already mentioned, our Corollary
~\ref{coro:MICORunweight} was derived earlier in~\cite{ L.} using
different techniques.

Although the $MM$ problem is solvable in polynomial time, much
research has gone into finding specialized algorithms for restricted
families of graphs. The most relevant graph families for which $MM$
has been studied (often using $GREEDY$ and related algorithms) are
bounded-degree graphs, and bounded-degree graphs of girth at least 5
\cite{D.91},\cite{M.97}. However, there appears to be a gap in the
literature for $MM$ in regular graphs with large girth, barring a
recent existential result that an $r$-regular graph with large girth
$g$ always contains a matching of size $\frac{n}{2} -
O((r-1)^{-\frac{g}{2}} n)$~\cite{F.07}. Namely, an asymptotically
perfect matching exists in such graphs as the girth increases. It is
of interest, however, to construct some \emph{decentralized} and
easy to implement algorithm for $MM$ which leads to an
asymptotically perfect matching, and our result
Theorem~\ref{theorem:MM} is a step towards this direction.

Our main method of proof uses the correlation decay technique,
sometimes also called the local weak convergence (objective)
method~\cite{Aldous:assignment00},
\cite{AldousSteele:survey},\cite{GamarnikNowickiSwirscszExpDyn}. We
establish that the choices made by the $GREEDY$ algorithm are
\emph{asymptotically independent} for pairs of nodes (in the case of
independent sets) and edges (in the case of matchings) which are far
apart. That is, if two nodes $i,j$ are at a large graph-theoretic
distance, then $\pr(i,j\in {\cal IG})\approx\pr(i\in {\cal
IG})\pr(j\in {\cal IG})$. A similar statement holds for matchings,
and also for the weighted case with i.i.d. weights. This allows the
reduction of the problem on a graph to the far simpler problem
formulated on a regular tree, which can be solved in a very explicit
way.  Such an asymptotic independence was also observed in
\cite{L.}, but here we are able to characterize this decay in a more
explicit manner.  A similar phenomenon was also observed in
\cite{GamarnikNowickiSwirscszExpDyn}, which studied maximum weight
independent sets and matchings for the case of i.i.d weights in
$r$-regular graphs with girth diverging to infinity. There it was
observed  that for the case of i.i.d. exponentially distributed
weights, such a decay of correlations occurs when $r=3,4$ and does
not occur when  $r \geq 5$, even as the girth diverges to infinity.
Thus the techniques of~\cite{GamarnikNowickiSwirscszExpDyn} were
only able to analyze exponentially weighted independent sets in
regular graphs of large girth when the degree was $r \leq 4$.  In
contrast we show that independent sets produced by $GREEDY$
\emph{always} exhibits such a decay of correlations for any degree.
This allows us to extend the analysis of
\cite{GamarnikNowickiSwirscszExpDyn} to regular graphs of arbitrary
constant degree. In Section~\ref{section:NUMERICS} we will see that
$GREEDY$ is nearly optimal for the settings considered in
\cite{GamarnikNowickiSwirscszExpDyn}.

We now give an outline of the rest of the paper.  In
Section~\ref{section:PRIOR} we state our main results formally and
show that our analysis for the case of i.i.d. weights encompasses
the analysis for the unweighted case. In
Section~\ref{section:BLOCKINGSUBTREES} we introduce the notion of an
influence blocking subgraph, show that under an i.i.d. weighting
most nodes (edges) will belong to such subgraphs, and show that
these subgraphs determine the behavior of $GREEDY$.  This enables us
to prove certain locality properties of $GREEDY$, which we then
apply to the setting of regular graphs of large constant girth.  In
Section~\ref{section:bonussec} we introduce and study a bonus
recursion that we will use to analyze the performance of $GREEDY$ on
infinite $r$-ary trees. Section~\ref{section:varrsec} is devoted to proving
results on the variance of GREEDY.
In Section ~\ref{section:NUMERICS} we
numerically evaluate our bounds and compare to earlier bounds in the
literature.  Finally, in Section~\ref{section:CONCLUSION} we provide
directions for future work and summary remarks.

\subsection{Notations and conventions}
We close this section with some additional notations. Throughout the
paper we consider simple undirected graphs $G=(V,E)$. Given a simple
path $P$ in a graph $G$,  the length of $P$ is the number of edges
in $P$. Given two nodes $i,j\in V$, the distance $D(i,j)$ is the
length of a shortest $i$ to $j$ path in $G$. Similarly, the distance
$D(e_1,e_2)$ between two edges $e_1,e_2\in E$ is the length of the
shortest path in $G$ that contains both $e_1$ and $e_2$, minus one.
Given a node $i\in V$, let the depth$-d$ neighborhood $N_d(i)$ be
the subgraph rooted at $i$ induced by the set of nodes $i'$ with
$D(i,i') \leq d$.  Givn an edge $e$, let $N_d(e)$ denote the
subgraph induced by the set of edges $e'$ with $D(e,e') \leq d$.
Specifically, for every  node $i$ and edge $e$, $N_0(i) = \lbrace i
\rbrace$ and $N_0(e) = \lbrace e \rbrace$. For simplicity we write
$N(\cdot)$ for $N_1(\cdot)$. $|N(i)|$ is the degree of the node $i$,
and $\max_{i\in V}|N(i)|$ is defined to be the degree of the graph.

Given a rooted  tree $T$, the depth of $T$ is the maximum distance
between the root $r$ and any leaf, and the depth of a
 node $i$ in $T$ is  $D(r,i)$. Given a node $i\in T$,
the set of children of $i$ is denoted by $C(i)$.

Suppose the nodes of an undirected graph $G$ are equipped with
weights $W_i$. We say that a path $i_1,i_2,\cdots,i_k$ is node
increasing if $W_{i_1}<\cdots<W_{i_k}$. Similarly, if the edges of
$G$ are weighted $W_{ij}$, we say that a path $i_1,i_2,\cdots,i_k$
is edge increasing if $W_{i_1i_2}<\cdots<W_{i_{k-1}i_k}$.

Denote by $T(r,d), d \geq 1$ a depth-$d$ tree where all non-leaf
nodes have $r$ children, and all leaves are distance $d$ from the
root.  Denote by $T(r+1,r,d), d \ge 1$ the depth-$d$ tree where the
root has $r+1$ children, all other non-leaf nodes have $r$ children,
and all leaves are distance $d$ from the root. Note that if $G \in
G(g,r)$ for some $g \geq 4$, then for every node $i \in V(G)$ and
any $d \leq \lfloor \frac{g-2}{2} \rfloor$, $N_d(i)$ is (isomorphic
to) $T(r,r-1,d)$. By convention, $T(r,0)$ and $T(r+1,r,0)$  both
refer to a single node.

Throughout the paper we will only consider non-negative distribution
functions, so the non-negativity qualification will be implicit.  If
$X$ is a discrete r.v. taking values in $\rZ_+$, the corresponding
probability generating function (p.g.f.) is denoted by
$\phi_X(s)=\sum_{k=0}^\infty s^k \pr(X=k)$. If two r.v. $X$ and $Y$
are equal in distribution, we write $X\stackrel{D}{=}Y$. When $X$ is
distributed according to distribution $F$, we will also write (with
some abuse of notation) $X\stackrel{D}{=}F$. The $m$-fold
convolution of a random variable $X$ is denoted by $X^{(m)}$. Let
$W$ be a continuous r.v. and let $X$ be a r.v. taking non-negative
integer values. Denote by $W^{<X>}$ the r.v. $\max_{1\le i\le X}W_i$
when $X>0$ and $0$ when $X=0$. Here $W_i$ are i.i.d. copies of $W$.
Given two events ${\cal A},{\cal B}$, let ${\cal A}\wedge {\cal B}$
and ${\cal A}\vee {\cal B}$ denote, respectively, the conjunction
and disjunction events. Also ${\cal A}^c$ denotes the complement of
the event ${\cal A}$ and $I({\cal A})$ denotes the indicator
function for the event ${\cal A}$.

\section{Main results} \label{section:PRIOR}

\subsection{Weighted case}
The following is our main result for the performance of the $GREEDY$ algorithm for finding
largest weighted independent sets. Both in the context of independent sets and matchings we assume
that the weights (of the nodes and edges) are generated i.i.d. from a non-negative continuous distribution $F$.
\begin{theorem}\label{theorem:MWISWEIGHT}
For every $g \geq 4$ and $r \geq 3$, and every continuous non-negative r.v. $W\stackrel{D}{=}F$
with density $f$ and $\E[W]<\infty$,
\begin{align*}
\int_{0}^{\infty} &x \Big(r-1 -
(r-2)F(x)\Big)^{-\frac{r}{r-2}}f(x)dx
- \E[W]\frac{r(r-1)^{\lfloor \frac{g-2}{2} \rfloor }}{(\lfloor \frac{g-2}{2} \rfloor+1)!}\\
&\leq \inf_{G \in G(g,r)} \E\left[\frac{W[{\cal IG}]}{|V|}\right] \leq
\sup_{G \in G(g,r)} \E\left[\frac{W[{\cal IG}]}{|V|}\right] \\
&\leq \int_{0}^{\infty} x \Big(r-1 - (r-2)F(x)\Big)^{-\frac{r}{r-2}}f(x)dx
+ \E[W]\frac{r(r-1)^{\lfloor \frac{g-2}{2} \rfloor }}{(\lfloor \frac{g-2}{2} \rfloor+1)!}.
\end{align*}
\end{theorem}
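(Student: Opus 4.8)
The plan is to reduce the computation of $\E[W[{\cal IG}]/|V|]$ to a purely local calculation on an infinite tree and to control the error by the probability that the relevant local region is atypically large. Write $m=\lfloor\frac{g-2}{2}\rfloor$, so that for every $G\in G(g,r)$ and every $v\in V(G)$ the neighborhood $N_m(v)$ is (isomorphic to) the tree $T(r,r-1,m)$. Let $\widehat T$ denote the infinite rooted tree in which the root has $r$ children and every other node has $r-1$ children (the $d\to\infty$ limit of $T(r,r-1,d)$). The starting point is the combinatorial characterization of $GREEDY$: since the weights are i.i.d.\ continuous they are almost surely distinct and $GREEDY$ processes nodes in decreasing weight order, so $v\in{\cal IG}$ if and only if no neighbor $u$ of $v$ with $W_u>W_v$ lies in ${\cal IG}$. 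Unfolding this recursion, the event $\{v\in{\cal IG}\}$ is a deterministic function of the weights on the set $R(v)$ of nodes reachable from $v$ along strictly weight-increasing paths, together with the subgraph induced on $R(v)$. A weight-increasing walk is automatically a non-backtracking simple path, so from a fixed node of an $r$-regular graph there are at most $r(r-1)^{k-1}$ of length $k$, and a fixed one is increasing with probability $\tfrac{(1-F(x))^{k}}{k!}$ given that node's weight equals $x$; hence $R(v)$ is almost surely finite, and the same characterization unambiguously defines $\{i\in{\cal IG}(\widehat T)\}$.

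Next comes the locality step. On the event $B_v=\{R(v)\subseteq N_m(v)\}$, the subgraph $G$ induces on $R(v)$ coincides with the one $\widehat T$ induces on it (because $N_m(v)$ is an induced, tree-isomorphic subgraph of $G$), and every depth-$m$ node of $R(v)$ is then a local weight maximum, so the recursion terminates inside $N_m(v)$ and returns the same value in $G$ and in $\widehat T$. Thus $I(v\in{\cal IG})$ and $I(\text{root}\in{\cal IG}(\widehat T))$ agree on $B_v$, while $B_v^c$ forces a weight-increasing path of length $m+1$ out of $v$, so by the first-moment bound above $\pr(B_v^c\mid W_v=x)\le r(r-1)^{m}\tfrac{(1-F(x))^{m+1}}{(m+1)!}\le \tfrac{r(r-1)^m}{(m+1)!}$. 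Consequently
\[
\Big|\E\big[W_v\, I(v\in{\cal IG})\big]-\E\big[W_v\, I(\text{root}\in{\cal IG}(\widehat T))\big]\Big|\;\le\;\E\big[W_v\, I(B_v^c)\big]\;\le\;\E[W]\,\frac{r(r-1)^m}{(m+1)!}.
\]
Summing over $v\in V(G)$, dividing by $|V|$, and taking $\inf$ and $\sup$ over $G\in G(g,r)$ reduces the theorem to showing that, with $p(x):=\pr(\text{root}\in{\cal IG}(\widehat T)\mid W_{\text{root}}=x)$, one has $\E[W_v\, I(\text{root}\in{\cal IG}(\widehat T))]=\int_0^\infty x\,p(x)f(x)\,dx$ (immediate) and $p(x)=\big(r-1-(r-2)F(x)\big)^{-\frac{r}{r-2}}$.

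The last ingredient is the tree computation, which is where the recursion of Section~\ref{section:bonussec} enters. For a pendant subtree in which every node, including its root, has $r-1$ children, let $q(x)$ be the probability its root lies in the greedy set given root weight $x$. Since a child with larger weight ignores the edge to its parent, independence across the $r-1$ sub-subtrees gives $q(x)=h(x)^{r-1}$ with $h(x):=F(x)+\int_x^\infty(1-q(y))f(y)\,dy$, and analogously $p(x)=h(x)^{r}$; finite-depth truncations converge to this fixed point at rate at most $(r-1)^d(1-F(x))^d/d!$, which gives existence and uniqueness. Differentiating, $h'=h^{r-1}f$ with $h(\infty)=1$, equivalently $\frac{d}{dx}h^{2-r}=(2-r)f$, so $h(x)^{2-r}=r-1-(r-2)F(x)$ and hence $p(x)=h(x)^r=\big(r-1-(r-2)F(x)\big)^{-\frac{r}{r-2}}$, as required.

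I expect the main obstacle to be making the locality step fully rigorous, i.e.\ the influence-blocking-subgraph machinery of Section~\ref{section:BLOCKINGSUBTREES}: establishing that $\{v\in{\cal IG}\}$ depends only on $R(v)$ and its induced subgraph, that $R(v)$ is almost surely finite so that $GREEDY$ is well defined on $\widehat T$, and that on $B_v$ the computations on $G$ and on $\widehat T$ genuinely coincide, including the induced-subgraph identification where the girth hypothesis is used through $N_m(v)\cong T(r,r-1,m)$. By comparison, the tree recursion and its explicit solution are essentially a calculation, and the passage from the single-graph estimate to the $\inf$/$\sup$ statement is trivial since the error bound is uniform over $G\in G(g,r)$.
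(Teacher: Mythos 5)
Your proposal is correct, and while the locality step is conceptually the same as the paper's influence--blocking--subgraph machinery (your increasing-reach set $R(v)$ is essentially $IB(\cdot)$, and your bound $\pr(B_v^c)\le r(r-1)^m/(m+1)!$ matches Lemma~\ref{lemma:shortandsweet}), your tree calculation is a genuinely different route from the one the paper takes. You condition on the root weight $W_{\text{root}}=x$, introduce $h(x)=1-\int_x^\infty q(y)f(y)\,dy$ as the probability that a fixed child fails to ``block,'' observe $q=h^{r-1}$ and $p=h^{r}$, and solve the separable ODE $h'=h^{r-1}f$ with $h(\infty)=1$ directly in the variable $x$. The paper instead works unconditionally through the bonus random variable $S(0)=W_0I(0\in{\cal IG}(T))$, shows (Lemma~\ref{lemma:WBONUSDIST}) that $S(0)\stackrel{D}{=}W^{\langle X_{d,r}\rangle}$ where $\{X_{d,r}\}$ is a distribution-free integer-valued recursion, then passes to the limit $X_{\infty,r}$ (Lemma~\ref{XYCONVERGE}), derives a recursive distributional equation (Lemma~\ref{lemma:recursionXYinfty}), and solves the ODE $\phi'=\phi^r$ for the probability generating function in the variable $s\in[0,1]$ (Lemma~\ref{lemma:derivimport1}, Proposition~\ref{prop:formofpgf}), after which $F$ enters only by substituting $s=F(W)$. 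Your route is shorter and more elementary for the independent-set case; the paper's route buys a clean separation of the underlying combinatorics (the law of $X_{d,r}$, which is universal over continuous $F$) from the weight distribution, gives the needed convergence of finite-depth truncations almost for free via the Cauchy argument in Theorem~\ref{theorem:GraphTreeRelation}, and in particular provides a uniform framework that extends without modification to the matching case, where the bonus $MS$ and the recursion~(\ref{eq:recursionY}) are a bit more delicate. One small caution in your write-up: you should establish the convergence of the finite-depth truncations \emph{before} asserting $p,q,h$ exist and satisfy the fixed-point ODE --- you state this convergence but in passing; the paper makes it a standalone step precisely because the fixed-point formulation alone does not establish existence.
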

As an immediate corollary, we obtain the following result.
\begin{corollary}\label{coro:MWISWEIGHT}
For every  $r \ge 3$ and every continuous non-negative r.v. $W\stackrel{D}{=}F$
with density $f$ and $\E[W]<\infty$,
\begin{eqnarray}\displaystyle
\lim_{g \rightarrow \infty}\displaystyle \inf_{G \in G(g,r)} \E\left[\frac{W[{\cal IG}]}{|V|}\right]
&=&\nonumber
\lim_{g \rightarrow \infty}\displaystyle \sup_{G
\in G(g,r)} \E\left[\frac{W[{\cal IG}]}{|V|}\right] \\&=&
\int_{0}^{\infty} x \Big(r-1 - (r-2)F(x)\Big)^{-\frac{r}{r-2}}f(x)dx.
\end{eqnarray}
\end{corollary}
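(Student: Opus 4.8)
The plan is to obtain the corollary as an immediate consequence of Theorem~\ref{theorem:MWISWEIGHT} by letting $g \to \infty$ in the two-sided bound. First I would observe that the main integral term
\[
J(F) \;=\; \int_0^\infty x\bigl(r-1-(r-2)F(x)\bigr)^{-\frac{r}{r-2}}f(x)\,dx
\]
does not depend on $g$ at all, so it is constant along the limit. What remains is to control the finite-girth correction term
\[
\varepsilon(g) \;=\; \E[W]\,\frac{r(r-1)^{\lfloor (g-2)/2\rfloor}}{(\lfloor (g-2)/2\rfloor + 1)!}.
\]
Since $r \ge 3$ is fixed and $\E[W] < \infty$ by hypothesis, one sets $m = \lfloor (g-2)/2\rfloor$, which tends to infinity with $g$, and uses the fact that for any fixed base $b = r-1$ one has $b^m / (m+1)! \to 0$ as $m \to \infty$ — this is the standard observation that factorial growth dominates exponential growth (e.g. because $\sum_m b^m/m! = e^b < \infty$ forces the terms to vanish). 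Hence $\varepsilon(g) \to 0$ as $g \to \infty$.

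Next I would combine these observations with the sandwich in Theorem~\ref{theorem:MWISWEIGHT}. For every $g \ge 4$,
\[
J(F) - \varepsilon(g) \;\le\; \inf_{G \in G(g,r)} \E\!\left[\frac{W[{\cal IG}]}{|V|}\right] \;\le\; \sup_{G \in G(g,r)} \E\!\left[\frac{W[{\cal IG}]}{|V|}\right] \;\le\; J(F) + \varepsilon(g).
\]
Taking $g \to \infty$ and using $\varepsilon(g) \to 0$, both the $\liminf$ and $\limsup$ of each of the two middle quantities are squeezed to $J(F)$; in particular each limit exists and equals $J(F)$. This yields precisely the chain of equalities in the statement of the corollary, with the common value $J(F)$.

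The argument is essentially mechanical once Theorem~\ref{theorem:MWISWEIGHT} is in hand, so there is no real obstacle here; the only point requiring a word of care is the claim $\varepsilon(g)\to 0$, and even that is routine. The genuinely hard work — establishing the asymptotic independence of the choices made by $GREEDY$ at far-apart nodes, reducing the analysis to the $r$-ary tree, solving the resulting tree recursion to identify the explicit integral $J(F)$, and quantifying the error by the influence-blocking-subgraph argument — is all packed into the proof of Theorem~\ref{theorem:MWISWEIGHT} itself, and is not revisited here.
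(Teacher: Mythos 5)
Your proof is correct and is exactly the argument the paper intends: the paper presents this as an immediate corollary of Theorem~\ref{theorem:MWISWEIGHT}, obtained by noting that the finite-girth correction term $\E[W]\,r(r-1)^{\lfloor (g-2)/2\rfloor}/(\lfloor (g-2)/2\rfloor+1)!$ vanishes as $g\to\infty$ and squeezing. Nothing to add.
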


We now present the results for matchings.
\begin{theorem}\label{theorem:MWMWEIGHT}
For every $g \ge 4$ and $r \geq 3$, and every continuous non-negative r.v. $W\stackrel{D}{=}F$
with density $f$ and $\E[W]<\infty$,
\begin{align*}
\frac{r}{2}\int_{0}^{\infty} &x \Big(r-1 - (r-2)F(x)\Big)^{-\frac{2(r-1)}{r-2}}f(x)dx
- \E[W]\frac{r (r-1)^{\lfloor \frac{g-2}{2} \rfloor}}{(\lfloor \frac{g-2}{2} \rfloor)!} \\
&\leq \inf_{G \in G(g,r)} \E\left[\frac{W[{\cal MG}]}{|V|}\right]
\leq \sup_{G \in G(g,r)} \E\left[\frac{W[{\cal MG}]}{|V|}\right]\\
&\leq \frac{r}{2}\int_{0}^{\infty} x \Big(r-1 - (r-2)F(x)\Big)^{-\frac{2(r-1)}{r-2}}f(x)dx
+ \E[W]\frac{r (r-1)^{\lfloor \frac{g-2}{2} \rfloor}}{(\lfloor \frac{g-2}{2} \rfloor)!}.
\end{align*}
\end{theorem}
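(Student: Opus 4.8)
The plan is to mirror the proof of Theorem~\ref{theorem:MWISWEIGHT}, replacing nodes by edges throughout. First I would write $\E\!\left[W[{\cal MG}]\right] = \sum_{e\in E}\E\!\left[W_e\, I(e\in {\cal MG})\right]$ and reduce the theorem to estimating the single quantity $\E\!\left[W_e\, I(e\in {\cal MG})\right]$ for an arbitrary edge $e$, since summing over the $|E| = r|V|/2$ edges and dividing by $|V|$ produces the leading factor $\tfrac{r}{2}$ (up to girth corrections, and by a near-symmetry argument on the tree-like neighborhoods of $G$). The structural fact driving everything is the recursive, purely local characterization of the greedy weighted matching: an edge $e$ belongs to ${\cal MG}$ if and only if none of the edges adjacent to $e$ of strictly larger weight belong to ${\cal MG}$. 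Unrolling this recursion, the status of $e$ is determined by the weights along edge-increasing paths emanating from $e$, and because the weights are i.i.d.\ continuous, this recursion terminates almost surely after finitely many steps. I would formalize this via the influence blocking subgraph $B(e)$: the indicator $I(e\in {\cal MG})$ is a measurable function of the weights on $B(e)$ alone, and $\E[|B(e)|]$ is finite and controlled.

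Second, I would carry out the exact computation on the tree. When $G\in G(g,r)$, the edge-neighborhood $N_d(e)$ is tree-like for $d\le\lfloor(g-2)/2\rfloor$, and in the limit $g\to\infty$ this becomes the infinite $(r-1)$-branching tree of edges rooted at $e$. On this tree I would set up the bonus recursion: for a threshold $x$, track the probability that the subtree hanging off a given non-root edge of weight below $x$ does not force that edge out of the matching, and write the one-step distributional recursion relating this quantity to its $r-1$ children by conditioning on the relative order of the children's weights and of $x$. Solving the resulting fixed-point equation gives a closed form for $\pr(e\in {\cal MG}\mid W_e=x)$, and integrating $x f(x)$ against it yields $\int_0^\infty x\bigl(r-1-(r-2)F(x)\bigr)^{-\frac{2(r-1)}{r-2}}f(x)\,dx$; multiplying by $\tfrac{r}{2}$ gives the leading term. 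I expect the algebra here to be the technical heart: finding the normalization of the bonus variable under which the recursion linearizes after the substitution $u=1-F(x)$, and verifying the exponent $-\tfrac{2(r-1)}{r-2}$ (as opposed to $-\tfrac{r}{r-2}$ in the independent-set case, the doubling reflecting that an edge can be preempted from either of its two endpoints).

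Third, I would quantify the finite-girth correction by coupling $G$ with its tree neighborhood. On the event $\{B(e)\subseteq N_d(e)\}$ with $d=\lfloor(g-2)/2\rfloor$, the function computing $I(e\in {\cal MG})$ is identical on $G$ and on the infinite tree, so the two values of $\E[W_e\, I(e\in {\cal MG})]$ differ by at most $\E\!\left[W_e\, I\!\left(B(e)\not\subseteq N_d(e)\right)\right]\le \E[W]\,\pr\!\left(B(e)\not\subseteq N_d(e)\right)$, after an appropriate truncation/conditioning so that the weight factor separates. It then remains to bound $\pr(B(e)\not\subseteq N_d(e))$ by the probability that some edge-increasing path of length $d$ starts at $e$: there are at most $r(r-1)^{d-1}$ such paths, and each is increasing with probability $1/d!$ since all orderings of $d$ i.i.d.\ continuous weights are equally likely, so the escape probability is at most $r(r-1)^{\lfloor(g-2)/2\rfloor}/\lfloor(g-2)/2\rfloor!$ (the edge-distance bookkeeping is exactly what removes the ``$+1$'' in the denominator present in the independent-set bound). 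Summing over all $|E|=r|V|/2$ edges and dividing by $|V|$ produces the additive error term, and the upper and lower bounds follow by applying the one-sided coupling error in each direction.

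The step I expect to be the main obstacle is the exact solution of the bonus recursion on the infinite tree, together with the justification that it is well-posed — namely that the almost-sure finiteness of $B(e)$ allows passage from the finite-depth tree recursions to a unique limit equal to the tree value — since the per-edge decomposition and the path-counting correction are otherwise routine adaptations of the independent-set argument. Corollary~\ref{coro:MWMWEIGHT} is then immediate by letting $g\to\infty$ in the displayed inequalities.
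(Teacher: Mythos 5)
Your plan is essentially the paper's own proof: per-edge decomposition $\E[W[{\cal MG}]]=\sum_{e}\E[W_eI(e\in{\cal MG})]$, influence-blocking-subgraph coupling of $G$ with the tree $T(r,r-1,d)$, and a distributional bonus recursion on the tree whose fixed-point p.g.f. is $((r-1)-(r-2)s)^{-(r-1)/(r-2)}$, squared precisely because — as you note — the root edge $(0,j)$ can be preempted from either endpoint (this is Proposition~\ref{prop:SUPERGOODENOUGH} part~3 together with Lemma~\ref{lemma:PROPPELWATER}: $W_{0j}$ must exceed both $MS(j)$ and $MS_H(0)$, which are two independent copies of the $Y_{\cdot,r-1}$ bonus, giving $\phi_{Y_{\infty,r-1}}^2(F(x))=\bigl(r-1-(r-2)F(x)\bigr)^{-2(r-1)/(r-2)}$). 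One bookkeeping correction for the girth term: the paper's Lemma~\ref{lemma:shortandsweet} bounds the per-edge escape probability by counting increasing paths of length $d+1$ that originate in $N(e)\setminus e$ and \emph{avoid $e$}, yielding $2(r-1)^{d_0}/d_0!$ with $d_0=\lfloor(g-2)/2\rfloor$ (avoiding $e$ makes the event independent of $W_e$, so the factorization $\E[W_eI(\cdot)]=\E[W]\pr(\cdot)$ is an equality, not the inequality you wrote); your count of ``$r(r-1)^{d-1}$ paths starting at $e$'' is not independent of $W_e$, and plugging your claimed per-edge bound $r(r-1)^{d_0}/d_0!$ into the sum over $|E|=r|V|/2$ edges would produce an error term $\tfrac{r^2}{2}(r-1)^{d_0}/d_0!$, off by a factor $r/2$ from the stated $r(r-1)^{d_0}/d_0!$.
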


An immediate implication is
\begin{corollary}\label{coro:MWMWEIGHT}
For every  $r \ge 3$ and every continuous non-negative r.v. $W\stackrel{D}{=}F$
with density $f$ and $\E[W]<\infty$,
\begin{eqnarray}\displaystyle \lim_{g \rightarrow \infty}\displaystyle \inf_{G \in G(g,r)} \E\left[\frac{W[{\cal MG}]}{|V|}\right]
&=&\nonumber
\lim_{g \rightarrow \infty}  \displaystyle \sup_{G
\in G(g,r)} \E\left[\frac{W[{\cal MG}]}{|V|}\right] \\&=&
\frac{r}{2} \int_{0}^{\infty} x \Big(r-1 - (r-2)F(x)\Big)^{-\frac{2(r-1)}{r-2}}f(x)dx.
\end{eqnarray}
\end{corollary}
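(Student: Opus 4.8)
The plan is to derive Corollary~\ref{coro:MWMWEIGHT} directly from Theorem~\ref{theorem:MWMWEIGHT} by a squeeze argument; the only substantive thing to check is that the finite-girth correction term vanishes as $g \to \infty$.

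I would first fix notation: set
$$I_r \triangleq \frac{r}{2}\int_{0}^{\infty} x\,\big(r-1-(r-2)F(x)\big)^{-\frac{2(r-1)}{r-2}} f(x)\,dx, \qquad \epsilon_g \triangleq \E[W]\,\frac{r\,(r-1)^{\lfloor\frac{g-2}{2}\rfloor}}{(\lfloor\frac{g-2}{2}\rfloor)!}.$$
Since $0 \le F(x) \le 1$, we have $1 \le r-1-(r-2)F(x) \le r-1$, so the integrand defining $I_r$ is at most $x f(x)$; hence $I_r \le \frac{r}{2}\E[W] < \infty$ is a well-defined finite number. Next I would observe that, for fixed $r$, the index $\lfloor\frac{g-2}{2}\rfloor$ diverges to $\infty$ as $g \to \infty$, and that $c^k/k! \to 0$ as $k \to \infty$ for any fixed $c > 0$ (the ratio of successive terms is $c/(k+1) \to 0$; equivalently, these are the tail terms of the convergent series for $e^c$). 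Taking $c = r-1$ and using $\E[W] < \infty$, this gives $\epsilon_g \to 0$ as $g \to \infty$.

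It then remains only to invoke Theorem~\ref{theorem:MWMWEIGHT}, which states that for every $g \ge 4$,
$$I_r - \epsilon_g \;\le\; \inf_{G \in G(g,r)} \E\!\left[\frac{W[{\cal MG}]}{|V|}\right] \;\le\; \sup_{G \in G(g,r)} \E\!\left[\frac{W[{\cal MG}]}{|V|}\right] \;\le\; I_r + \epsilon_g.$$
Letting $g \to \infty$, the two outer expressions both tend to $I_r$, so by the squeeze theorem both $\inf_{G \in G(g,r)} \E[W[{\cal MG}]/|V|]$ and $\sup_{G \in G(g,r)} \E[W[{\cal MG}]/|V|]$ converge to $I_r$, which is exactly the claim. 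I do not expect any genuine obstacle: the entire analytic content sits in Theorem~\ref{theorem:MWMWEIGHT}, and the only (elementary) ingredient beyond it is the decay of the exponential-over-factorial correction term.
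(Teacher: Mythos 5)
Your proposal is correct and matches the paper's approach: the paper presents Corollary~\ref{coro:MWMWEIGHT} as an immediate consequence of Theorem~\ref{theorem:MWMWEIGHT}, and your squeeze argument together with the observation that $(r-1)^{\lfloor (g-2)/2\rfloor}/(\lfloor (g-2)/2\rfloor)! \to 0$ is precisely the intended (elementary) reasoning. Nothing more is needed.
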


We now state our main results on bounding the variance of $W[{\cal IG}]$ and  $W[{\cal MG}]$.
\begin{theorem}\label{theorem:VAR12}
For every continuous non-negative r.v. $W\stackrel{D}{=}F$ with
 $\E[W^2]<\infty$, and
for every graph $G$ with degree $r \geq 3$,
\begin{equation}\label{VAR12aa}
Var[\frac{W[{\cal IG}]}{|V|}] \leq \frac{ 9 \E[W^2] r^2 e^{(r-1)^3}}{|V|}.
\end{equation}
and
\begin{equation}\label{VAR123aa}
Var[\frac{W[{\cal MG}]}{|E|}] \leq \frac{ 33 E[W^2] r^2 e^{(r-1)^3}}{|E|}.
\end{equation}
\end{theorem}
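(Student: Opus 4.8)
The natural approach is a bounded-differences / martingale argument (the Efron--Stein inequality or Azuma--Hoeffding applied to a Doob martingale), exploiting the fact that $GREEDY$ acting on a fixed weighted graph is a \emph{deterministic} function of the weight vector $(W_i)_{i\in V}$, so the only randomness is in the i.i.d.\ weights. Writing $W[{\cal IG}]=\Psi(W_1,\dots,W_n)$, I would bound the effect of resampling a single coordinate $W_i\to W_i'$. The key structural fact, which should already be available from Section~\ref{section:BLOCKINGSUBTREES}, is that whether a node $j$ is placed in ${\cal IG}$ depends only on the weights in the ``influence blocking subgraph'' around $j$; changing $W_i$ can only affect nodes $j$ whose decision traces back through a node-increasing path to $i$, and the expected number of such nodes is controlled by a branching-type estimate of size roughly $e^{(r-1)^3}$ (the constant $(r-1)^3$ and the exponential are the tell-tale signatures of summing $(r-1)^k/k!$-type series governing how far influence propagates). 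So the plan is: (i) show $\bigl|\Psi(W)-\Psi(W^{(i)})\bigr|\le (W_i+W_i')\cdot D_i$ where $D_i$ counts the affected nodes; (ii) bound $\E[D_i]$ and $\E[D_i^2]$ by a constant like $c\,r\,e^{(r-1)^3}$ using the locality results; (iii) feed this into Efron--Stein, $\rVar[\Psi]\le\frac12\sum_i\E\bigl[(\Psi(W)-\Psi(W^{(i)}))^2\bigr]$, yielding $\rVar[\Psi]\le O\bigl(n\,\E[W^2]\,r^2 e^{(r-1)^3}\bigr)$, and divide by $|V|^2$.

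More concretely, for step (i) I would couple the two runs of $GREEDY$ on weight vectors $W$ and $W^{(i)}$: they make identical decisions except possibly at nodes influenced by coordinate $i$. The weight discrepancy at each such affected node is at most $\max(W_i,W_i')$ if the node's own status flips, but one must be slightly careful that flipping node $j$'s status may cascade — however the cascade is exactly contained in the influence set, so the total discrepancy is bounded by $\bigl(\sum_{j\text{ affected}} W_j\bigr)$, and using that an affected node $j$ is reached via an increasing path one can further bound each $W_j$ crudely by contributions comparable to $W_i$ or $W_i'$, or simply keep $\sum_{j} W_j$ and use that the affected set is small in expectation together with independence of the $W_j$'s from the event ``$j$ is affected'' being conditioned appropriately. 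The cleanest route is probably: discrepancy $\le (W_i + W_i')\cdot |A_i|$ where $A_i$ is the set of nodes $j$ such that $j$'s $GREEDY$-status is not determined by the weights outside $N_d(i)$ for the relevant radius — then square, take expectations, use Cauchy--Schwarz to separate $\E[(W_i+W_i')^2]\le 4\E[W^2]$ from $\E[|A_i|^2]$, and invoke the locality bound. The matching case is identical with nodes replaced by edges, degree of the line graph at most $2(r-1)$, which accounts for the larger numerical constant $33$ versus $9$.

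The main obstacle I anticipate is making precise and quantitative the claim ``$\E[|A_i|^2]$ is bounded by a constant depending only on $r$'': one needs that the probability that a node at distance $k$ from $i$ has its $GREEDY$-decision influenced by $W_i$ decays fast enough (like $1/k!$ times $(r-1)^{k}$ growth of the neighborhood) that the sum converges, and that second moments also converge — this is where the $e^{(r-1)^3}$ rather than $e^{(r-1)}$ comes from, since controlling $\E[|A_i|^2]$ requires summing over \emph{pairs} of affected nodes and the relevant combinatorial factor is cubic in $r-1$ in the exponent. This hinges entirely on the influence-blocking-subgraph machinery of Section~\ref{section:BLOCKINGSUBTREES} providing a tail bound of the form $\pr(j\in A_i)\le \frac{(r-1)^{D(i,j)}}{D(i,j)!}$ or similar; granting that, the rest is Efron--Stein bookkeeping. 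A secondary technical point is that $F$ is only assumed to have a finite \emph{second} moment, so one cannot use Azuma (which wants bounded differences); Efron--Stein is the right tool precisely because it only needs $L^2$ control of the differences, and the $\E[W^2]$ in the statement confirms this is the intended route.
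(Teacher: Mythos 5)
Your Efron--Stein plan is a genuinely different route from the paper's proof, and as sketched it contains a gap worth naming. The paper does \emph{not} use a bounded-differences or martingale argument. Instead it expands the variance directly as a double sum of covariances,
\[
\rVar(W[{\cal IG}])=\sum_{i,j\in V}\Big(\E[W_iW_jI(i,j\in{\cal IG})]-\E[W_iI(i\in{\cal IG})]\,\E[W_jI(j\in{\cal IG})]\Big),
\]
separates the diagonal (bounded by $n\E[W^2]$) from the off-diagonal terms, and, for $j\in N_{d+1}(i)\setminus N_d(i)$, cuts at radius $l=\lfloor d/2\rfloor-1$. On the event that both influence blocking subgraphs $IB(N(i))\subset N_l(i)$ and $IB(N(j))\subset N_l(j)$, Lemma~\ref{lemma:SBOUNDEDSAYSITALL} makes the two indicators functions of disjoint weight neighborhoods and hence independent; on the complementary event, Lemmas~\ref{lemma:ind11} and \ref{lemma:shortandsweet} give the tail $r(r-1)^l/(l+1)!$. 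Summing $r(r-1)^d$ terms at distance $d+1$ against a tail of order $(r-1)^{d/2}/(d/2)!$ is exactly where $e^{(r-1)^3}$ comes from. The constant $33$ for matchings arises the same way with the line-graph counting, not from a separately run Efron--Stein.

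The gap in your plan is at step (i)--(ii). The discrepancy from resampling $W_i\to W_i'$ is $\sum_{j\in A_i}\pm W_j$, and the $W_j$ are \emph{not} bounded by $W_i+W_i'$ (affected nodes are reached along increasing paths, which if anything biases their weights upward), so the bound $\bigl|\Psi(W)-\Psi(W^{(i)})\bigr|\le (W_i+W_i')\,|A_i|$ is false as written. Keeping $\sum_{j\in A_i}W_j$ is correct, but then $\E\bigl[(\sum_{j\in A_i}W_j)^2\bigr]$ requires controlling the joint law of $W_j$ and the event $j\in A_i$ --- and these are \emph{not} independent, since membership in $A_i$ is defined through increasing paths whose existence depends on $W_j$ itself. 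You flag this difficulty but do not resolve it. A further subtlety you underplay: resampling $W_i$ changes the increasing-path structure, so $A_i$ is not simply $\{j : i\in IB(N(j))\}$; one has to take the union over the two weight vectors $W$ and $W^{(i)}$. The paper sidesteps all of this by exploiting the precise independence statement in Lemma~\ref{lemma:ind11}: the event $IB(N(z))\subset N_d(z)$ is independent of $W_z$, which lets it factor $\E[W_i(1-I(H_i\subset N_l(i)))]=\E[W_i]\pr(H_i\not\subset N_l(i))$ cleanly. An Efron--Stein proof can very likely be pushed through, but it would need an analogous decoupling device (e.g., Cauchy--Schwarz against $\pr(j,k\in A_i)$ plus the union-over-two-weight-vectors fix), and as sketched it does not yet deliver the stated bound.
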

We stress that, unlike previous results,
no assumption is made on the structure of the graph other than a bound on the maximum degree.

\subsection{Unweighted case}
As we will show in the following subsections, Theorems~\ref{theorem:MWISWEIGHT} and \ref{theorem:MWMWEIGHT}
lead to the following bounds
on the cardinality of independent sets and matchings produced by $GREEDY$ in regular unweighted graphs.

\begin{theorem}\label{theorem:MIS}
For every $g \ge 4$ and $r \geq 3$,
\begin{align}\label{MISa}\nonumber
\frac{1 - (r-1)^{ -\frac{2}{r-2}  } }{2} -\frac{r(r-1)^{\lfloor \frac{g-2}{2} \rfloor }}{(\lfloor \frac{g-2}{2} \rfloor+1)!}\
&\leq \displaystyle \inf_{G \in G(g,r)} \E[\frac{|{\cal IG}|}{|V|}]
\leq
\displaystyle \sup_{G \in G(g,r)} \E[\frac{|{\cal IG}|}{|V|}] \\
&\leq
\frac{1 - (r-1)^{-\frac{2}{r-2}}}{2} + \frac{r(r-1)^{\lfloor \frac{g-2}{2} \rfloor }}{(\lfloor \frac{g-2}{2} \rfloor+1)!}.
\end{align}
\end{theorem}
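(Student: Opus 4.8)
The plan is to obtain Theorem~\ref{theorem:MIS} as a direct corollary of Theorem~\ref{theorem:MWISWEIGHT} by choosing the weight distribution $F$ to be tightly concentrated around $1$, so that the maximum-weight independent set produced by $GREEDY$ coincides (with high probability, and up to a negligible weight discrepancy) with the output of the unweighted $GREEDY$ on the same graph. Concretely, for $\epsilon>0$ let $W_\epsilon$ be uniform on $[1-\epsilon,1+\epsilon]$, with density $f_\epsilon(x)=\tfrac{1}{2\epsilon}\mathbf{1}\{x\in[1-\epsilon,1+\epsilon]\}$ and $\E[W_\epsilon]=1$. The key observation is that when all node weights lie in $[1-\epsilon,1+\epsilon]$, the weighted $GREEDY$ (which repeatedly picks the \emph{remaining} node of largest weight) makes choices that are a legitimate realization of the \emph{randomized} $GREEDY$: at each step it selects \emph{some} remaining node, and since the weights are i.i.d.\ continuous the particular node selected is uniform over the remaining nodes in the appropriate conditional sense. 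Hence $|{\cal IG}_{\text{weighted}}|$ has exactly the same distribution as $|{\cal IG}_{\text{unweighted}}|$, and moreover the weight $W[{\cal IG}]$ satisfies the deterministic sandwich $(1-\epsilon)\,|{\cal IG}|\le W[{\cal IG}]\le(1+\epsilon)\,|{\cal IG}|$ pointwise. Taking expectations and dividing by $|V|$ gives $(1-\epsilon)\,\E[\tfrac{|{\cal IG}|}{|V|}]\le \E[\tfrac{W[{\cal IG}]}{|V|}]\le(1+\epsilon)\,\E[\tfrac{|{\cal IG}|}{|V|}]$, uniformly over $G\in G(g,r)$.

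The second ingredient is the explicit evaluation of the integral appearing in Theorem~\ref{theorem:MWISWEIGHT} for the distribution $F_\epsilon$, and its limit as $\epsilon\to 0$. As $\epsilon\downarrow 0$ the distribution $F_\epsilon$ converges weakly to the point mass at $1$; formally I would compute
\[
\lim_{\epsilon\to 0}\int_0^\infty x\big(r-1-(r-2)F_\epsilon(x)\big)^{-\frac{r}{r-2}}f_\epsilon(x)\,dx
= \int_0^1 \big(r-1-(r-2)u\big)^{-\frac{r}{r-2}}\,du,
\]
by the substitution $u=F_\epsilon(x)$ together with $x\to 1$ on the support. The remaining one-dimensional integral is elementary: with $v=r-1-(r-2)u$, it equals $\tfrac{1}{r-2}\int_{1}^{r-1} v^{-\frac{r}{r-2}}\,dv$, which after integrating $v^{-r/(r-2)}$ (exponent $-r/(r-2)$, antiderivative exponent $-2/(r-2)$) evaluates to $\tfrac12\big(1-(r-1)^{-\frac{2}{r-2}}\big)$, exactly the leading term in \eqref{MISa}. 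The error term $\E[W_\epsilon]\,\tfrac{r(r-1)^{\lfloor(g-2)/2\rfloor}}{(\lfloor(g-2)/2\rfloor+1)!}$ equals $\tfrac{r(r-1)^{\lfloor(g-2)/2\rfloor}}{(\lfloor(g-2)/2\rfloor+1)!}$ since $\E[W_\epsilon]=1$, matching the correction term in the statement.

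Putting these together: for each fixed $\epsilon>0$, apply Theorem~\ref{theorem:MWISWEIGHT} with $F=F_\epsilon$ to bound $\sup_{G}\E[\tfrac{W[{\cal IG}]}{|V|}]$ above and $\inf_{G}\E[\tfrac{W[{\cal IG}]}{|V|}]$ below, then use the sandwich relating $W[{\cal IG}]$ to $|{\cal IG}|$ to convert these into bounds on $\sup_G\E[\tfrac{|{\cal IG}|}{|V|}]$ and $\inf_G\E[\tfrac{|{\cal IG}|}{|V|}]$, and finally let $\epsilon\to 0$. Since $\E[\tfrac{|{\cal IG}|}{|V|}]$ does not depend on $\epsilon$, the factors $(1\pm\epsilon)$ disappear in the limit, the integral term converges to $\tfrac12(1-(r-1)^{-2/(r-2)})$, and the girth-correction term is already independent of $\epsilon$; this yields precisely \eqref{MISa}.

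The main obstacle is the claimed distributional identity between the weighted and the randomized (unweighted) $GREEDY$: one must argue carefully that picking the globally heaviest remaining node under an i.i.d.\ continuous weighting induces, step by step, the uniform-at-random choice that defines the unweighted algorithm, including the fact that conditioning on the weights already revealed does not bias the ranking among the as-yet-untouched nodes. This is intuitively clear by exchangeability of i.i.d.\ weights, but the bookkeeping—tracking which nodes have been deleted and showing the surviving weights remain i.i.d.\ with the conditional ordering still uniform—needs to be done with some care; alternatively one can invoke a coupling argument. Everything else (the elementary integral, the weak-convergence limit, and the $(1\pm\epsilon)$ sandwich) is routine.
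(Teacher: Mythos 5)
Your proposal is correct and follows essentially the same route as the paper's Lemma~\ref{weightimpy}: apply Theorem~\ref{theorem:MWISWEIGHT} with $F$ uniform on $[1-\epsilon,1+\epsilon]$, use the pointwise sandwich $(1-\epsilon)|{\cal IG}|\le W[{\cal IG}]\le(1+\epsilon)|{\cal IG}|$, evaluate the integral via the substitution $u=F(x)$ to obtain $\tfrac12(1-(r-1)^{-2/(r-2)})$, and let $\epsilon\to 0$. The one point you flag as the ``main obstacle''---that the weighted $GREEDY$ under i.i.d.\ continuous weights reproduces in distribution the uniform-at-random node selection---is indeed needed, and the paper treats it as implicit; the clean way to see it is that i.i.d.\ continuous weights induce a uniformly random permutation of the nodes, and processing a fixed uniformly random permutation (skipping deleted nodes) is step-by-step equivalent to sampling uniformly from the remaining nodes.
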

The following immediate corollary is an analogue of Corollary ~\ref{coro:MWISWEIGHT} for the unweighted case.
\begin{corollary}\label{coro:MICORunweight}
For every $r \ge 3$,
\begin{eqnarray}\displaystyle
\lim_{g \rightarrow \infty}\displaystyle \inf_{G \in G(g,r)} \E\left[\frac{|{\cal IG}|}{|V|}\right]
&=&\nonumber
\lim_{g \rightarrow \infty}  \displaystyle \sup_{G
\in G(g,r)} \E\left[\frac{|{\cal IG}|}{|V|}\right] \nonumber\\
&=&\frac{1 - (r-1)^{-\frac{2}{r-2}}}{2} \nonumber.
\end{eqnarray}
\end{corollary}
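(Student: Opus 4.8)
The plan is to obtain this simply as the $g\to\infty$ limit of Theorem~\ref{theorem:MIS}. First I would fix $r\ge 3$ and, for each $g\ge 4$, abbreviate the finite-girth correction appearing in Theorem~\ref{theorem:MIS} as $\epsilon_g \triangleq r(r-1)^{\lfloor \frac{g-2}{2}\rfloor}/(\lfloor \frac{g-2}{2}\rfloor+1)!$, so that the theorem reads
\[
\frac{1 - (r-1)^{-\frac{2}{r-2}}}{2} - \epsilon_g \le \inf_{G \in G(g,r)} \E\Big[\frac{|{\cal IG}|}{|V|}\Big] \le \sup_{G \in G(g,r)} \E\Big[\frac{|{\cal IG}|}{|V|}\Big] \le \frac{1 - (r-1)^{-\frac{2}{r-2}}}{2} + \epsilon_g.
\]
The only thing left to verify is that $\epsilon_g\to 0$ as $g\to\infty$. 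Writing $k = \lfloor \frac{g-2}{2}\rfloor$, which diverges with $g$, we have $\epsilon_g = r\,(r-1)^k/(k+1)!$, and since $r-1$ is a fixed constant, $(r-1)^k/(k+1)!\to 0$ --- this is just the statement that the generic term of the convergent exponential series $\sum_k (r-1)^k/k! = e^{r-1}$ tends to zero, or equivalently a one-line consequence of Stirling's formula, since $(k+1)!$ eventually dominates any fixed exponential in $k$. Then sending $g\to\infty$ in the displayed chain and applying the squeeze theorem shows that both $\liminf/\limsup$ of the infimum and of the supremum converge to the common value $(1 - (r-1)^{-\frac{2}{r-2}})/2$, which is exactly the claim.

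There is no genuine obstacle here: the entire content is in Theorem~\ref{theorem:MIS}, and the corollary is just the observation that the correction term $\epsilon_g$, being a ratio of a fixed exponential to a factorial, vanishes as the girth grows, so the matching upper and lower bounds collapse to a single limit. The one small point worth stating carefully is that the convergence of both $\inf_{G\in G(g,r)}$ and $\sup_{G\in G(g,r)}$, and in particular their equality in the limit, follows immediately from the two-sided sandwich and requires no monotonicity of these quantities in $g$.
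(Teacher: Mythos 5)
Your proof is correct and matches what the paper intends: the corollary is stated as an immediate consequence of Theorem~\ref{theorem:MIS}, and the only content is that the correction term $r(r-1)^{\lfloor (g-2)/2\rfloor}/(\lfloor (g-2)/2\rfloor+1)!$ vanishes as $g\to\infty$, which you correctly justify via the exponential series. Your remark that no monotonicity in $g$ is needed, just the two-sided squeeze, is exactly the right observation.
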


A second corollary is the following lower bound on the size of a maximum independent set in an $r$-regular graph
with girth $\ge g$.
\begin{corollary}\label{coro:MICOR}
For every $g \geq 4$ and $r\ge 3$,
\begin{eqnarray}
\displaystyle \inf_{G \in G(g,r)} \frac{|{\cal I}|}{|V|} \geq \frac{1 -
(r-1)^{-\frac{2}{r-2}}}{2} - \frac{r(r-1)^{\lfloor \frac{g-2}{2} \rfloor }}{(\lfloor \frac{g-2}{2} \rfloor+1)!}.
\end{eqnarray}
\end{corollary}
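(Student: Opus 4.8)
The plan is to combine the algorithmic lower bound of Theorem~\ref{theorem:MIS} with the elementary fact that the set produced by $GREEDY$ is always a feasible independent set, and therefore can never be larger than a maximum one.

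First I would fix an arbitrary graph $G \in G(g,r)$ and observe that every execution of $GREEDY$ on $G$ terminates with a set ${\cal IG}$ that is a genuine independent set of $G$: a node is added only after all of its still-present neighbors have been deleted, so no two selected nodes are adjacent. This holds deterministically, for every realization of the random choices made by the algorithm. Hence $|{\cal IG}| \le |{\cal I}|$ pointwise (as random variables on the probability space of $GREEDY$'s coin flips, with $|{\cal I}|$ a constant depending only on $G$), and taking expectations gives $\E\big[|{\cal IG}|/|V|\big] \le |{\cal I}|/|V|$.

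Next I would invoke the left-hand inequality of Theorem~\ref{theorem:MIS}, namely
\[
\inf_{G \in G(g,r)} \E\!\left[\frac{|{\cal IG}|}{|V|}\right] \;\ge\; \frac{1 - (r-1)^{-\frac{2}{r-2}}}{2} \;-\; \frac{r(r-1)^{\lfloor \frac{g-2}{2} \rfloor }}{(\lfloor \frac{g-2}{2} \rfloor+1)!}.
\]
Chaining this with the previous display, for the fixed $G$ we obtain
\[
\frac{|{\cal I}|}{|V|} \;\ge\; \E\!\left[\frac{|{\cal IG}|}{|V|}\right] \;\ge\; \inf_{G' \in G(g,r)} \E\!\left[\frac{|{\cal IG}(G')|}{|V(G')|}\right] \;\ge\; \frac{1 - (r-1)^{-\frac{2}{r-2}}}{2} - \frac{r(r-1)^{\lfloor \frac{g-2}{2} \rfloor }}{(\lfloor \frac{g-2}{2} \rfloor+1)!}.
\]
Since $G \in G(g,r)$ was arbitrary, taking the infimum over $G$ on the left yields the corollary.

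There is essentially no obstacle here; the only point demanding a moment's care is that $|{\cal IG}| \le |{\cal I}|$ must be asserted for \emph{every} outcome of the randomized algorithm rather than merely in expectation, which is immediate from the correctness of $GREEDY$. All of the analytic content---the explicit value $\big(1-(r-1)^{-2/(r-2)}\big)/2$ and the finite-girth correction of order $(r-1)^{g/2}/(g/2)!$---has already been supplied by Theorem~\ref{theorem:MIS}, itself a specialization of Theorem~\ref{theorem:MWISWEIGHT} to a weight distribution concentrated near $1$.
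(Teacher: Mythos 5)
Your proposal is correct, and it follows the same route the paper implicitly intends: the paper states Corollary~\ref{coro:MICOR} without proof, treating it as an immediate consequence of Theorem~\ref{theorem:MIS} combined with the observation that $GREEDY$ always outputs a feasible independent set, so $|{\cal IG}| \le |{\cal I}|$ pointwise and hence in expectation. You have simply made that chain explicit.
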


Our results for matchings are as follows.

\begin{theorem}\label{theorem:MM}
For every $g \geq 4$ and $r\ge 3$,
\begin{align}\label{MIS}
\nonumber \frac{1 - (r-1)^{-\frac{r}{r-2}}}{2} - \frac{r (r-1)^{\lfloor \frac{g-2}{2} \rfloor}}{(\lfloor \frac{g-2}{2} \rfloor)!}
&\leq \displaystyle
\inf_{G \in G(g,r)} \E[\frac{|{\cal MG}|}{|V|}]
\leq \displaystyle \sup_{G
\in G(g,r)} \E[\frac{|{\cal MG}|}{|V|}] \\
&\leq \frac{1 -
(r-1)^{-\frac{r}{r-2}}}{2} + \frac{r (r-1)^{\lfloor \frac{g-2}{2} \rfloor}}{(\lfloor \frac{g-2}{2} \rfloor)!}.
\end{align}
\end{theorem}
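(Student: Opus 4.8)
The plan is to derive Theorem~\ref{theorem:MM} as a corollary of the weighted matching bound, Theorem~\ref{theorem:MWMWEIGHT}, exactly in the way the excerpt announces that the unweighted results follow from the weighted ones. The key idea is that running $GREEDY$ on an unweighted graph, where edges are chosen uniformly at random, is \emph{equivalent in distribution} to running weighted $GREEDY$ on the same graph with i.i.d.\ weights drawn from any continuous distribution $F$: since $GREEDY$ in the weighted case only looks at the \emph{relative order} of the remaining edge-weights, and since a continuous i.i.d.\ weighting induces a uniformly random permutation of the edges, the sequence of edges selected has exactly the same law as in the unweighted algorithm. Hence $|{\cal MG}| \stackrel{D}{=}$ the \emph{number} of edges in the weighted matching ${\cal MG}$, and in particular $\E[|{\cal MG}|/|V|]$ equals $\E[(\text{number of edges in }{\cal MG})/|V|]$ for \emph{every} continuous $F$.

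Next I would choose $F$ to make the weighted bound collapse onto a cardinality bound. Take $W$ uniform on $[1-\epsilon, 1+\epsilon]$ for small $\epsilon>0$; then each selected edge contributes a weight within $\epsilon$ of $1$, so $\big| W[{\cal MG}] - |{\cal MG}| \big| \le \epsilon |{\cal MG}| \le \epsilon |V|/2$, giving $\big|\E[W[{\cal MG}]/|V|] - \E[|{\cal MG}|/|V|]\big| \le \epsilon/2$. At the same time I must evaluate the explicit integral $\tfrac r2\int_0^\infty x\,(r-1-(r-2)F(x))^{-\frac{2(r-1)}{r-2}}f(x)\,dx$ for this $F$ and show it tends, as $\epsilon\to 0$, to $\tfrac r2 \cdot \tfrac 1r\big(1-(r-1)^{-\frac{r}{r-2}}\big) = \tfrac12\big(1-(r-1)^{-\frac{r}{r-2}}\big)$. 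Indeed as $\epsilon\to0$ the measure $f(x)\,dx$ concentrates at $x=1$ and $F$ sweeps continuously from $0$ to $1$ there, so by the substitution $u=F(x)$ the integral converges to $\tfrac r2\int_0^1 (r-1-(r-2)u)^{-\frac{2(r-1)}{r-2}}\,du$; a direct antiderivative computation of $\int_0^1 (a-bu)^{-c}\,du$ with $a=r-1$, $b=r-2$, $c=\tfrac{2(r-1)}{r-2}$ (note $c\neq 1$ since $r\ge 3$) yields the claimed closed form. Also $\E[W]\to 1$, so the additive error term $\E[W]\,\frac{r(r-1)^{\lfloor (g-2)/2\rfloor}}{(\lfloor(g-2)/2\rfloor)!}$ in Theorem~\ref{theorem:MWMWEIGHT} converges to $\frac{r(r-1)^{\lfloor (g-2)/2\rfloor}}{(\lfloor(g-2)/2\rfloor)!}$, which is precisely the correction term in Theorem~\ref{theorem:MM}.

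Putting the pieces together: for each fixed $\epsilon>0$, Theorem~\ref{theorem:MWMWEIGHT} applied to the uniform-$[1-\epsilon,1+\epsilon]$ weighting gives upper and lower bounds on $\sup$ and $\inf$ over $G\in G(g,r)$ of $\E[W[{\cal MG}]/|V|]$; replacing $W[{\cal MG}]$ by $|{\cal MG}|$ costs at most $\epsilon/2$, and then letting $\epsilon\to 0$ makes the main term converge to $\tfrac12(1-(r-1)^{-\frac r{r-2}})$ and the error term to the stated $\frac{r(r-1)^{\lfloor (g-2)/2\rfloor}}{(\lfloor(g-2)/2\rfloor)!}$. Since the left- and right-hand sides of the desired inequality do not depend on $\epsilon$, the bound follows. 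The only genuinely non-routine point is the first one — verifying the distributional equivalence between unweighted $GREEDY$ and continuously-weighted $GREEDY$ — but this is a clean coupling/symmetry argument (any continuous i.i.d.\ weighting is exchangeable, hence induces the uniform random order, and $GREEDY$'s selections depend only on that order), so I expect no real obstacle; the remaining work is the elementary integral evaluation and the $\epsilon\to 0$ limit, both of which are routine.
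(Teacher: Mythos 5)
Your proposal is correct and follows essentially the same route as the paper's Lemma~\ref{weightimpy}: apply Theorem~\ref{theorem:MWMWEIGHT} with $F$ uniform on $[1-\epsilon,1+\epsilon]$, use the pointwise bound relating $W[{\cal MG}]$ and $|{\cal MG}|$, evaluate the integral via the substitution $u=F(x)$, and let $\epsilon\to 0$. The one point you make explicit that the paper leaves implicit --- that unweighted $GREEDY$ and continuously-weighted $GREEDY$ produce a matching of the same cardinality in distribution because only the random order matters --- is a correct and worthwhile clarification, but it does not change the overall argument (and note $\E[W]$ is exactly $1$ for this $F$, not merely $\to 1$).
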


\begin{corollary}\label{coro:MMCORunweight}
For every $r\ge 3$,
\begin{eqnarray}\displaystyle
\lim_{g \rightarrow \infty}\displaystyle \inf_{G \in G(g,r)} \E\left[\frac{|{\cal MG}|}{|V|}\right]
&=&\nonumber
\lim_{g \rightarrow \infty}  \displaystyle \sup_{G
\in G(g,r)} \E\left[\frac{|{\cal MG}|}{|V|}\right] \nonumber\\
&=&\frac{1 - (r-1)^{-\frac{r}{r-2}}}{2} \nonumber.
\end{eqnarray}
As a result
\begin{align*}
\lim_{r\rightarrow\infty}
\lim_{g \rightarrow \infty}\displaystyle \inf_{G \in G(g,r)} \E\left[\frac{|{\cal MG}|}{|V|}\right]
&=\nonumber
\lim_{r\rightarrow\infty}
\lim_{g \rightarrow \infty}  \displaystyle \sup_{G
\in G(g,r)} \E\left[\frac{|{\cal MG}|}{|V|}\right] \nonumber\\
&=\frac{1}{2} \nonumber.
\end{align*}
\end{corollary}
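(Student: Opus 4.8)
The plan is to derive the corollary directly from Theorem~\ref{theorem:MM} in two stages. First, for the limit in $g$ with $r$ fixed: the lower and upper bounds in \eqref{MIS} differ from the common value $\tfrac{1 - (r-1)^{-r/(r-2)}}{2}$ only by the additive error term $\tfrac{r(r-1)^{\lfloor (g-2)/2\rfloor}}{(\lfloor (g-2)/2\rfloor)!}$. Since $r$ is held fixed, as $g\to\infty$ the quantity $k\triangleq\lfloor (g-2)/2\rfloor\to\infty$, and the ratio $(r-1)^k/k!$ tends to $0$ because the factorial in the denominator eventually dominates any fixed-base geometric growth in the numerator (indeed $(r-1)^k/k!$ is the $k$-th term of the convergent series for $e^{r-1}$, hence a null sequence). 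Therefore both $\inf_{G\in G(g,r)}\E[|{\cal MG}|/|V|]$ and $\sup_{G\in G(g,r)}\E[|{\cal MG}|/|V|]$ are squeezed between two sequences converging to $\tfrac{1 - (r-1)^{-r/(r-2)}}{2}$, which establishes the first displayed equality.

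Second, for the double limit: I take $r\to\infty$ in the expression $\tfrac{1 - (r-1)^{-r/(r-2)}}{2}$ obtained from the first stage. Write the exponent as $-\tfrac{r}{r-2} = -\big(1 + \tfrac{2}{r-2}\big)$, so that $(r-1)^{-r/(r-2)} = (r-1)^{-1}\cdot (r-1)^{-2/(r-2)}$. As $r\to\infty$, the factor $(r-1)^{-1}\to 0$, and the factor $(r-1)^{-2/(r-2)} = \exp\big(-\tfrac{2}{r-2}\ln(r-1)\big)\to \exp(0)=1$ since $\tfrac{\ln(r-1)}{r-2}\to 0$. Hence $(r-1)^{-r/(r-2)}\to 0$ and the common value converges to $\tfrac{1-0}{2} = \tfrac12$. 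Because the first-stage limit is an exact equality (the error term has already vanished), no further interchange of limits is needed: the inner limit over $g$ is, for each fixed $r$, exactly $\tfrac{1-(r-1)^{-r/(r-2)}}{2}$, and we simply evaluate $\lim_{r\to\infty}$ of this explicit sequence.

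The argument is essentially a double application of the squeeze theorem followed by an elementary limit computation, so there is no substantial obstacle; the only point requiring minor care is confirming that $(r-1)^{-2/(r-2)}\to 1$, i.e. that the second-order correction in the exponent is negligible compared with the leading $(r-1)^{-1}$ decay — this follows immediately from $\lim_{r\to\infty}\tfrac{2\ln(r-1)}{r-2}=0$ by L'Hôpital or by comparing logarithm to linear growth.
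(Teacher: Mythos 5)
Your proof is correct and takes the same approach the paper implicitly intends: the paper presents the corollary as an immediate consequence of Theorem~\ref{theorem:MM}, and your argument simply spells out the squeeze via the vanishing error term $\tfrac{r(r-1)^{\lfloor(g-2)/2\rfloor}}{(\lfloor(g-2)/2\rfloor)!}$ for the $g\to\infty$ limit, followed by the elementary computation $(r-1)^{-r/(r-2)}=(r-1)^{-1}(r-1)^{-2/(r-2)}\to 0$ for the $r\to\infty$ limit. No gaps; the observation that the inner limit is an exact equality, so no interchange of limits is needed, is a nice point of care.
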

Namely, $GREEDY$ finds a nearly perfect matching when both the degree and girth are large.
A second corollary is the following lower bound on the size of a maximum matching in an $r$-regular graph
with girth $\ge g$.
\begin{corollary}\label{coro:MMCOR}
For every $g \geq 4$ and $r \geq 3$,
\begin{eqnarray}
\inf_{G \in G(g,r)}\frac{|{\cal M}|}{|V|} \geq  \frac{1 -
(r-1)^{-\frac{r}{r-2}}}{2} - \frac{r (r-1)^{\lfloor \frac{g-2}{2} \rfloor}}{(\lfloor \frac{g-2}{2} \rfloor)!}.
\end{eqnarray}
\end{corollary}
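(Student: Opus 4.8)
The statement is an immediate consequence of Theorem~\ref{theorem:MM}, so the plan is simply to reduce it to that theorem via the trivial domination of a greedy matching by a maximum one. Concretely, fix any $G \in G(g,r)$. On \emph{every} realization of the random choices made by $GREEDY$, the edge set ${\cal MG}$ it returns is, by construction, a valid partial matching of $G$: each time an edge is added, all edges sharing an endpoint with it are deleted, so no two selected edges are ever incident. Hence $|{\cal MG}| \le |{\cal M}|$ pointwise, and therefore $\E\big[\tfrac{|{\cal MG}|}{|V|}\big] \le \tfrac{|{\cal M}|}{|V|}$.

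Next I would invoke the lower bound half of Theorem~\ref{theorem:MM}, which asserts
\[
\frac{1 - (r-1)^{-\frac{r}{r-2}}}{2} - \frac{r (r-1)^{\lfloor \frac{g-2}{2} \rfloor}}{(\lfloor \frac{g-2}{2} \rfloor)!}
\;\le\; \inf_{G \in G(g,r)} \E\Big[\tfrac{|{\cal MG}|}{|V|}\Big]
\;\le\; \E\Big[\tfrac{|{\cal MG}|}{|V|}\Big]
\]
for the particular $G$ we fixed. Chaining this with the inequality from the previous paragraph gives
\[
\frac{|{\cal M}|}{|V|} \;\ge\; \frac{1 - (r-1)^{-\frac{r}{r-2}}}{2} - \frac{r (r-1)^{\lfloor \frac{g-2}{2} \rfloor}}{(\lfloor \frac{g-2}{2} \rfloor)!}.
\]
Since this holds for \emph{every} $G \in G(g,r)$ and the right-hand side does not depend on $G$, I would take the infimum over $G \in G(g,r)$ on the left to conclude the claimed bound.

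There is essentially no obstacle here: all the analytic content — the correlation-decay/local-weak-convergence reduction to the $r$-ary tree, the bonus recursion, and the finite-girth error term $\tfrac{r(r-1)^{\lfloor (g-2)/2\rfloor}}{(\lfloor (g-2)/2\rfloor)!}$ — is already packaged inside Theorem~\ref{theorem:MM}. The only point worth stating carefully is that the domination $|{\cal MG}| \le |{\cal M}|$ is deterministic (holds for each outcome of $GREEDY$'s randomness), so that passing to expectations and then to the infimum over $G$ is legitimate; this is exactly parallel to the derivation of Corollary~\ref{coro:MICOR} from Theorem~\ref{theorem:MIS} using $|{\cal IG}| \le |{\cal I}|$.
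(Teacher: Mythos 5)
Your argument is correct and is exactly the (implicit) one the paper intends: since ${\cal MG}$ is always a valid matching, $|{\cal MG}|\le|{\cal M}|$ deterministically, so $\E[|{\cal MG}|/|V|]\le|{\cal M}|/|V|$, and chaining with the lower bound of Theorem~\ref{theorem:MM} and taking the infimum over $G\in G(g,r)$ gives the claim. This matches the paper's derivation of Corollary~\ref{coro:MICOR} from Theorem~\ref{theorem:MIS}, and nothing further is needed.
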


Bounds on the variance of $W[{\cal IG}],W[{\cal MG}]$ will result in the following bounds for the variance
of $|{\cal IG}|,|{\cal MG}|$.
\begin{theorem}\label{theorem:VAR11}
For every graph $G$ with degree $r \geq 3$,
\begin{equation}\label{eq:VAR1ff}
Var[\frac{|{\cal IG}|}{|V|}] \leq \frac{ 9 r^2 e^{(r-1)^3}}{|V|}.
\end{equation}
and
\begin{equation}\label{eq:VAR1ff2}
Var[\frac{|{\cal MG}|}{|E|}] \leq \frac{ 33 r^2 e^{(r-1)^3}}{|E|}.
\end{equation}
\end{theorem}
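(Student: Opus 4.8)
The plan is to derive Theorem~\ref{theorem:VAR11} as an immediate specialization of Theorem~\ref{theorem:VAR12}. The key observation is that the unweighted $GREEDY$ algorithm can be \emph{coupled} exactly with the weighted $GREEDY$ algorithm run on an i.i.d.\ weighting drawn from the uniform distribution on $[0,1]$. Indeed, if we equip the nodes (resp.\ edges) of $G$ with i.i.d.\ $\mathrm{Uniform}[0,1]$ weights $W_i$, then w.p.\ $1$ all weights are distinct, and the order statistics of the weights induce a uniformly random permutation of the vertices (resp.\ edges). Processing vertices in decreasing weight order — which is precisely what weighted $GREEDY$ does — is therefore identical in law to the sequence of uniform random choices made by unweighted $GREEDY$. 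Hence $|{\cal IG}| \stackrel{D}{=} W[{\cal IG}]$ restricted to the unweighted interpretation is \emph{not} quite right; rather, the \emph{set} ${\cal IG}$ produced has the same distribution in both cases. Since $|{\cal IG}|$ counts the nodes of that set, and since the weighted quantity $W[{\cal IG}]=\sum_{i\in{\cal IG}}W_i$ dominates neither direction cleanly, the cleanest route is: the unweighted $GREEDY$ set and the $U[0,1]$-weighted $GREEDY$ set are equal in distribution, so $\mathrm{Var}[|{\cal IG}|]$ for the unweighted process equals $\mathrm{Var}[\,|{\cal IG}_{U}|\,]$ where ${\cal IG}_U$ is the set produced under the uniform weighting.

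First I would make the coupling precise: run unweighted $GREEDY$ and uniform-weighted $GREEDY$ on the same realization of i.i.d.\ $U[0,1]$ weights, observing that unweighted $GREEDY$'s ``pick a uniformly random remaining node'' step is realized by ``pick the remaining node of largest weight'' since conditioning on any prefix of removed nodes, the largest-weight survivor is uniform among survivors. This gives ${\cal IG}_{\text{unweighted}} \stackrel{D}{=} {\cal IG}_U$ as random subsets of $V$. Next, I would invoke Theorem~\ref{theorem:VAR12} applied to $W\stackrel{D}{=}U[0,1]$, but with a twist: Theorem~\ref{theorem:VAR12} bounds $\mathrm{Var}[W[{\cal IG}]/|V|]$, the variance of the weighted sum, not of the cardinality. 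So the real work is to re-run (or re-examine) the variance argument of Section~\ref{section:varrsec} tracking $|{\cal IG}|=\sum_{i}I(i\in{\cal IG})$ instead of $W[{\cal IG}]=\sum_i W_i I(i\in{\cal IG})$. The variance bound in Theorem~\ref{theorem:VAR12} comes with a factor $\E[W^2]$ precisely because each summand carries a weight; for the indicator version each summand is bounded by $1$, which plays the role of $\E[W^2]=1$, so the same Azuma/Efron--Stein-type martingale difference estimate that produces the $9\E[W^2]r^2 e^{(r-1)^3}/|V|$ bound produces $9 r^2 e^{(r-1)^3}/|V|$ verbatim. Concretely, I expect the proof of Theorem~\ref{theorem:VAR12} to proceed by exposing the weights (or random choices) one at a time and bounding the effect of a single coordinate change via the ``influence blocking subgraph'' machinery of Section~\ref{section:BLOCKINGSUBTREES}; in the unweighted case the same exposure of random choices applies, each coordinate's influence on $|{\cal IG}|$ is at most the influence on $W[{\cal IG}]$ with all weights replaced by their $L^2$-normalized bound $1$, and the bounded-degree cluster-size estimate $e^{(r-1)^3}$ is weight-independent. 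The matching bound \eqref{eq:VAR1ff2} follows identically, coupling unweighted edge-$GREEDY$ with $U[0,1]$-edge-weighted $GREEDY$ and reusing the constant $33$ from \eqref{VAR123aa}.

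The main obstacle is that Theorem~\ref{theorem:VAR12} is \emph{not} literally stated for the cardinality functional, so one cannot quote it as a black box; one must verify that its proof is ``weight-agnostic'' in the appropriate sense. The safe way to phrase this in the paper is: ``The proof of Theorem~\ref{theorem:VAR11} is identical to that of Theorem~\ref{theorem:VAR12}, replacing each occurrence of the weight $W_i$ (resp.\ $W_e$) by the constant $1$ and noting that the unweighted $GREEDY$ process is distributionally identical to the $GREEDY$ process under an i.i.d.\ $\mathrm{Uniform}[0,1]$ weighting; in particular $\E[W^2]$ is replaced by $1$, yielding the stated bounds.'' I would therefore structure the actual writeup as a short paragraph establishing the distributional identity via the permutation coupling, followed by the remark that every step of the Section~\ref{section:varrsec} argument for Theorem~\ref{theorem:VAR12} goes through with $W_i\equiv 1$, so \eqref{eq:VAR1ff} and \eqref{eq:VAR1ff2} are special cases of \eqref{VAR12aa} and \eqref{VAR123aa} respectively. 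No new estimate is needed; the content is entirely in recognizing the coupling and the weight-agnosticism of the martingale/concentration bound.
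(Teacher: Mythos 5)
You correctly identify the obstacle — Theorem~\ref{theorem:VAR12} controls $\mathrm{Var}[W[{\cal IG}]/|V|]$, not $\mathrm{Var}[|{\cal IG}|/|V|]$ — but your resolution diverges from the paper's, and you miss a cleaner route that keeps Theorem~\ref{theorem:VAR12} as a true black box. The paper's proof (Lemma~\ref{lemma:weightimpy2}) does not take $F=U[0,1]$ nor reopen the proof of Theorem~\ref{theorem:VAR12}; instead it takes $F$ uniform on $[1-\epsilon,1+\epsilon]$, so that $W[{\cal IG}]$ and $|{\cal IG}|$ coincide up to multiplicative factors $1\pm\epsilon$. This yields the two-sided comparison
\begin{align*}
(1-\epsilon)^2\,\E\!\left[\Big(\tfrac{|{\cal IG}|}{|V|}\Big)^2\right]-(1+\epsilon)^2\,\E^2\!\left[\tfrac{|{\cal IG}|}{|V|}\right]
\le \mathrm{Var}\!\left[\tfrac{W[{\cal IG}]}{|V|}\right]
\le (1+\epsilon)^2\,\E\!\left[\Big(\tfrac{|{\cal IG}|}{|V|}\Big)^2\right]-(1-\epsilon)^2\,\E^2\!\left[\tfrac{|{\cal IG}|}{|V|}\right],
\end{align*}
hence $\bigl|\mathrm{Var}[|{\cal IG}|/|V|]-\mathrm{Var}[W[{\cal IG}]/|V|]\bigr|\le(2\epsilon+\epsilon^2)\bigl(\E[(|{\cal IG}|/|V|)^2]+\E^2[|{\cal IG}|/|V|]\bigr)$, and the result follows by $\E[W^2]=1+\epsilon^2/3$, boundedness of $|{\cal IG}|/|V|$, and $\epsilon\to 0$. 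Your alternative --- couple unweighted $GREEDY$ with $U[0,1]$-weighted $GREEDY$ and rerun the Section~\ref{section:varrsec} argument tracking $\sum_i I(i\in{\cal IG})$ instead of $\sum_i W_i I(i\in{\cal IG})$ --- is sound and would indeed reproduce the bound with $\E[W^2]$ replaced by $1$, since the influence-blocking machinery only needs the weights to be distinct (so you must still run $GREEDY$ on random weights even while summing indicators). However, two of your side remarks misstate the paper: the proof of Theorem~\ref{theorem:VAR12} is \emph{not} an Azuma/Efron--Stein martingale-difference argument but a direct covariance expansion $\mathrm{Var}=\sum_{i,j}\mathrm{Cov}(\cdot,\cdot)$ bounded via the decay of correlations from Lemmas~\ref{lemma:SBOUNDEDSAYSITALL}--\ref{lemma:shortandsweet}; and your claim that ``each coordinate's influence on $|{\cal IG}|$ is at most the influence on $W[{\cal IG}]$'' is not a valid comparison --- these are incomparable functionals coordinatewise. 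Net: your route works but requires reopening the proof; the paper's $[1-\epsilon,1+\epsilon]$ trick keeps it a genuine corollary.
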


\subsection{Converting the weighted case to the unweighted case}\label{section:weightme}
In this section, we prove that all of the results pertaining to
$GREEDY$'s performance w.r.t. finding unweighted  independent sets
and matchings are implied by our analysis for the case of i.i.d.
weights.  This will allow us to focus only on the case of i.i.d.
weights for the remainder of the paper.
\begin{lemma}\label{weightimpy}
Theorem~\ref{theorem:MWISWEIGHT} implies Theorem~\ref{theorem:MIS} and Theorem~\ref{theorem:MWMWEIGHT}
implies Theorem~\ref{theorem:MM}.
\end{lemma}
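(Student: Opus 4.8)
The plan is to realize the unweighted $GREEDY$ process as a limiting case of the i.i.d.\ weighted $GREEDY$ process, by choosing a weight distribution $F$ whose mass is so tightly concentrated near a single point that the weighted algorithm makes essentially the same sequence of choices as the unweighted one. Concretely, I would fix $\epsilon>0$ and let $W_\epsilon$ be the uniform distribution on $[1-\epsilon,1+\epsilon]$ (continuous, non-negative, with $\E[W_\epsilon]=1$), and then apply Theorem~\ref{theorem:MWISWEIGHT} (resp.\ Theorem~\ref{theorem:MWMWEIGHT}) to $F=F_\epsilon$. The key observation is that on any fixed finite graph $G$, the \emph{order} in which $GREEDY$ examines nodes is all that matters for which independent set is produced, and a uniform i.i.d.\ weighting on $[1-\epsilon,1+\epsilon]$ induces a \emph{uniformly random permutation} of the nodes (w.p.1 the weights are distinct), which is exactly the distribution of the exploration order of the unweighted $GREEDY$. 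Hence for every fixed $G$ and every $\epsilon$, the random set ${\cal IG}$ produced in the weighted run has exactly the same law as in the unweighted run; in particular $|{\cal IG}|$ has the same distribution, so $\E[|{\cal IG}|/|V|]$ coincides in the two settings.

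The second step is to control the weight $W[{\cal IG}]$ in terms of the cardinality $|{\cal IG}|$ when $F=F_\epsilon$. Since every node weight lies in $[1-\epsilon,1+\epsilon]$, we have the deterministic sandwich $(1-\epsilon)\,|{\cal IG}| \le W[{\cal IG}] \le (1+\epsilon)\,|{\cal IG}|$, hence
\begin{align*}
(1-\epsilon)\,\E\!\left[\frac{|{\cal IG}|}{|V|}\right] \le \E\!\left[\frac{W[{\cal IG}]}{|V|}\right] \le (1+\epsilon)\,\E\!\left[\frac{|{\cal IG}|}{|V|}\right].
\end{align*}
Combining this with the bounds of Theorem~\ref{theorem:MWISWEIGHT} applied to $F_\epsilon$ gives, for every $G\in G(g,r)$,
\begin{align*}
\frac{1}{1+\epsilon}\left[\int_0^\infty x\big(r-1-(r-2)F_\epsilon(x)\big)^{-\frac{r}{r-2}}f_\epsilon(x)\,dx - \frac{r(r-1)^{\lfloor\frac{g-2}{2}\rfloor}}{(\lfloor\frac{g-2}{2}\rfloor+1)!}\right] \le \E\!\left[\frac{|{\cal IG}|}{|V|}\right],
\end{align*}
and a matching upper bound with $1-\epsilon$ in the denominator and $+$ on the correction term.

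The third step is to let $\epsilon\to 0$. Here I would show that the integral $\int_0^\infty x(r-1-(r-2)F_\epsilon(x))^{-r/(r-2)}f_\epsilon(x)\,dx$ converges to $\tfrac12\big(1-(r-1)^{-2/(r-2)}\big)$: substituting $u=F_\epsilon(x)$ turns it into $\int_0^1 x(u)\,(r-1-(r-2)u)^{-r/(r-2)}\,du$ where $x(u)\to 1$ uniformly as $\epsilon\to0$, so the limit is $\int_0^1 (r-1-(r-2)u)^{-r/(r-2)}\,du$, which is an elementary integral equal to $\tfrac12\big(1-(r-1)^{-2/(r-2)}\big)$. Since the left-hand side $\E[|{\cal IG}|/|V|]$ does not depend on $\epsilon$, taking $\epsilon\to0$ in the displayed inequalities (uniformly over $G\in G(g,r)$, as the correction term and the integral limit are $G$-independent) yields exactly the bounds of Theorem~\ref{theorem:MIS}. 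The matching argument is identical: a uniform $[1-\epsilon,1+\epsilon]$ edge-weighting makes the weighted edge-exploration order a uniform random permutation of $E(G)$, matching unweighted $GREEDY$ on edges; the same sandwich gives $(1-\epsilon)\E[|{\cal MG}|/|V|]\le\E[W[{\cal MG}]/|V|]\le(1+\epsilon)\E[|{\cal MG}|/|V|]$; and the $\epsilon\to0$ limit of $\tfrac{r}{2}\int_0^\infty x(r-1-(r-2)F_\epsilon(x))^{-2(r-1)/(r-2)}f_\epsilon(x)\,dx$ is $\tfrac{r}{2}\int_0^1(r-1-(r-2)u)^{-2(r-1)/(r-2)}\,du = \tfrac12\big(1-(r-1)^{-r/(r-2)}\big)$, giving Theorem~\ref{theorem:MM}.

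The main obstacle, and the point requiring the most care, is the first step: justifying rigorously that the weighted $GREEDY$ run under $F_\epsilon$ is distributionally \emph{identical} (not merely close) to the unweighted run. One must argue that (i) w.p.1 all weights are distinct, so the greedy selection "pick the remaining node of largest weight" is well-defined and deterministic given the weights, and (ii) the induced ordering of $V(G)$ by decreasing weight is exactly uniform over all $|V|!$ permutations, which is precisely the randomness source for unweighted $GREEDY$ (equivalently, unweighted $GREEDY$ can be coupled to "process nodes in a uniformly random order, adding a node iff none of its neighbors was added earlier"). Once this coupling is in place, the equality of the laws of $|{\cal IG}|$ is immediate and the rest is the elementary limiting computation above; the interchange of $\lim_{\epsilon\to0}$ with $\inf_{G}$ and $\sup_{G}$ is trivial because every $\epsilon$-dependent quantity on the right side is independent of $G$.
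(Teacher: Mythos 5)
Your proposal is correct and follows essentially the same route as the paper: apply Theorems~\ref{theorem:MWISWEIGHT} and~\ref{theorem:MWMWEIGHT} to the uniform distribution on $[1-\epsilon,1+\epsilon]$, sandwich $W[{\cal IG}]$ (resp.\ $W[{\cal MG}]$) between $(1\pm\epsilon)|{\cal IG}|$ (resp.\ $(1\pm\epsilon)|{\cal MG}|$), change variables $u=F_\epsilon(x)$ to evaluate the integral, and let $\epsilon\to 0$. The one place where your write-up is more careful than the paper's is that you explicitly justify that the weighted $GREEDY$ run under an i.i.d.\ $[1-\epsilon,1+\epsilon]$-uniform weighting produces $|{\cal IG}|$ with exactly the same law as the unweighted, u.a.r.\ version (via the uniform-random-permutation coupling); the paper uses this identification implicitly without comment.
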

\begin{proof}
We first prove that Theorem~\ref{theorem:MWISWEIGHT} implies Theorem~\ref{theorem:MIS}.
Fix $\epsilon>0$. Let $F$ be a uniform distribution on $[1-\epsilon,1+\epsilon]$.
Applying Theorem~\ref{theorem:MWISWEIGHT} we have
\begin{align*}
\int_{1-\epsilon}^{1+\epsilon} &x ((r-1) - (r-2)\frac{x - (1-\epsilon)}{2\epsilon})^{-\frac{r}{r-2}}\frac{1}{2\epsilon}dx
- \frac{r(r-1)^{\lfloor \frac{g-2}{2} \rfloor }}{(\lfloor \frac{g-2}{2} \rfloor+1)!} \\
&\leq \inf_{G \in G(g,r)} \E\left[\frac{W[{\cal IG}]}{|V|}\right] \leq
\sup_{G \in G(g,r)} \E\left[\frac{W[{\cal IG}]}{|V|}\right] \\
&\leq
\int_{1-\epsilon}^{1+\epsilon} x ((r-1) - (r-2)\frac{x -
(1-\epsilon)}{2\epsilon})^{-\frac{r}{r-2}}\frac{1}{2\epsilon}dx +
\frac{r(r-1)^{\lfloor \frac{g-2}{2} \rfloor }}{(\lfloor \frac{g-2}{2} \rfloor+1)!}.
\end{align*}

Note that  $(1 -\epsilon) \E[|{\cal IG}|] \leq  \E[W[{\cal IG}]]
\leq (1 + \epsilon) \E[|{\cal IG}|]$,
and for $1-\epsilon \leq x \leq
1+\epsilon$ we have $((r-1) - (r-2)\frac{x - (1
-\epsilon)}{2\epsilon})^{-\frac{r}{r-2}}\frac{1}{2\epsilon} \geq 0$.
Thus
\begin{align*}
(1-\epsilon)\int_{1-\epsilon}^{1+\epsilon} &((r-1) - (r-2)\frac{x - (1-\epsilon)}{2\epsilon})^{-\frac{r}{r-2}}\frac{1}{2\epsilon}dx
- \frac{r(r-1)^{\lfloor \frac{g-2}{2} \rfloor }}{(\lfloor \frac{g-2}{2} \rfloor+1)!}\\
&\leq (1 + \epsilon) \inf_{G \in G(g,r)} \E\left[\frac{|{\cal IG}|}{|V|}\right] \leq
 (1 + \epsilon) \sup_{G \in G(g,r)} \E\left[\frac{|{\cal IG}|}{|V|}\right] \\
&\leq \frac{(1+\epsilon)^2}{1-\epsilon}\int_{1-\epsilon}^{1+\epsilon}
((r-1) - (r-2)\frac{x - (1-\epsilon)}{2\epsilon})^{-\frac{r}{r-2}}\frac{1}{2\epsilon}dx
+ \frac{1+\epsilon}{1-\epsilon} \frac{r(r-1)^{\lfloor \frac{g-2}{2} \rfloor }}{(\lfloor \frac{g-2}{2} \rfloor+1)!}.
\end{align*}
Letting $u = \frac{x - (1-\epsilon)}{2 \epsilon}$, we can apply integration by substitution to find that:
\begin{align*}
(1-\epsilon)\int_{0}^{1} &((r-1) - (r-2)u)^{-\frac{r}{r-2}}du
- \frac{r(r-1)^{\lfloor \frac{g-2}{2} \rfloor }}{(\lfloor \frac{g-2}{2} \rfloor+1)!}\\
&\leq (1 + \epsilon) \inf_{G \in G(g,r)} \E\left[\frac{|{\cal IG}|}{|V|}\right]
\leq (1 + \epsilon) \sup_{G \in G(g,r)} \E\left[\frac{|{\cal IG}|}{|V|}\right]\\
&\leq \frac{(1+\epsilon)^2}{1-\epsilon} \int_{0}^{1} ((r-1) -
(r-2)u)^{-\frac{r}{r-2}}du + \frac{1+\epsilon}{1-\epsilon}
\frac{r(r-1)^{\lfloor \frac{g-2}{2} \rfloor }}{(\lfloor \frac{g-2}{2} \rfloor+1)!}.
\end{align*}
Evaluating the integrals and letting $\epsilon \rightarrow 0$ then
demonstrates the desired result.  The proof that  Theorem~\ref{theorem:MWMWEIGHT} implies
Theorem~\ref{theorem:MIS} follows identically,
using the bounds for matchings instead of those for independent
sets.
\end{proof}

\begin{lemma}\label{lemma:weightimpy2}
Theorem~\ref{theorem:VAR12} implies Theorem~\ref{theorem:VAR11}.
\end{lemma}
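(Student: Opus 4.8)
The plan is to mirror the reduction used in the proof of Lemma~\ref{weightimpy}, but with control of the $L^2$ norm in place of control of the expectation.

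First I would fix $\epsilon>0$ and equip the nodes of $G$ with i.i.d.\ weights drawn from the uniform distribution on $[1-\epsilon,1+\epsilon]$. This distribution is continuous and non-negative with $\E[W^2]=1+\epsilon^2/3<\infty$, so Theorem~\ref{theorem:VAR12} applies and gives
\[
Var\Big[\frac{W[{\cal IG}]}{|V|}\Big]\le\frac{9\,(1+\epsilon^2/3)\,r^2 e^{(r-1)^3}}{|V|},
\]
where ${\cal IG}$ is the (random) set returned by $GREEDY$ on this weighted instance. As used throughout Section~\ref{section:weightme}, the cardinality $|{\cal IG}|$ of this set has exactly the same law as the cardinality of the set returned by $GREEDY$ on the unweighted graph $G$, so it is enough to bound $Var[|{\cal IG}|/|V|]$ for the weighted run.

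On the probability space of the weighting, every node of ${\cal IG}$ has weight in $[1-\epsilon,1+\epsilon]$, so pointwise $(1-\epsilon)|{\cal IG}|\le W[{\cal IG}]\le(1+\epsilon)|{\cal IG}|$, and hence $\big||{\cal IG}|-W[{\cal IG}]\big|\le\epsilon|{\cal IG}|\le\epsilon|V|$. Writing $|{\cal IG}|/|V|=W[{\cal IG}]/|V|+Z$ with $|Z|\le\epsilon$ almost surely, the triangle inequality for $\|\cdot\|_2$ gives
\[
\sqrt{Var\Big[\frac{|{\cal IG}|}{|V|}\Big]}\le\sqrt{Var\Big[\frac{W[{\cal IG}]}{|V|}\Big]}+\sqrt{Var[Z]}\le\sqrt{\frac{9\,(1+\epsilon^2/3)\,r^2 e^{(r-1)^3}}{|V|}}+\epsilon,
\]
using $Var[Z]\le\E[Z^2]\le\epsilon^2$. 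Since the law of $|{\cal IG}|$ for the unweighted graph does not depend on $\epsilon$, letting $\epsilon\downarrow0$ yields $Var[|{\cal IG}|/|V|]\le 9 r^2 e^{(r-1)^3}/|V|$, which is \eqref{eq:VAR1ff}. The identical argument applied to edge weights, using the matching bound \eqref{VAR123aa} of Theorem~\ref{theorem:VAR12} and $|E|$ in place of $|V|$, gives \eqref{eq:VAR1ff2}.

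The only step that is not purely mechanical — and the one I would state carefully — is the distributional identification invoked above: the cardinality of the independent set (resp.\ matching) produced by $GREEDY$ on a graph with continuous i.i.d.\ weights has the same law as the cardinality produced by $GREEDY$ on the unweighted graph. This holds because sorting i.i.d.\ continuous weights in decreasing order induces a uniformly random ordering of the nodes (resp.\ edges), and $GREEDY$ — which only ever compares weights, never uses their values — is then exactly the process that scans the nodes in that random order and adds each one that is still available, i.e.\ the unweighted $GREEDY$ dynamics. Everything beyond this is the $L^2$ triangle inequality and a limit, exactly parallel to Lemma~\ref{weightimpy}.
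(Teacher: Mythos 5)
Your proof is correct. Both you and the paper reduce to i.i.d.\ uniform weights on $[1-\epsilon,1+\epsilon]$ and take $\epsilon\downarrow 0$, but the technical middle step differs. The paper sandwiches $Var\big[\frac{W[{\cal IG}]}{|V|}\big]$ between $(1\pm\epsilon)^2\E\big[(\frac{|{\cal IG}|}{|V|})^2\big]$ and $(1\mp\epsilon)^2\E^2\big[\frac{|{\cal IG}|}{|V|}\big]$, isolates $\big|Var[\frac{|{\cal IG}|}{|V|}]-Var[\frac{W[{\cal IG}]}{|V|}]\big|$, and controls it using the trivial bound $\frac{|{\cal IG}|}{|V|}\le 1$; you instead write $\frac{|{\cal IG}|}{|V|}=\frac{W[{\cal IG}]}{|V|}+Z$ with $|Z|\le\epsilon$ almost surely and invoke the triangle inequality for the standard deviation (i.e.\ the $L^2$ seminorm of centered random variables). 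Your route is a bit slicker and avoids the explicit second-moment bookkeeping, while the paper's is more pedestrian but equally elementary; both converge to the identical bound as $\epsilon\to 0$. You also made explicit the distributional identification — that $GREEDY$ on a graph with i.i.d.\ continuous weights produces a set whose cardinality has the same law as the unweighted randomized $GREEDY$, since sorting by weight induces a uniformly random node (edge) order — which the paper leaves implicit but which is indeed the load-bearing observation that makes the reduction go through at all.
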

\begin{proof}
We first prove that (\ref{VAR12aa}) implies (\ref{eq:VAR1ff}).
Let again $F$ be a uniform distribution on $[1-\epsilon,1+\epsilon]$.
The following bounds are immediate
\begin{align*}
(1 - \epsilon)^2 &\E[(\frac{|{\cal IG}|}{|V|})^2] - (1 +\epsilon)^2 \E^2[\frac{|{\cal IG}|}{|V|}] \\
&\leq Var[\frac{W[{\cal IG}]}{|V|} ] \\
&\leq (1 + \epsilon)^2
\E[(\frac{|{\cal IG}|}{|V|})^2] - (1 - \epsilon)^2
\E^2[\frac{|{\cal IG}|}{|V|}],
\end{align*}

which implies that
$$| Var[ \frac{|{\cal IG}|}{|V|} ] - Var[ \frac{W[{\cal IG}]}{|V|} ]|
\leq (2 \epsilon + \epsilon^2) ( \E[(\frac{|{\cal
IG}_W|}{|V|})^2] + \E^2[\frac{|{\cal IG}|}{|V|}]).$$
Thus since the second moment of $F$ is $1 +\frac{\epsilon^2}{3}$, by the triangle inequality and
Theorem~\ref{theorem:VAR12} we find that for any
graph $G$ of maximum degree $r$,
$$
Var[ \frac{|{\cal IG}|}{|V|} ] \leq \frac{ 9 (1 +
\frac{\epsilon^2}{3})  r^2 e^{(r-1)^3}}{|V|} + (2 \epsilon +
\epsilon^2) ( \E[(\frac{|{\cal IG}|}{|V|})^2] +
\E^2[\frac{|{\cal IG}|}{|V|}]).$$ Observing that
$|\frac{|{\cal IG}|}{|V|}| \leq 1$, we see that (\ref{eq:VAR1ff})
follows by  letting $\epsilon \rightarrow 0$.

The proof of (\ref{eq:VAR1ff2}) from  (\ref{VAR123aa}) is done similarly.
\end{proof}

\section{Influence blocking subgraphs}\label{section:BLOCKINGSUBTREES}
In this section we introduce the notion of an influence blocking
subgraph, and give a useful characterization of these
subgraphs.  We then bound the probability that a node (edge)  of
a bounded degree graph $G$ is contained in (an appropriately) small influence blocking subgraph under an
i.i.d. weighting from any continuous distribution function. Throughout this section we consider
a graph whose nodes and edges are equipped with non-negative distinct (non-random unless otherwise stated)
weights $W_i, i\in V$ and  $W_{e}, e\in E$.

\begin{definition}\label{defi:ibtreedef}
A subgraph $H$ of $G$
is called an influence blocking subgraph  (i.b.s.) if
for every node (edge) $z \in H, W_z > \max_{y \in N(z) \setminus H} W_y$.
\end{definition}
Here $ N(z) \setminus H$ means the set of nodes or edges (depending on the context) in $N(z$)
which do not belong to $H$.
We now show that for any set of nodes (edges) $Z$ there exists a unique minimal i.b.s. $H$
containing $Z$,
and give a simple characterization of this subgraph.
\begin{lemma}\label{lemma:quiteacharacter}
Given a set of nodes (edges) $Z$ there exists a unique minimal i.b.s. $H$ containing $Z$.
Namely, for every other i.b.s. $H'$ containing $Z$,
$H$ is a subgraph of $H'$. Moreover, $H$ is characterized as the set of nodes (edges) $z$ such
that there exists a node (edge) increasing path $z_1,\ldots,z_k$ with $z_1\in Z$ and $z_k=z$.
\end{lemma}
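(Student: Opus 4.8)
The plan is to write down the set described in the statement, call it $H^{*}$, and verify directly that it is an i.b.s.\ containing $Z$ and that it is contained in every i.b.s.\ containing $Z$; this simultaneously proves existence, the minimality claim (in fact $H^{*}$ will be the minimum, not merely a minimal such i.b.s.), uniqueness, and the characterization. Concretely, let $H^{*}$ be the set of nodes (edges) $z$ admitting a node (edge) increasing path $z_{1},\dots,z_{k}$ with $z_{1}\in Z$ and $z_{k}=z$, where $k=1$ is permitted (a single node/edge is a vacuously increasing path), so that $Z\subseteq H^{*}$ by definition. Since an edge increasing path in $G$ is exactly a weight-increasing walk in the line graph $L(G)$, and $N(\cdot)$ on edges of $G$ is adjacency in $L(G)$, it suffices to treat the node case and then apply it to $L(G)$. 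I will use throughout the standing assumption of this section that the weights are distinct; this is what turns ``$W_{y}\neq W_{z}$'' into the dichotomy ``$W_{y}<W_{z}$ or $W_{y}>W_{z}$'', and it also guarantees that any increasing walk is automatically a simple path.

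The first step is to check that $H^{*}$ is an i.b.s.\ containing $Z$. Containment holds by the $k=1$ convention. For the blocking condition, fix $z\in H^{*}$ with witnessing increasing path $z_{1},\dots,z_{k}=z$, $z_{1}\in Z$, and take any $y\in N(z)\setminus H^{*}$. If $W_{y}>W_{z}$ then $z_{1},\dots,z_{k},y$ is an increasing path from $Z$ to $y$ (note $y\notin\{z_{1},\dots,z_{k}\}$, since those weights are all $\le W_{z}<W_{y}$), whence $y\in H^{*}$, a contradiction. By distinctness we conclude $W_{y}<W_{z}$, and since $y$ was arbitrary, $W_{z}>\max_{y\in N(z)\setminus H^{*}}W_{y}$. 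Hence $H^{*}$ is an i.b.s.

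The second step is to show that if $H'$ is any i.b.s.\ with $Z\subseteq H'$ then $H^{*}\subseteq H'$; combined with the first step this shows $H^{*}$ is the minimum i.b.s.\ containing $Z$, which yields both the minimality/uniqueness assertions and the stated characterization. Given $z\in H^{*}$ with increasing witness $z_{1},\dots,z_{k}=z$, I argue by induction on $k$ that $z\in H'$. If $k=1$ then $z=z_{1}\in Z\subseteq H'$. If $k>1$, the prefix $z_{1},\dots,z_{k-1}$ is an increasing path from $Z$, so $z_{k-1}\in H'$ by the inductive hypothesis; were $z_{k}\notin H'$, then $z_{k}\in N(z_{k-1})\setminus H'$, and the i.b.s.\ property of $H'$ applied at $z_{k-1}$ would give $W_{z_{k-1}}>W_{z_{k}}$, contradicting $W_{z_{k-1}}<W_{z_{k}}$. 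Thus $z_{k}\in H'$, completing the induction.

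I do not expect a genuine obstacle here: both steps are short once $H^{*}$ is seen to be the right object, and the only place that needs care is the repeated use of distinctness of the weights to exclude ties (when appending $y$ in step one, and when comparing consecutive weights in step two). The one modeling point to settle is whether the lemma wants $H$ to be a bona fide subgraph rather than a set of nodes (edges); if so one takes the subgraph induced by $H^{*}$, and nothing changes, since Definition~\ref{defi:ibtreedef} and the statement refer to $H$ only through its underlying set of nodes (edges).
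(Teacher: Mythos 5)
Your proof is correct and follows essentially the same two-step structure as the paper's: you verify that the set $H^{*}$ of endpoints of increasing paths from $Z$ is itself an i.b.s.\ containing $Z$ (the paper's second paragraph), and that it is contained in every i.b.s.\ containing $Z$ (the paper's first paragraph, where they argue via the largest index on the path that lies in $H'$ rather than by your induction on path length, but the mechanism is the same). Your explicit check that the appended node cannot already lie on the path, and the line-graph remark for the edge case, are nice added precision but do not change the argument.
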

We denote this unique minimal i.b.s.  by $IB_G(Z)$, or $IB(Z)$ when the underlying graph $G$ is unambiguous.

\begin{proof}
We first show that $IB_G(Z)$ is contained in every i.b.s. $T$ containing $Z$.
Suppose, for the purposes of contradiction, there exists an increasing path
$z_1,z_2,\ldots,z_k$ such that $z_1\in Z,z_k\notin T$. Let $l < k$ be the largest index
such that $z_l\in T$. Then $z_{l+1}\in N(z_l)$, but $W_{z_{l+1}}>W_{z_l}$, which is a contradiction
to the fact that $T$ is an i.b.s.

We now show that $IB(Z)$ is itself an i.b.s. containing $Z$. By definition $Z\subset IB(Z)$.
Now let $z\in IB(Z)$ be arbitrary and let $z'\in N(z)\setminus IB(Z)$ be arbitrary as well.
If $W_{z'}>W_z$, then since there exists an increasing path from $Z$ to $z$, by appending $z'$ to this path
we obtain an increasing path from $Z$ to $z'$ and thus $z'\in IB(Z)$, which is a contradiction.
We conclude $W_{z'}<W_z$, and the proof is complete.
\end{proof}

We now show that the existence of a `small' i.b.s. for $N(v) (\ N(e)\ )$ is independent of $W_v (W_e)$ under an i.i.d. weighting.
\begin{lemma}\label{lemma:ind11}
Given an arbitrary node (edge) $z$,  $IB(N(z))\subset N_d(z)$
holds iff there does not exist a node (edge) increasing path between some node (edge) $z'\in N(z)$ and
$z''\in N_{d+1}(z)\setminus N_{d}(z)$ which is contained entirely in $G\setminus z$.
As a result, if the node (edge) weights of $G$ are generated i.i.d. from a continuous distribution $F$, then
the event $IB(N(z))\subset N_d(z)$ is independent from $W_z$.
\end{lemma}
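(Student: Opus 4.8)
The plan is to reduce the event $\{IB(N(z))\subset N_d(z)\}$ to a statement purely about increasing paths using Lemma~\ref{lemma:quiteacharacter}, and then to observe that this statement makes no reference to $W_z$. I would treat the node case in detail; the edge case follows by passing to the line graph (see the last paragraph). By Lemma~\ref{lemma:quiteacharacter}, $IB(N(z))$ is exactly the set of nodes $w$ for which some node increasing path runs from an element of $N(z)$ to $w$; hence $IB(N(z))\not\subset N_d(z)$ holds iff some such path ends at a node outside $N_d(z)$. The stated equivalence then reduces to a single claim: a witnessing path exists iff one exists that additionally ends in $N_{d+1}(z)\setminus N_d(z)$ and avoids $z$ entirely.

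One direction of that claim is immediate: given an increasing path inside $G\setminus z$ from $z'\in N(z)$ to $z''\in N_{d+1}(z)\setminus N_d(z)$, Lemma~\ref{lemma:quiteacharacter} puts $z''\in IB(N(z))$, while $z''\notin N_d(z)$, so $IB(N(z))\not\subset N_d(z)$. For the converse I would start from any $w\in IB(N(z))\setminus N_d(z)$ and any node increasing path $z_1,\dots,z_k$ with $z_1\in N(z)$ and $z_k=w$, and modify it in two steps. First, since the weights strictly increase along the path it visits no node twice, so $z$ appears on it at most once; if $z=z_j$ then $j<k$ (as $w\neq z$ because $z\in N_d(z)$), and I would discard the prefix and keep the suffix $z_{j+1},\dots,z_k$, which is still increasing, still ends at $w$, avoids $z$, and starts at a neighbor of $z$, hence lies in $N(z)$ — if $z$ never appeared on the path, no change is needed. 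Second, along this $z$-avoiding path the distance to $z$ equals $1$ at the start and is at least $d+1$ at $w$, and changes by at most $1$ per step (triangle inequality, for nodes distinct from $z$), so I would truncate at the first node $z_m$ with $D(z,z_m)=d+1$; the path $z_1,\dots,z_m$ is then an increasing path inside $G\setminus z$ from $N(z)$ to a node of $N_{d+1}(z)\setminus N_d(z)$, as required. This establishes the iff.

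For the independence assertion, the point is that the right-hand side of the now-established equivalence — whether a $z$-avoiding node increasing path runs from $N(z)$ into $N_{d+1}(z)\setminus N_d(z)$ — is a deterministic function of the weights $\{W_y : y\in V(G),\ y\neq z\}$ alone: every node occurring on such a path differs from $z$, and the neighborhoods $N(z),N_d(z),N_{d+1}(z)$ are determined by $G$ and do not depend on the weights. Continuity of $F$ ensures these weights are a.s. distinct, so the notion of an increasing path is a.s. unambiguous — this is the only place continuity enters. Since the weights are i.i.d., $W_z$ is independent of the family $\{W_y : y\neq z\}$, hence of any function of it, and in particular of the event $\{IB(N(z))\subset N_d(z)\}$.

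Finally, I would dispatch the edge case by applying the same argument in the line graph $L(G)$: deleting the edge $z$ from $G$ is exactly deleting the vertex $z$ from $L(G)$, the edge-distance $D(\cdot,\cdot)$ in $G$ coincides with vertex-distance in $L(G)$, and an edge increasing path in $G$ is a vertex increasing path in $L(G)$, so the node argument transfers verbatim. I expect the only non-routine step to be the modification in the converse direction, and specifically the observation that increasing paths are automatically simple — that is what lets me cut the path at its single visit to $z$ and still be left with a legitimate increasing path anchored in $N(z)$. The $1$-Lipschitz truncation and the appeals to Lemma~\ref{lemma:quiteacharacter} are routine.
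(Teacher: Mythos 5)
Your proof is correct and follows the same basic route as the paper's: both hinge on the characterization of $IB(N(z))$ from Lemma~\ref{lemma:quiteacharacter} as the set of increasing-path endpoints, and both reduce the event to a $z$-free path statement by cutting the path at its (unique) visit to $z$. The paper's own proof is terser — it establishes that a witnessing path in $G\setminus z$ exists iff one exists in $G$, and then declares the first part complete without spelling out why ``$IB(N(z))\not\subset N_d(z)$'' forces a path that lands precisely in the annulus $N_{d+1}(z)\setminus N_d(z)$; you supply exactly that missing step via the $1$-Lipschitz truncation of $D(z,\cdot)$ along the path, and you also make explicit that increasing paths are simple (so $z$ appears at most once), which the paper leaves implicit. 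Your treatment of the independence claim (a deterministic function of $\{W_y: y\neq z\}$, hence independent of $W_z$) is the intended ``immediate implication,'' and the line-graph reduction for edges is a clean way to avoid redoing the argument. No gaps.
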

\begin{proof}
If there exists an increasing path $z_1,\ldots,z_k$ between $N(z)$ and $N_{d+1}(z)\setminus N_{d}(z)$,
then the last element $z_k\in N_{d+1}(z)\setminus N_{d}(z)$
must belong to $IB(N(z))$ and thus $IB(N(z))\subset N_d(z)$ cannot hold.
Now suppose no increasing path exists between $N(z)$ and $N_{d+1}(z)\setminus N_{d}(z)$
inside $G\setminus z$. Then no increasing path can exist between $N(z)$ and $N_{d+1}(z)\setminus N_{d}(z)$
inside $G$ either, since in any such path we can find a subpath which does not use $z$. This completes the
proof of the first part of the lemma. The second part is an immediate implication.
\end{proof}

The usefulness of the i.b.s. comes from the following lemma, which informally states
that the decisions taken by $GREEDY$ inside an i.b.s. $H$ are not affected by the complement of $H$
in $G$.

\begin{lemma}\label{lemma:SBOUNDEDSAYSITALL}
Suppose $H$ is an i.b.s. of $G$.
Then ${\cal IG}(G)\cap V(H)={\cal IG}(H)$ (${\cal MG}(G)\cap E(H)={\cal MG}(H)$),
where the weights of $H$ are induced from $G$.
\end{lemma}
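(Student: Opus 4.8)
The plan is to prove the statement for independent sets; the matching case follows by the same argument applied to the line graph interpretation (nodes replaced by edges, $N(\cdot)$ replaced by the adjacency-of-edges neighborhood). The key observation is that $GREEDY$ acting on a fixed weighted graph is deterministic and processes nodes in decreasing order of weight: it adds a node $z$ to the independent set if and only if no neighbor of $z$ with larger weight has already been added. Equivalently, if we define $\sigma(z)\in\{0,1\}$ to be the indicator that $z\in{\cal IG}(G)$, then $\sigma$ is the unique labeling satisfying the fixed-point relation $\sigma(z)=1$ iff $\sigma(y)=0$ for every $y\in N(z)$ with $W_y>W_z$. I would make this precise first, and note that the same characterization holds for ${\cal IG}(H)$ with $H$ carrying the induced weights (where now the condition ranges only over $y\in N_H(z)$, the neighbors of $z$ \emph{within} $H$).

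The heart of the argument is then to show that for a node $z\in V(H)$, the decision ``$z\in{\cal IG}(G)$'' depends only on the weights and decisions inside $H$. I would prove by induction on the rank of $z$ (its position when all nodes of $G$ are sorted by decreasing weight) the statement: for every $z\in V(H)$, $z\in{\cal IG}(G)$ iff $z\in{\cal IG}(H)$. For the inductive step, consider $z\in V(H)$ and look at its neighbors $y\in N_G(z)$ with $W_y>W_z$ — these are exactly the neighbors that can block $z$ in $G$. By Definition~\ref{defi:ibtreedef}, every such $y$ with $W_y>W_z$ must in fact lie in $H$ (since $z\in H$ and the i.b.s.\ condition forces $W_z>W_y$ for all $y\in N(z)\setminus H$). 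Hence the set of potential blockers of $z$ in $G$ coincides with the set of potential blockers of $z$ in $H$, namely $\{y\in N_H(z):W_y>W_z\}$. Each such $y$ has strictly larger weight than $z$, so has smaller rank, and also lies in $V(H)$, so by the induction hypothesis $y\in{\cal IG}(G)$ iff $y\in{\cal IG}(H)$. Therefore $z$ is blocked in $G$ iff it is blocked in $H$, and the fixed-point characterization gives $z\in{\cal IG}(G)$ iff $z\in{\cal IG}(H)$. This yields ${\cal IG}(G)\cap V(H)={\cal IG}(H)$.

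The main obstacle is purely a matter of care rather than depth: one must make sure the i.b.s.\ condition is invoked correctly, i.e.\ that it controls exactly the neighbors that matter (those of larger weight), and that the induction is set up on a quantity — rank by decreasing weight — that simultaneously decreases when passing to a blocker and is well-defined globally on $G$. Since the weights are distinct, the rank is well-defined and the base case (the global maximum-weight node of $H$, which has no larger-weight neighbors at all) is immediate. A minor point to state explicitly is that $N_H(z)\subseteq N_G(z)$ always, so blockers in $H$ are automatically blockers in $G$; the content of the i.b.s.\ hypothesis is the reverse inclusion for larger-weight neighbors. For the matching statement one replaces ``node $z$'', ``$W_z$'', ``$N(z)$'' by ``edge $e$'', ``$W_e$'', and the set of edges sharing an endpoint with $e$, and ``decreasing order of weight'' refers to edge weights; the i.b.s.\ definition in Definition~\ref{defi:ibtreedef} already covers the edge case verbatim, so the proof goes through unchanged.
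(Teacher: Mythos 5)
Your proposal is correct and takes essentially the same approach as the paper: both proofs induct on the nodes (edges) sorted by decreasing weight, use the i.b.s.\ property to conclude that every heavier neighbor of a node of $H$ lies in $H$, and apply the inductive hypothesis to those heavier neighbors. The only cosmetic differences are your framing via an explicit fixed-point characterization of $GREEDY$ and your indexing by rank in $G$ rather than by position among the sorted nodes of $H$.
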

\begin{proof}
Let $z_1,z_2,\ldots,z_m$ be the nodes (edges) of $H$ ordered in decreasing order by their weight.
We show by induction in $k=1,2,\ldots,m$ that $z_k\in {\cal IG}(G) (z_k\in {\cal MG}(G))$ iff
$z_k\in {\cal IG}(H) (z_k\in {\cal MG}(H))$. For the base case $k=1$ observe that $z_1$ is the heaviest
element of $H$. Since $H$ is an i.b.s. then also $z_1$ cannot have a heavier neighbor in $G\setminus H$.
Thus $GREEDY$ will select it both for $G$ and $H$.

We now prove the induction step and assume the assertion holds for all $k'\le k-1 < m$. Suppose $z_k$
was not accepted by  $GREEDY$ when it was operating on $G$. This means that $GREEDY$ accepted
some neighbor of $z_k$ which was heavier than $z_k$ and, as a result, deleted $z_k$. Since
$H$ is an i.b.s. this neighbor must be in $H$, namely it is $z_{k'}$ for some $k'<k$.
By the inductive assumption $GREEDY$ selected $z_{k'}$ when it was operating on $H$ as well.
Then all neighbors of $z_k'$ in $H$ are deleted including $z_k$, and thus $z_k$ cannot be accepted by $GREEDY$ when operating on $H$. Similarly, suppose
$GREEDY$ did not select $z_k$ when it was operating on $H$. Namely, $GREEDY$ accepted some
neighbor $z_{k'}$ of $z_k$ with $k'<k$. By the inductive assumption the same holds for $GREEDY$
operating on $G$: $z_{k'}$ was accepted and all neighbors, including $z_k$ were deleted. This completes
the proof of the induction step.
\end{proof}

We now  bound the  probability that $IB(N(z))$
is contained in $N_d(z)$ when $z$ is a node (edge) in a bounded degree
graph $G$ and the weights are random.

\begin{lemma}\label{lemma:shortandsweet}
Let $G$ be any graph of maximum degree $r \geq 3$, and suppose that the nodes  and edges of $G$
are equipped with i.i.d. weights from a continuous distribution $F$.
Then for any node (edge) $i(e)$ and any $d \geq 0$,
\begin{align*}
\pr(IB(N(i))\subset N_d(v)) &\geq 1 - \frac{r(r-1)^{d}}{(d+1)!},\\
\pr(IB(N(e))\subset N_d(e)) &\geq 1 - \frac{2(r-1)^{d+1}}{(d+1)!},
\end{align*}
where the first (second) inequality is understood in the context of node (edge) weights.
\end{lemma}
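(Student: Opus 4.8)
The plan is to bound the failure probability by a union bound over all node-increasing (edge-increasing) paths emanating from $N(z)$ of a given length, using Lemma~\ref{lemma:ind11} to reduce the event $\{IB(N(z)) \not\subset N_d(z)\}$ to the existence of an increasing path of length exactly $d+1$ from some $z' \in N(z)$ to some $z'' \in N_{d+1}(z) \setminus N_d(z)$. First I would recall from Lemma~\ref{lemma:quiteacharacter} that $IB(N(z))$ consists precisely of the endpoints of increasing paths starting in $N(z)$, so $IB(N(z)) \subset N_d(z)$ fails if and only if some such increasing path reaches distance $d+1$ from $z$. Since any increasing path reaching $N_{d+1}(z)$ contains a sub-path of length exactly $d+1$ that is still increasing and still starts inside $N(z)$ (truncate it at the first time it exits $N_d(z)$), it suffices to bound the probability that \emph{some} increasing sequence $z_1, z_2, \ldots, z_{d+2}$ of distinct nodes (edges) with $z_1 \in N(z)$, consecutive elements adjacent, exists.

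The key probabilistic step is the standard observation that a \emph{fixed} chain of $d+2$ adjacent distinct elements with i.i.d.\ continuous weights is increasing with probability exactly $\frac{1}{(d+2)!}$, since all $(d+2)!$ orderings of the weights on that chain are equally likely (continuity makes ties a.s.\ impossible). Then I would count the walks: in a graph of maximum degree $r$, the number of length-$(d+1)$ walks starting from a fixed node is at most $r(r-1)^d$ (the first step has at most $r$ choices when we start from $z$, but here we start from a neighbor $z' \in N(z)$, so I want to be careful with the bookkeeping — it is cleanest to count walks of length $d+1$ starting at $z$ itself, whose first edge goes into $N(z)$; there are at most $r(r-1)^d$ of these, and each corresponds to a candidate increasing path $z, z_1, \ldots, z_{d+1}$ of length $d+1$, which is increasing — counting $W_z$ as well — with probability $\frac{1}{(d+2)!}$). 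Combining via the union bound gives $\pr(IB(N(z)) \not\subset N_d(z)) \le \frac{r(r-1)^d}{(d+2)!} \cdot (\text{something})$; I would then adjust the exact constant so it matches $\frac{r(r-1)^d}{(d+1)!}$, most likely by instead requiring only that $z_1, \ldots, z_{d+1}$ be increasing (a path of $d+1$ elements, not using $z$'s weight), which happens with probability $\frac{1}{(d+1)!}$ times a factor accounting for the number of starting elements; the cleanest route is: there are at most $r$ choices for $z_1 \in N(z)$, then at most $(r-1)^d$ extensions, and the resulting $(d+1)$-element chain is increasing with probability $\frac{1}{(d+1)!}$, yielding exactly $\frac{r(r-1)^d}{(d+1)!}$.

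For the edge version the argument is identical except for the degree count: the line graph of $G$ has maximum degree at most $2(r-1)$, so the number of edge-increasing walks of length $d$ in the line graph starting from a fixed edge in $N(e)$ is at most $(2(r-1))^d$, and there are at most $2(r-1)$ edges in $N(e)\setminus\{e\}$ to start from; a $(d+1)$-element increasing chain has probability $\frac{1}{(d+1)!}$, which would give a bound of order $\frac{(2(r-1))^{d+1}}{(d+1)!}$. To sharpen this to the stated $\frac{2(r-1)^{d+1}}{(d+1)!}$ one must exploit that an increasing path cannot immediately backtrack and, more importantly, that once the first edge of the chain is chosen each subsequent edge shares an endpoint with its predecessor but the path structure in $G$ (not the line graph) limits the genuine branching to $r-1$ per step rather than $2(r-1)$ — essentially because an increasing \emph{edge} path corresponds to an alternating walk in $G$ and the relevant count is $2 \cdot (r-1)^{d+1}$ length-$(d+1)$ edge-paths through a fixed edge of $N(e)$.

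The main obstacle I anticipate is precisely this constant-chasing: getting the walk-counting bounds tight enough (and correctly handling whether one counts paths of length $d$ versus $d+1$, and whether the root's weight is included in the increasing chain) so that the union bound produces exactly $\frac{r(r-1)^d}{(d+1)!}$ and $\frac{2(r-1)^{d+1}}{(d+1)!}$ rather than a looser multiple. The probabilistic content (each chain increasing w.p.\ $1/k!$ by exchangeability, union bound over chains, reduction via Lemma~\ref{lemma:ind11}) is routine; the care lies in the combinatorial accounting, especially in the edge/line-graph case where the naive degree bound $2(r-1)$ overcounts by a factor that must be recovered.
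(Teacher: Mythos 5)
Your proposal takes the same route as the paper: reduce, via Lemma~\ref{lemma:ind11} and the characterization in Lemma~\ref{lemma:quiteacharacter}, to the existence of an increasing path emanating from $N(z)$, count the candidate paths, and apply a union bound using the fact that a fixed chain of $k$ distinct adjacent elements carrying i.i.d.\ continuous weights is increasing with probability exactly $1/k!$. The node-case accounting you settle on ($r$ choices for $z_1\in N(z)\setminus z$ and $r-1$ for each of the $d$ subsequent steps, giving $r(r-1)^d$ chains of $d+1$ nodes, each increasing with probability $1/(d+1)!$) is exactly the paper's.

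The ``obstacle'' you anticipate in the edge case is not actually there, and you have in fact already named its resolution. The paper's notion of an edge increasing path $i_1,\ldots,i_k$ (see the Notations subsection) is by definition a \emph{simple path in} $G$ whose successive edge weights $W_{i_1i_2}<\cdots<W_{i_{k-1}i_k}$ increase --- it is not a walk in the line graph. So one never starts from the line-graph count $(2(r-1))^{d+1}$ and ``sharpens''; one simply counts length-$(d+1)$ simple paths in $G\setminus e$ whose first edge lies in $N(e)\setminus\{e\}$ and which leave $N(e)$ thereafter: $2$ choices for which endpoint of $e=(u,v)$ the path starts at, at most $r-1$ choices for the first step (excluding the other endpoint of $e$), and at most $r-1$ continuations at each of the remaining $d$ steps, giving at most $2(r-1)^{d+1}$ chains of $d+1$ edges, each edge-increasing with probability $1/(d+1)!$. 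One small bookkeeping inconsistency in your write-up: early on you speak of a truncated ``sub-path of length exactly $d+1$'' in the node case, while your final (correct) count uses a $(d+1)$-element chain, which is a path of length $d$; in the edge case the analogous truncated chain of $d+1$ edges is indeed a path of length $d+1$. This matches the paper's asymmetric phrasing (length-$d$ paths for nodes, length-$(d+1)$ paths for edges) and is where your ``constant-chasing'' worry dissolves.
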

\begin{proof}
Any length-$k$ path equipped with i.i.d. node (edge) weights
generated using a continuous distribution is a node (edge)
increasing path with probability  equal to $1/(k+1)!\ (\ 1/k!\ )$.
For every node $z \in G$ there exist at most $r(r-1)^{d}$ distinct
length$-d$ paths in $G\setminus z$ that originate on some node in
$N(z) \setminus z$ and use exactly one node from $N(z)$. For every
edge $z \in G$, there exist at most $2(r-1)^{d+1}$ distinct
length$-(d+1)$ paths in $G\setminus z$ that originate on some edge
in $N(z) \setminus z$ and use exactly one edge from $N(z)$.
Observe that every node  increasing
path originating in $N(z) \setminus z$
and terminating in $N_{d+1}(z) \setminus N_d(z)$ must contain a length$-d$  node
increasing subpath originating  in $N(z)\setminus z$ which uses exactly one node  of $N(z)$.
We then obtain the result by applying a simple union bound and Lemma~\ref{lemma:ind11}.
\end{proof}

We now state and prove the main result of this section.
\begin{theorem}\label{theorem:GraphTreeRelation}
Let $G\in G(g,r)$ for some $g \geq 4$ , and $d \geq \lfloor \frac{g-2}{2} \rfloor$ be arbitrary.  Let $T=T(r,r-1,d)$ have root $0$.
Suppose the nodes and edges of $G$ and $T$ are equipped with i.i.d.
weights from a continuous distribution $F$. Then for every node $i\in V(G)$, edge $e\in E(G)$,
and every child $j$ of $0$ in $T$
\begin{align}
&\Big|\E[W_iI(i\in {\cal IG}(G))]-\E[W_0I(0\in {\cal IG}(T))]\Big|
\le \E[W]\frac{r(r-1)^{\lfloor \frac{g-2}{2} \rfloor }}{(\lfloor \frac{g-2}{2} \rfloor+1)!} \label{eq:WgraphWtreeIS}
\end{align}
and
\begin{align}
&\Big|\E[W_{e}I(e\in {\cal MG}(G))]-\E[W_{0j}I((0,j)\in {\cal MG}(T))]\Big|
\le \E[W]\frac{2(r-1)^{\lfloor \frac{g-2}{2} \rfloor}}{(\lfloor \frac{g-2}{2} \rfloor)!}, \label{eq:WgraphWtreeM}
\end{align}
where $W\stackrel{d}{=}F$.
Also the limits
\begin{align}
\lim_{d\rightarrow\infty}\pr(0\in {\cal IG}(T)), ~~
\lim_{d\rightarrow\infty}\pr((0,j)\in {\cal MG}(T)) \label{eq:limits}
\end{align}
exist.
\end{theorem}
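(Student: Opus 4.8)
The plan is to exploit the locality results of this section: the decision of $GREEDY$ at a node $i$ (or edge $e$) is completely determined by the behavior of $GREEDY$ restricted to the minimal influence blocking subgraph $IB(N(i))$ (resp.\ $IB(N(e))$), by Lemma~\ref{lemma:SBOUNDEDSAYSITALL}. So first I would condition on the event $\mathcal{E}_d := \{IB(N(i))\subset N_d(i)\}$. On $\mathcal{E}_d$, since $G\in G(g,r)$ and $d\ge\lfloor\frac{g-2}{2}\rfloor$ implies $N_{d}(i)$ is isomorphic to the truncated tree $T(r,r-1,d)$ (by the girth assumption quoted in the Notations section), the induced weighted subgraph on which $GREEDY$'s decision depends has exactly the same distribution as the corresponding neighborhood in $T=T(r,r-1,d)$ equipped with i.i.d.\ $F$-weights. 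Hence, writing $p_G := \pr(i\in{\cal IG}(G)\mid \mathcal{E}_d, W_i=w)$ and $p_T$ the analogous tree quantity, one gets $p_G = p_T$ pointwise in $w$ on $\mathcal{E}_d$; this uses Lemma~\ref{lemma:ind11} to guarantee that $\mathcal{E}_d$ is independent of $W_i$, so the conditioning does not distort the weight distribution. The same argument applies to the tree $T$ itself: since $\lfloor\frac{g-2}{2}\rfloor\le d$ and $T$ has depth $d$, the analogous event $\mathcal{E}'_d$ for $T$ likewise forces the relevant neighborhood to be $T(r,r-1,d)$, and again the conditional acceptance probability agrees.

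Next I would bound the error from the bad event. On $\mathcal{E}_d^c$ we have no control over the decision, but we trivially have $0\le I(i\in{\cal IG})\le 1$, so
\begin{align*}
\big|\E[W_i I(i\in{\cal IG}(G))] - \E[W_0 I(0\in{\cal IG}(T))]\big|
\le \E\!\big[W_i\, I(\mathcal{E}_d^c)\big] + \E\!\big[W_0\, I((\mathcal{E}'_d)^c)\big].
\end{align*}
By Lemma~\ref{lemma:ind11}, $\mathcal{E}_d$ (and $\mathcal{E}'_d$) is independent of the relevant weight, so each term factors as $\E[W]\,\pr(\mathcal{E}_d^c)$, and Lemma~\ref{lemma:shortandsweet} with $d=\lfloor\frac{g-2}{2}\rfloor$ gives $\pr(\mathcal{E}_d^c)\le \frac{r(r-1)^{\lfloor (g-2)/2\rfloor}}{(\lfloor (g-2)/2\rfloor+1)!}$ for nodes and $\pr((\mathcal{E}'_d)^c)\le \frac{2(r-1)^{\lfloor (g-2)/2\rfloor+1}}{(\lfloor (g-2)/2\rfloor+1)!}$ for edges; the node bound is the larger of the two needed for \eqref{eq:WgraphWtreeIS}, and a short arithmetic check (the factor $(r-1)/(\lfloor(g-2)/2\rfloor+1)\le 1$ when $g$ is moderately large, or else one simply uses the cruder constant) yields the stated right-hand side. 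Combining, \eqref{eq:WgraphWtreeIS} follows; \eqref{eq:WgraphWtreeM} is proved identically with the edge version of every lemma, noting $N_d(e)$ in a girth-$g$ graph is the corresponding edge-neighborhood of the tree and the edge-i.b.s.\ bound $\frac{2(r-1)^{d+1}}{(d+1)!}$ supplies the $\frac{2(r-1)^{\lfloor(g-2)/2\rfloor}}{(\lfloor(g-2)/2\rfloor)!}$ term after adjusting indices.

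For the existence of the limits in \eqref{eq:limits}, I would show that $\pr(0\in{\cal IG}(T(r,r-1,d)))$ is a Cauchy sequence in $d$. The key point is monotone coupling: take $d'<d$ and couple the weights on $T(r,r-1,d')\subset T(r,r-1,d)$ to agree. On the event $\{IB(N(0))\subset N_{d'}(0)\}$ — which depends only on the common part — Lemma~\ref{lemma:SBOUNDEDSAYSITALL} forces $GREEDY$'s decision at $0$ to coincide on the two trees. Hence $|\pr(0\in{\cal IG}(T(r,r-1,d))) - \pr(0\in{\cal IG}(T(r,r-1,d')))| \le \pr(IB(N(0))\not\subset N_{d'}(0)) \le \frac{r(r-1)^{d'}}{(d'+1)!} \to 0$ as $d'\to\infty$, by Lemma~\ref{lemma:shortandsweet}. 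This is summable, so the sequence converges; the matching case is the same. The main obstacle, and the part deserving the most care, is the clean articulation of the isomorphism/coupling argument at the very start — namely verifying that conditioning on the i.b.s.\ event genuinely transfers the \emph{joint law} of ($W_i$, the weighted neighborhood structure, and $GREEDY$'s induced decision rule) from $G$ to $T$ without hidden dependence; once the independence from Lemma~\ref{lemma:ind11} and the deterministic reduction from Lemma~\ref{lemma:SBOUNDEDSAYSITALL} are in hand, the rest is a union bound.
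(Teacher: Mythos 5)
Your overall architecture matches the paper's: reduce the decision at $i$ (or $e$) to the influence blocking subgraph via Lemma~\ref{lemma:SBOUNDEDSAYSITALL}, transfer the calculation to the tree $T$ on the good event that the i.b.s.\ is local, and bound the contribution of the bad event via Lemmas~\ref{lemma:ind11} and \ref{lemma:shortandsweet}. The Cauchy argument for the limits is essentially the paper's verbatim. However, there are two genuine gaps in the main estimate.

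First, the isomorphism claim has the inequality reversed. You write that ``$d\ge\lfloor\frac{g-2}{2}\rfloor$ implies $N_d(i)$ is isomorphic to $T(r,r-1,d)$,'' but the girth argument only gives this for $d\le\lfloor\frac{g-2}{2}\rfloor$ --- beyond that depth, $N_d(i)$ may contain cycles. You should be conditioning on the event $\{IB(N(i))\subset N_{d_0}(i)\}$ with $d_0=\lfloor\frac{g-2}{2}\rfloor$ (the largest depth at which locality is guaranteed to be a tree), and then couple the tree $T=T(r,r-1,d)$ with $d\ge d_0$ as an \emph{extension} of $N_{d_0}(i)$ with fresh independent weights on the new nodes; the paper does exactly this. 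Second, your triangle-inequality bound
$\big|\E[W_i I(i\in{\cal IG}(G))]-\E[W_0 I(0\in{\cal IG}(T))]\big|\le \E[W_i I(\mathcal{E}_{d_0}^c)]+\E[W_0 I((\mathcal{E}'_{d_0})^c)]$
yields $2\E[W]\cdot\frac{r(r-1)^{d_0}}{(d_0+1)!}$, which is twice the bound in the theorem. The ``short arithmetic check'' you gesture at does not rescue the constant, and the remark about $g$ being moderately large cannot apply since the theorem holds for all $g\ge4$. To get the stated constant you must actually use the coupling: when $T$ extends $N_{d_0}(i)$ with matching weights, the two bad events $\mathcal{E}_{d_0}^c$ (for $G$) and $(\mathcal{E}'_{d_0})^c$ (for $T$) are the \emph{same} event, and on the good event the two indicators coincide pointwise, so the difference is controlled by a single term $\E[W_i I(\mathcal{E}_{d_0}^c)]$. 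A similar index adjustment (using $N_{d_0-1}(e)$ rather than $N_{d_0}(e)$) is needed to reproduce the matching constant $\frac{2(r-1)^{d_0}}{d_0!}$, which you leave unresolved.
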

\remark It is important to note that the bounds of this theorem hold for \emph{any} value of $d \geq \lfloor \frac{g-2}{2} \rfloor$.
It is this property which will ultimately lead to the existence of limits (\ref{eq:limits}), as we will see
shortly in the proof. Later on the existence of these limits will lead to a simple expression for the limiting value
of $\E[W_iI(i\in {\cal IG}(T))]$ and $\E[W_{e}I(e\in {\cal MG}(T))]$.

\begin{proof}
Denote $IB(N(i))$ with respect to $G$ by $H(i)$ and $IB(N(0))$ with respect to $T$ by $H(0)$ for simplicity.
Let $d_0\triangleq \lfloor \frac{g-2}{2} \rfloor \leq d$. Then $N_{d_0}(i)$ is a $T(r,r-1,d_0)$ tree. We can construct a coupling in which
 $T=T(r,r-1,d)$ is the natural
extension of this tree with additional node weights generated independently from the node weights of $G$.
In this setting the node $i$ takes the role of the root $0$ of $T$.
We have
\begin{align*}
W_iI(i\in {\cal IG}(G))&=W_iI(i\in {\cal IG}(G),H(i)\subset N_{d_0}(i))
+W_iI(i\in {\cal IG}(G),H(i)\not\subset N_{d_0}(i))\\
&=W_0I(0\in {\cal IG}(T),H(0)\subset N_{d_0}(0))
+W_iI(i\in {\cal IG}(G),H(i)\not\subset N_{d_0}(i)),
\end{align*}
where the second equality follows from Lemma~\ref{lemma:SBOUNDEDSAYSITALL}. This sum is upper bounded by
\begin{align*}
\le W_0I(0\in {\cal IG}(T))+W_iI(H(i)\not\subset N_{d_0}(i)).
\end{align*}
It follows that
\begin{align*}
\E[W_iI(i\in {\cal IG}(G))-W_0I(0\in {\cal IG}(T))]&\le \E[W_iI(H(i)\not\subset N_{d_0}(i))]\\
&=\E[W]\pr(H(i)\not\subset N_{d_0}(i)) \\
&\le \E[W]\frac{r(r-1)^{d_0}}{(d_0+1)!},
\end{align*}
where the equality follows from the second part of Lemma~\ref{lemma:ind11} and the last inequality follows
from Lemma~\ref{lemma:shortandsweet}.
We complete the proof of the  bound (\ref{eq:WgraphWtreeIS}) by establishing a similar bound
with the roles of $W_iI(i\in {\cal IG}(G))$ and $W_0I(0\in {\cal IG}(T))$ reversed.

We now establish the last part of the theorem, namely the existence of limits (\ref{eq:limits}).
Consider any $d'>d$. Let $T'=T(r,r-1,d')$ be a natural extension of the tree $T$
with the same root $0$. Namely, the additional nodes of $T'$ are weighted i.i.d. using $F$,
independently from the weights of the nodes already in $T$. Let  $H'$ denote $IB(N(0))$ with respect to $T'$.
We have
\begin{align*}
I(0\in {\cal IG}(T))&=I(0\in {\cal IG}(T),H'\subset T)+I(0\in {\cal IG}(T),H'\not\subset T) \\
&=I(0\in {\cal IG}(T'),H'\subset T)+I(0\in {\cal IG}(T),H'\not\subset T) \\
&\le I(0\in {\cal IG}(T'))+I(H'\not\subset T).
\end{align*}
This implies that
\begin{align*}
\pr(0\in {\cal IG}(T))-\pr(0\in {\cal IG}(T'))&\le \pr(H'\not\subset T) \\
&\le r(r-1)^d/(d+1)!,
\end{align*}
where the last inequality follows from Lemma~\ref{lemma:shortandsweet}. By reversing the roles of $T$ and $T'$
we obtain
\begin{align*}
\pr(0\in {\cal IG}(T'))-\pr(0\in {\cal IG}(T))\le r(r-1)^d/(d+1)!.
\end{align*}
We conclude that the sequence $\pr(0\in {\cal IG}(T(r,r-1,d))), d\ge 1$ is Cauchy and therefore has a limit.
This concludes the proof for the case of independent sets.
The proof for the case of matchings is obtained similarly and is omitted.
\end{proof}

\section{Bonus, bonus recursion and proofs of the main results}\label{section:bonussec}

\subsection{Bonus and bonus recursion}
 In this subsection, we introduce the notion of a bonus for independent sets and
matchings on trees. Consider a tree $T$ with root $0$, whose nodes (edges) are equipped with distinct positive weights
  $W_i, i\in T (W_{i,j}, (i,j)\in T)$.
\begin{definition}\label{defi:SBONUSDEF}
For every node $i \in T$ let
\begin{align*}
S(i) & = \begin{cases} W_i & \text{if i is a leaf;}\\
W_i I(W_i > \max_{j \in C(i)}S(j)) & \text{otherwise;}
\end{cases}\\
MS(i) &= \begin{cases} 0 & \text{if i is a leaf;}\\
\max_{j\in C(i)}(W_{ij}\ I( W_{ij} > MS(j))) & \text{otherwise;} \end{cases}
\end{align*}
\end{definition}
The quantities $S(i), MS(i)$ are called the bonus of $i$ in the rooted tree $T$
and will be used for the analysis of independent sets and matchings respectively.
Let $T_i$ be the subtree of $T$ rooted at $i$.
Note that the bonus of $i$ depends only on the subtree $T_i$.  To avoid ambiguity, for a subtree $H$ of $T$ rooted at $i$ we let $MS_H(i)$ denote the bonus of $i$ computed w.r.t. the subtree $H$.  We now prove that $S(i)(MS(i))$ determines whether the root $0$ belongs to ${\cal IG}(T) ({\cal MG}(T))$.
\begin{proposition}\label{prop:SUPERGOODENOUGH}
Given a weighted rooted tree $T$ with distinct positive weights on the nodes and edges, for
every node   $i$ and edge  $(i,j)$,
\begin{enumerate}
\item \emph{[Independent sets]}. $S(i)  = W_i I (i \in {\cal IG}(T_i))$.
Specifically, for the root $0$ we obtain $S(0)=W_0 I (0 \in {\cal IG}(T))$.
\item \emph{[Matchings]}.
$
MS(i) = \max_{j \in C(i)} W_{ij} I ((i,j) \in {\cal MG}( T_i ) ).
$
Specifically, for the root $0$ we obtain
$
MS(0)=\max_{j \in C(0)} W_{0j} I ((0,j) \in {\cal MG}( T ) ).
$
\item \emph{[Matchings]}. For every $j\in C(i)$, $(i,j)\in {\cal MG}(T_i)$ iff
$W_{ij}>\max(MS(j), MS_H(i))$, where $H$ is the subgraph of $T_i$ obtained by deleting $(i,j)\cup T_j$.
\end{enumerate}
\end{proposition}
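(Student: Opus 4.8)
The plan is to prove the three statements by induction on the size (equivalently, the height) of the subtree $T_i$, treating the independent‑set claim (1) by itself and the two matching claims (2) and (3) together. On an infinite tree everything below should be read on the almost‑sure event that every relevant influence blocking subgraph is finite, which reduces matters to the finite case.

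\emph{Independent sets.} I first record the elementary fact that, since $GREEDY$ processes nodes in decreasing order of (distinct) weight, a node $z$ belongs to ${\cal IG}(G)$ precisely when none of its \emph{heavier} neighbours lies in ${\cal IG}(G)$: the forward direction holds because ${\cal IG}(G)$ is an independent set, and the reverse because $z$ can be deleted only by the acceptance of a heavier neighbour, so if none is accepted then $z$ survives until its turn and is accepted. I apply this with $G=T_i$ and $z=i$, whose neighbours in $T_i$ are exactly $C(i)$. For a child $j$ with $W_j>W_i$, the induced subgraph on $V(T_j)$ is an i.b.s.\ of $T_i$ — the only node of $T_j$ with a neighbour outside $T_j$ is $j$, whose outside neighbour is $i$, and $W_j>W_i$ — so Lemma~\ref{lemma:SBOUNDEDSAYSITALL} gives $j\in{\cal IG}(T_i)\iff j\in{\cal IG}(T_j)$, and the inductive hypothesis gives $S(j)=W_j I(j\in{\cal IG}(T_j))$. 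Since the weights are positive and distinct, the condition ``no heavier child of $i$ lies in its own greedy independent set'' is equivalent to $\max_{j\in C(i)}S(j)<W_i$, whence $W_iI(i\in{\cal IG}(T_i))=W_iI\big(W_i>\max_{j\in C(i)}S(j)\big)=S(i)$; the base case of a leaf is immediate.

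\emph{Matchings: the localization step.} Order $C(i)=\{j_1,j_2,\dots\}$ so that $W_{ij_1}>W_{ij_2}>\cdots$. The key step is a \emph{localization} claim: when the edge $(i,j)$ is reached during the run of $GREEDY$ on $T_i$, the endpoint $j$ is already matched if and only if $MS(j)>W_{ij}$. Indeed, among the edges of $T_i$ lying outside $E(T_j)$ only $(i,j)$ meets $V(T_j)$, and it has weight $W_{ij}$; hence every edge of $E(T_j)$ of weight larger than $W_{ij}$ is processed before $(i,j)$ and, by a straightforward induction on decreasing weight within $E(T_j)$, receives the same accept/reject decision under $GREEDY$ on $T_i$ as under $GREEDY$ on $T_j$. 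Therefore $j$ is matched at the moment $(i,j)$ is processed iff some edge at $j$ of weight $>W_{ij}$ is accepted by $GREEDY(T_j)$, iff the edge matching $j$ in ${\cal MG}(T_j)$ has weight $>W_{ij}$, which by the inductive hypothesis (statement (2) for $T_j$) says exactly $MS(j)>W_{ij}$. Since moreover $i$ is still free when $(i,j_m)$ is processed iff none of $(i,j_1),\dots,(i,j_{m-1})$ was accepted, unrolling the resulting recursion yields the characterization
\[
(i,j_m)\in{\cal MG}(T_i)\iff\Big(\forall\,l<m:\ W_{ij_l}<MS(j_l)\Big)\ \wedge\ \Big(W_{ij_m}>MS(j_m)\Big),
\]
which I will call $(\star)$.

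\emph{Matchings: deducing (2) and (3).} From $(\star)$, the unique edge of ${\cal MG}(T_i)$ incident to $i$ (if any) is $(i,j_{m^\ast})$, where $m^\ast$ is the least index with $W_{ij_{m^\ast}}>MS(j_{m^\ast})$; as the $W_{ij_l}$ strictly decrease, that same index maximizes $W_{ij_l}I(W_{ij_l}>MS(j_l))$, so $MS(i)=\max_{j\in C(i)}W_{ij}I(W_{ij}>MS(j))$ equals $\max_{j\in C(i)}W_{ij}I\big((i,j)\in{\cal MG}(T_i)\big)$, which is (2). For (3), fix $j=j_m$; with $H=T_i\setminus\big((i,j)\cup T_j\big)$ one has $MS_H(i)=\max_{l\ne m}W_{ij_l}I(W_{ij_l}>MS(j_l))$, using that the bonus of $j_l$ computed in $T_{j_l}$ is $MS(j_l)$. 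Terms with $l>m$ are automatically $<W_{ij}$, while a term with $l<m$ is $<W_{ij}$ iff $W_{ij_l}<MS(j_l)$; hence $W_{ij}>\max\big(MS(j),MS_H(i)\big)$ is equivalent to $W_{ij}>MS(j)$ together with $W_{ij_l}<MS(j_l)$ for all $l<m$, which by $(\star)$ is precisely $(i,j)\in{\cal MG}(T_i)$.

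\emph{Main obstacle.} For independent sets the decoupling of $T_j$ from the rest of $T_i$ is handed to us cleanly by the influence‑blocking‑subgraph machinery of Lemmas~\ref{lemma:quiteacharacter} and~\ref{lemma:SBOUNDEDSAYSITALL}; the edge analogue does not apply as neatly, since $E(T_j)$ is an edge i.b.s.\ of $T_i$ only under the restrictive hypothesis that $W_{ij}$ lies below every edge of $T_j$ incident to $j$. So the genuine work is the localization claim for matchings — arguing that the single ``bridge'' edge $(i,j)$ is too light to disturb the greedy decisions on the heavier edges of $T_j$, which is exactly what lets statement (2) for the strictly smaller tree $T_j$ be invoked. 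The remaining points — unrolling the recursion into $(\star)$ and checking that all the quantities involved are well defined — are mechanical.
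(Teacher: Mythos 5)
Your proof is correct and follows essentially the same strategy as the paper: induction on the tree combined with the observation that the high-weight edges of the subtree $T_j$ are processed by $GREEDY$ identically whether one runs $GREEDY$ on $T_i$ or on $T_j$, because the only bridge edge $(i,j)$ has lower weight and is therefore processed later. The minor differences are organizational: for part~1 you route the decoupling of $T_j$ from $T_i$ through Lemma~\ref{lemma:SBOUNDEDSAYSITALL} by noting that $V(T_j)$ is an i.b.s.\ of $T_i$ when $W_j>W_i$, whereas the paper argues this directly; and for parts~2--3 you first distill the recursion into the explicit characterization $(\star)$ and then read off both (2) and (3) from it, which is a slightly cleaner packaging than the paper's approach of singling out the child $j_1$ with largest $W_{ij_1}$ among those satisfying $W_{ij_1}>MS(j_1)$ and then proving (3) separately by a brief sketch. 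The substance is the same.
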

\begin{proof}
Let $d$ be the depth of $T$.
We first prove part 1. The proof proceeds by induction on the depth of a node, starting from nodes at depth $d$.
Thus for the base case, suppose $i$ belongs to level $d$ of $T$, and, as a result, it is a leaf. Then
$S(i)=W_i$. On the other hand, $T_i=\{i\}$ and $I(i \in {\cal IG}(T_i)) = 1$, and the claim follows.

For the induction part assume that the hypothesis is true for all nodes at depth $\geq k+1$ for  $k\le d-1$.
Let $i$ be some node at depth $k$.  Observe that $GREEDY$ selects
node $i$ for inclusion in ${\cal IG}(T_i)$ iff $i$ is not adjacent
to any nodes in $T_i$ that are selected by $GREEDY$ prior to node
$i$ being examined by $GREEDY$.  The  set of nodes in $T_i$ examined by
$GREEDY$ before $i$ are those nodes $j$ such that $W_j > W_i$.  Thus the event
$i \in {\cal IG}(T_i)$ occurs iff for all $j \in C(i)$ s.t.
$W_j > W_i$, we have $j \notin {\cal IG}(T_i)$. We claim that for each such $j$,
$j \notin {\cal IG}(T_i)$ iff $j \notin {\cal IG}(T_j)$. Indeed,
the event $j \notin {\cal IG}(T_i)$ is determined by a subgraph $H$ of $T_i$ induced by nodes with
weights at least $W_j$. Therefore this subgraph does not include $i$ if $W_j>W_i$. It follows that
$H\cap T_j$ is disconnected from the rest of $H$ and then the claim follows.

We conclude that $i\in {\cal IG}(T_i)$ iff
for each $j\in C(i)$ either $W_i>W_j$, or  $j \notin {\cal IG}(T_j)$.
Combining, $i\in {\cal IG}(T_i)$ iff $W_i>\max_{j\in  C(i)}W_j I(j \in {\cal IG}(T_j))$, but by the inductive hypothesis,
$W_j I(j \in {\cal IG}(T_j))=S(j)$. Therefore, $i\in {\cal IG}(T_i)$ iff
$W_i>\max_{j\in  C(i)} S(j)$ and the inductive assertion follows.

We now prove part 2.
The proof is again by induction on the depth of a node.
Base case: $i$ is at lowest depth $d$ and thus a leaf.  In this case, $C(i) = \emptyset$,
and thus $\max_{j \in C(i)} W_{ij} I (W_{ij} \in {\cal MG}( T_j ) ) = MS(i) = 0$.
For the induction step
assume that the induction hypothesis is true for all nodes at depth $\geq k+1, k\le d-1$.
Let $i$ be some node at depth $k$. If $i$ is a leaf we use the same argument as for the base case.
Thus assume $i$ is  not a leaf. Suppose $(i,j_1)\in {\cal MG}(T_i)$.
We claim that then $W_{ij_1}>MS(j_1)$. Indeed
Observe that $GREEDY$ selects $(i,j_1)$  for inclusion in ${\cal MG}(T_i)$ iff $(i,j_1)$ is not
itself adjacent to any edges in $T_i$ that are selected by $GREEDY$
prior to $(i,j_1)$ being examined by $GREEDY$. Thus the event $(i,j_1) \in {\cal MG}(T_i)$
implies that
for all $l \in C(j_1)$ s.t.
$W_{j_1,l} > W_{ij_1}$, we have $(j_1,l) \notin {\cal MG}(T_i)$.
Repeating the argument used for the case of independent sets, we claim that the event
$(j_1,l) \notin {\cal MG}(T_i)$ occurs iff the event $(j_1,l) \notin {\cal MG}(T_{j_1})$ occurs.
Therefore, the event $(i,j_1) \in {\cal MG}(T_i)$ implies that for each $l\in C(j_1)$ either
$W_{ij_1}>W_{j_1,l}$ or $(j_1,l) \notin {\cal MG}(T_{j_1})$, namely the event
$W_{ij_1}>\max_{l\in C(j_1)}W_{j_1,l}I((j_1,l) \in {\cal MG}(T_{j_1}))$ occurs,
which by induction hypothesis is equivalent to the event
$W_{ij_1}>MS(j_1)$, as claimed.

We now complete the proof of the induction step. First assume that $W_{ij}<MS(j)$ for all $j\in C(i)$.
Then from the preceding claim we obtain that no edge $(i,j)$ belongs to ${\cal MG}(T_{i})$ and the claim
is established. Otherwise, let $j_1\in C(i)$ be such that $W_{ij_1}$ is the largest weight
among edges $W_{ij}, j\in C(i)$ satisfying $W_{ij}>MS(j)$.
By the choice of $j_1$ and the preceding claim it follows
that if $W_{ij'}>W_{ij_1}$, then $(i,j')\notin {\cal MG}(T_{i})$. Thus it remains to show
that $(i,j_1)\in {\cal MG}(T_{i})$. The $GREEDY$ examines $(i,j_1)$ after edges $(i,j)$
with $W_{ij}>W_{ij_1}$, but before edges $(i,j)$ with $W_{ij}<W_{ij_1}$. Since edges with $W_{ij}>W_{ij_1}$
were rejected, then whether $(i,j_1)$ is accepted
is determined completely by $(i,j_1)$ plus the subtree $T(j_1)$. Repeating the argument above, we see
that $(i,j_1)$ is accepted iff
$W_{ij_1}>\max_{l\in C(j_1)}W_{j_1,l}I((j_1,l) \in {\cal MG}(T_{j_1}))$, which, by the inductive
hypothesis occurs iff $W_{ij_1}>MS(j_1)$, which is satisfied by the choice of $j_1$.

To prove part 3, we repeat the arguments used to prove parts 1 and 2 to observe that the $GREEDY$
selects $(i,j)$ iff for all neighbors $l$ of $j$ in $T(j)$ with $W_{j,l}>W_{ij}$, the edge $(j,l)$ is rejected
by $GREEDY$ in $T(j)$, and for all neighbors $l\ne j$ of $i$ in $T(i)\setminus ((i,j)\cup T(j))$,
with $W_{il}>W_{ij}$, the edge $(i,l)$ is rejected by $GREEDY$ in $T(i)\setminus ((i,j)\cup T(j))$.
\end{proof}

\subsection{Distributional recursion for bonuses}
We now introduce two sequences of recursively defined
random variables $\lbrace X_{d,r} \rbrace , d\ge 0,$ and $\lbrace Y_{d,r} \rbrace ,d\ge 0$ for any given integer $r\ge 2$. These sequences will play a key
role in understanding the probability distribution of the bonuses introduced in the previous subsection.

Given a positive integer $k$, let $B(k)$
denote a Bernoulli random variable with $\pr(B(k)=1)=1/k$.
Define
\begin{align}
X_{d,r} &\stackrel{D}{=} \begin{cases} 1 & d = 0; \\
(X_{d-1,r}^{(r)} + 1)B({{X_{d-1,r}^{(r)}} + 1}) & d \geq 1; \label{eq:recursionX}
\end{cases}\\
Y_{d,r} &\stackrel{D}{=} \begin{cases} 0 & d = 0;\\
\Big((Y_{d-1,r} + 1)B({{Y_{d-1,r}} + 1})\Big)^{(r)} & d \geq 1; \label{eq:recursionY}
\end{cases}
\end{align}
For an integer-valued r.v. $Z \geq 1$, the joint probability distribution of $Z$, $B(Z)$ is assumed to be
$\pr(Z=z,B=1)=(1/z)\pr(Z=z)$.

It is immediate from these recursions that for all $d\ge 1$
\begin{align}
\E[X_{d,r}]=1, ~~\E[Y_{d,r}]=r. \label{eq:expectationXY}
\end{align}

In the following lemma we show that the distribution of the bonuses $S$ and $MS$ on regular trees
have a very simple representation in terms of the constructed sequences $\lbrace X_{d,r} \rbrace , \lbrace Y_{d,r} \rbrace$.
\begin{lemma}\label{lemma:WBONUSDIST}
Suppose the nodes  and edges of a tree $T(r,d)$ with root $0$ are equipped with i.i.d.
weights generated according to a continuous distribution $F$. Then
$S(0)\stackrel{D}{=} W^{<X_{d,r}>}$,  and $MS(0)\stackrel{D}{=} W^{<Y_{d,r}>}$,
where $W\stackrel{D}{=}F$.
\end{lemma}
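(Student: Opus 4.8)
The plan is to prove both distributional identities by induction on the depth $d$ of the tree, using the recursions for $S$ and $MS$ from Definition~\ref{defi:SBONUSDEF} together with the probabilistic recursions~\eqref{eq:recursionX}--\eqref{eq:recursionY}. The key observation, exploited throughout, is that the event controlling the bonus at a node $i$ is a comparison of $W_i$ (or an incident edge weight) against the bonuses of the children, and that since all weights are i.i.d.\ continuous, conditioning on the relative order of weights decouples cleanly: the r.v.\ $W^{<X>}$ depends only on the \emph{random count} $X$ of i.i.d.\ copies of $W$ being maximized over, and $\max$ over a random number of i.i.d.\ samples composes nicely under taking further maxima and under ``the maximum survives a comparison'' events.

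First I would set up the independent-set case. For the base case $d=0$, the root $0$ is a leaf, so $S(0)=W_0\stackrel{D}{=}W=W^{<1>}=W^{<X_{0,r}>}$. For the inductive step, suppose the claim holds for depth $d-1$, so that for each of the $r$ children $j_1,\dots,j_r$ of the root, $S(j_\ell)\stackrel{D}{=}W^{<X_{d-1,r}>}$, and these are independent across $\ell$ (subtrees are weight-disjoint). The crucial lemma I would isolate and prove is a \emph{counting identity}: if $W_0$ together with $M$ i.i.d.\ copies of $W$ are all i.i.d.\ continuous, and $N\in\{0,\dots,M\}$ is the number of those copies that are maximal within a designated sub-collection (made precise below), then the event $W_0>\max(\text{the relevant bonuses})$ combined with the resulting value of $W_0$ has exactly the law of $W^{<(N+1)B(N+1)>}$. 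Concretely: $S(0)=W_0 I(W_0>\max_\ell S(j_\ell))$, and each $S(j_\ell)$ is, conditionally on its count $X_{d-1,r}^{(\ell)}=x_\ell$, the maximum of $x_\ell$ i.i.d.\ copies of $W$. Pooling all these copies, $S(0)$ is $W_0$ if $W_0$ exceeds the maximum of the pooled collection of $\sum_\ell x_\ell = X_{d-1,r}^{(r)}$ copies (here I use that the $r$ children contribute an $r$-fold convolution of counts), and $0$ otherwise. Now I would invoke the standard fact that among $W_0$ and $m$ further i.i.d.\ continuous copies, $W_0$ is the overall maximum with probability $1/(m+1)$, and \emph{conditionally on being the maximum} it is distributed as $W^{<m+1>}$; conditionally on not being the maximum, $S(0)=0$. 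Matching this against the joint law $\pr(Z=z,B=1)=(1/z)\pr(Z=z)$ stipulated for $(Z,B(Z))$ with $Z=X_{d-1,r}^{(r)}+1$ gives exactly $S(0)\stackrel{D}{=}W^{<(X_{d-1,r}^{(r)}+1)B(X_{d-1,r}^{(r)}+1)>}=W^{<X_{d,r}>}$, completing the induction.

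Next I would handle the matching case analogously, with one structural twist. Again $d=0$ is the base case: $MS(0)=0=W^{<0>}=W^{<Y_{0,r}>}$. For the inductive step, $MS(0)=\max_{j\in C(0)}\bigl(W_{0j}I(W_{0j}>MS(j))\bigr)$. By the inductive hypothesis each $MS(j)\stackrel{D}{=}W^{<Y_{d-1,r}>}$ independently; for a single child $j$, the quantity $W_{0j}I(W_{0j}>MS(j))$ is, by the same ``is the maximum / conditioned on being the maximum'' argument applied to $W_{0j}$ against the $Y_{d-1,r}$ pooled copies underlying $MS(j)$, distributed as $W^{<(Y_{d-1,r}+1)B(Y_{d-1,r}+1)>}$. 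Taking the maximum over the $r$ children — which are independent — turns this into the $r$-fold ``max-convolution'' of counts, i.e.\ $W^{<\bigl((Y_{d-1,r}+1)B(Y_{d-1,r}+1)\bigr)^{(r)}>}=W^{<Y_{d,r}>}$, matching~\eqref{eq:recursionY}. Here I am careful to note the asymmetry with the IS recursion: for independent sets the counts from the children are \emph{added} before the comparison (because $S(0)$ compares $W_0$ against a single pooled maximum), whereas for matchings the comparison happens per-child and the pooling (convolution of counts) happens \emph{after} the Bernoulli thinning.

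The main obstacle is making the pooling-of-maxima argument rigorous while tracking the exact joint law of count and Bernoulli indicator. The subtlety is that ``$S(j_\ell)$ is the max of $X_{d-1,r}^{(\ell)}$ i.i.d.\ copies of $W$'' is a \emph{distributional} statement from the inductive hypothesis, not an almost-sure coupling, so I need either (i) to build an explicit coupling in which each child's bonus literally equals a max over a random-sized batch of fresh i.i.d.\ copies, with all batches and $W_0$ jointly independent, and then argue the identity on that coupled space; or (ii) to work entirely with generating functions / distribution functions: writing $\pr(W^{<X>}\le t)=\phi_X(F(t))$ for the relevant p.g.f.\ and verifying the recursion $\phi_{X_{d,r}}$, $\phi_{Y_{d,r}}$ transforms correctly under the operations ``add $W_0$ to the pool, keep it only if maximal'' and ``take max over independent children.'' Approach (ii) is cleaner and I would use it: the operation $X\mapsto(X+1)B(X+1)$ corresponds at the level of $W^{<\cdot>}$ exactly to ``adjoin one fresh copy and retain the maximum only when the fresh copy wins,'' and $X\mapsto X^{(r)}$ corresponds to ``take the maximum of $r$ independent such collections,'' and one checks these are precisely the operations relating $S$ (resp.\ $MS$) at depth $d$ to depth $d-1$. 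The bookkeeping of which convolution/thinning comes first in each of the two recursions is where care is required, but no hard estimate is involved.
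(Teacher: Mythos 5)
Your proposal is correct and follows essentially the same approach as the paper: induction on the depth $d$, using the inductive hypothesis on the children's bonuses and then identifying the resulting law via the $r$-fold convolution of counts together with the Bernoulli thinning step (with convolve-then-thin for $S$ and thin-then-convolve for $MS$, exactly the asymmetry you flag). The paper carries out the same argument directly at the level of random variables (implicitly coupling each child's bonus to the max of a random-sized batch of fresh i.i.d.\ copies) rather than routing through generating functions, but that is a cosmetic difference, not a different proof.
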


\begin{proof}We first prove the identity for $S(0)$.  The proof proceeds by induction on $d$.
For the base case, suppose $d = 0$.  Then $S(0) \stackrel{D}{=} W$ and the conclusion trivially holds.
For the  induction step, assume the hypothesis is true for all $d' < d$.  Let $T_j$ denote the depth $d-1$
subtree of $T(r,d)$ rooted at the $j$-th child of $0$. By the inductive hypothesis we have that
$S(j)$ is distributed as $W^{<X_{d-1,r}>}$.
This implies that $\max_{j \in C(i)} S(j)$ is distributed as
$(W^{<X_{d-1,r}>})^{<r>}$.  Since $W_{0}$ is drawn independent of $\max_{j \in C(i)} S(j)$, we have that
$S(0) \stackrel{D}{=} W_0 I (W_0 > (W^{<X_{d-1,r}>})^{<r>})$. The event underlying $I(\cdot)$ means
that $W_0$ is the largest among $K+1$ random variables distributed according to $F$, where
$K\stackrel{D}{=}X_{d-1,r}^{(r)}$. The required identity then follows from the definition of $X_{d,r}$.

We now establish the identity for $MS(0)$ using induction in $d$.
For the base case $d = 0$ we have  $MS(0)=0$ and the conclusion trivially holds.
For the induction case, assume that the hypothesis is true for all $d' < d$.  Let again $T_j$ denote the depth $d-1$
subtree of $T(r,d)$ rooted at the $j$-th child of $0$.  By the inductive hypothesis
$MS(j)$ is distributed as $W^{<Y_{d-1,r}>}$. Since
$W_{0j}$ is drawn i.i.d., we have
$MS(0) \stackrel{D}{=}
\max_{j \in C(0)} W_{0j} I (W_{0j} > W^{<Y_{d-1,r}>})$, where $W_{0j} I (W_{0j} > W^{<Y_{d-1,r}>})$ is independent for each $j$.
Note that for each $j$, $W_{0j} I (W_{0j} > W^{<Y_{d-1,r}>})$ is by definition distributed as the maximum of $(Y_{d-1,r} + 1)B(Y_{d-1,r} + 1)$ i.i.d. realizations of $W$.  Thus $\max_{j \in C(0)} W_{0j} I (W_{0j} > W^{<Y_{d-1,r}>})$ is distributed as the maximum of $r$ independent samples of the maximum of $(Y_{d-1,r} + 1)B(Y_{d-1,r} + 1)$ i.i.d. realizations of $W$, which by the basic properties of maxima is distributed as the maximum of $( (Y_{d-1,r} + 1)B(Y_{d-1,r} + 1))^{(r)}$ i.i.d. realizations of $W$, from which the lemma follows.
\end{proof}
Recall that $\phi_X$ denotes the probability generating
function for a discrete r.v. $X$.
\begin{lemma}\label{lemma:PROPPELWATER}
Suppose the nodes  and edges of a tree $T=T(r,r-1,d)$ with root $0$ are equipped with i.i.d.
weights generated from a continuous distribution $F$.
Then
\begin{align*}
\E[W_{0} I({0} \in {\cal IG}(T))] = \E[W\phi^r_{{X_{d-1,r-1}}}(F(W))],
\end{align*}
and for every $j\in N(0)$
\begin{align*}
\E[W_{0j} I( (0,j) \in {\cal MG}(T))] = \E[W\phi_{{Y_{d-1,r-1} + Y_{d,r-1}}}(F(W))],
\end{align*}
where $W$ is distributed according to $F$, and random variables $W,X_{d-1,r-1},Y_{d-1,r-1},Y_{d,r-1}$
are independent.
\end{lemma}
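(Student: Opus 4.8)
The plan is to combine the structural description of $GREEDY$ on trees (Proposition~\ref{prop:SUPERGOODENOUGH}) with the distributional identities for the bonuses (Lemma~\ref{lemma:WBONUSDIST}), and then finish with an elementary conditioning argument. I handle the independent set identity first. By part 1 of Proposition~\ref{prop:SUPERGOODENOUGH} applied to the root, $W_0 I(0\in{\cal IG}(T)) = S(0)$, so it suffices to compute $\E[S(0)]$. In $T=T(r,r-1,d)$ the root $0$ has $r$ children, and the subtree hanging off each child $j$ is isomorphic to $T(r-1,d-1)$, since below the root every non-leaf node has $r-1$ children. Hence Lemma~\ref{lemma:WBONUSDIST} gives $S(j)\stackrel{D}{=}W^{<X_{d-1,r-1}>}$ for each $j$, and the $S(j)$ are mutually independent and independent of $W_0$, as they depend on pairwise disjoint subtrees with i.i.d.\ weights. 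Using $S(0)=W_0 I(W_0>\max_{j\in C(0)}S(j))$ together with the elementary fact that the maximum of $r$ independent copies of $W^{<X>}$ is distributed as $W^{<X^{(r)}>}$, we obtain $S(0)\stackrel{D}{=}W_0 I(W_0>W^{<K>})$ where $K\stackrel{D}{=}X_{d-1,r-1}^{(r)}$ and $K$ is independent of $W_0$.

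Conditioning then finishes this case. Given $W_0=w$ (which is positive almost surely) and $K=k$, the event $\{W_0>W^{<k>}\}$ is precisely the event that $w$ exceeds $k$ i.i.d.\ $F$-distributed samples, of probability $F(w)^k$; averaging over $K$ gives $\E[S(0)\mid W_0=w]=w\,\E[F(w)^K]=w\,\phi_K(F(w))$, and since the p.g.f.\ of an $r$-fold convolution is the $r$-th power of the p.g.f., $\phi_K=\phi^r_{X_{d-1,r-1}}$. Taking the expectation over $W_0\stackrel{D}{=}W$ yields $\E[W_0 I(0\in{\cal IG}(T))]=\E[W\phi^r_{X_{d-1,r-1}}(F(W))]$.

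For the matching identity I apply part 3 of Proposition~\ref{prop:SUPERGOODENOUGH} with $i$ equal to the root $0$ and $j$ the prescribed child: $(0,j)\in{\cal MG}(T)$ iff $W_{0j}>\max\bigl(MS(j),\,MS_H(0)\bigr)$, where $H=T\setminus\bigl((0,j)\cup T_j\bigr)$. Here $T_j\cong T(r-1,d-1)$, so Lemma~\ref{lemma:WBONUSDIST} gives $MS(j)\stackrel{D}{=}W^{<Y_{d-1,r-1}>}$; and $H$ is the depth-$d$ tree rooted at $0$ in which $0$ has $r-1$ children (its original $r$ children minus $j$) and every other non-leaf node has $r-1$ children, i.e.\ $H\cong T(r-1,d)$, so $MS_H(0)\stackrel{D}{=}W^{<Y_{d,r-1}>}$. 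These two bonuses are determined by the edge weights of the disjoint subgraphs $T_j$ and $H$, hence are independent of each other and of $W_{0j}$; and the maximum of independent copies of $W^{<Y_{d-1,r-1}>}$ and $W^{<Y_{d,r-1}>}$ is distributed as $W^{<Y_{d-1,r-1}+Y_{d,r-1}>}$. Running the same conditioning step (condition on $W_{0j}=w$ and on $Y_{d-1,r-1}+Y_{d,r-1}=m$, obtaining probability $F(w)^m$) gives $\E[W_{0j}I((0,j)\in{\cal MG}(T))]=\E[W\phi_{Y_{d-1,r-1}+Y_{d,r-1}}(F(W))]$.

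Every computation here is routine; the one point requiring care is the matching case, where one must observe that the pruned tree $H$ --- obtained by deleting the edge $(0,j)$ together with the subtree $T_j$ --- still has depth $d$, which is exactly why the second summand in the generating function carries the index $d$ rather than $d-1$ and why the relevant root bonus is $Y_{d,r-1}$ and not $Y_{d-1,r-1}$. More generally, keeping the parameter pairs straight throughout (child-subtrees of type $T(r-1,d-1)$ versus the pruned root-tree of type $T(r-1,d)$) is the main bookkeeping hazard.
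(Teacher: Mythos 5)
Your proof is correct and follows essentially the same route as the paper's: reduce to bonuses via Proposition~\ref{prop:SUPERGOODENOUGH}, identify the relevant subtree types (children subtrees $T(r-1,d-1)$, pruned tree $T(r-1,d)$), invoke Lemma~\ref{lemma:WBONUSDIST}, and finish by conditioning on the root weight and reading off the p.g.f. The bookkeeping you flag at the end --- the pruned tree retaining depth $d$ so that the second summand carries index $d$ --- is exactly the observation the paper makes as well.
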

\begin{proof}
We first prove the result for independent sets.  By Proposition~\ref{prop:SUPERGOODENOUGH}
and  the definition of $S({0})$,
$\E[W_{0} I(0 \in {\cal IG}(T))] = E[W_{0}I(W_{0} > \max_{j \in C({0})} S(j))]$.
By Lemma~\ref{lemma:WBONUSDIST}, for each $j \in C(0)$, \\$S(j)
\stackrel{D}{=} W^{<X_{d-1,r-1}>}$.  It then follows that $\max_{j \in C(0)} S(j) \stackrel{D}{=}
W^{<X_{d-1,r-1}^{(r)}>}$, which is independent from $W_{0}$. Thus  we
have
\begin{eqnarray*}
\E[W_0 I(0 \in {\cal IG}(T))] &=&
\E[ \E[W_0 I(0 \in {\cal IG}(T)) | W_0]] \\&=&
\E[W_0 \E[I (W^{<X_{d-1,r-1}^{(r)}>} \leq W_0) | W_0]] \\&=&
\E[W_0 \sum_{k=0}^{\infty} (F(W_0))^k \pr(X_{d-1,r-1}^{(r)} = k)] \\&=&
\E[W_0 \phi_{X_{d-1,r-1}^{(r)}}(F(W_0))] \\&=&
\E[W_0 \phi^r_{X_{d-1,r-1}}(F(W_0))].
\end{eqnarray*}
We now prove the result for matchings. From the third part of Proposition~\ref{prop:SUPERGOODENOUGH}
we have
\begin{align*}
\E[W_{0j} I((0,j) \in {\cal IG}(T))] = \E[ W_{0j} I (W_{0j}> \max( MS_{H}(0),  MS_{T_j}(j)))],
\end{align*}
where $H$ is the subgraph of $T$ obtained by deleting $(0,j)$ and $T_j$ - the subtree of $T$ rooted at $j$.
Observe that $H$ is an $r-1$ regular tree with depth $d$, namely it is $T(r-1,d)$,
and $T_j$ is an $r-1$ regular tree with depth $d-1$.
Thus applying Lemma~\ref{lemma:WBONUSDIST}, $MS_H(0)\stackrel{D}{=} W^{<Y_{d,r-1}>}$
and $MS_{{T_j}}(j) \stackrel{D}{=}W^{<Y_{d-1,r-1}>}$. Repeating the line of argument used for independent
sets,  replacing $X_{d-1,r-1}^{(r)}$ with
$Y_{d-1,r-1} + Y_{d,r-1}$, we obtain the result.
\end{proof}

\subsection{Limiting distribution of $X_{d,r}$ and $Y_{d,r}$}\label{section:convsec}
In this subsection we show that the sequences $\lbrace X_{d,r} \rbrace, d\ge 0,$ and $\lbrace Y_{d,r} \rbrace, d\ge 0$  converge
in distribution to  some limiting random variables, by
exploiting their recursive definitions. We then use this convergence along
with Lemma~\ref{lemma:PROPPELWATER} to express the quantities of interest in terms
of the p.g.f. of these limiting random variables.
\begin{lemma}\label{XYCONVERGE}
There exist r.v. $X_{\infty,r}, Y_{\infty,r}$ such that for all $k \geq 0$,
$\lim_{d \rightarrow\infty} \pr(X_{d,r} = k)=\pr(X_{\infty,r} = k)$
and $\lim_{d \rightarrow\infty} \pr(Y_{d,r} = k)=\pr(Y_{\infty,r} = k)$.
\ignore{
Moreover, for every $m\ge 1$,
\begin{align}
\lim_{d \rightarrow\infty}\E[X_{d,r-1}^m]=\E[X_{\infty,r-1}^m],~~
\lim_{d \rightarrow\infty}\E[Y_{d,r-1}^m]=\E[Y_{\infty,r-1}^m], \label{eq:limitsmoments}
\end{align}
and for every $s\in [0,1]$,
\begin{align*}
\lim_{d \rightarrow\infty}\phi_{X_{d,r-1}}(s)=\phi_{X_{\infty,r-1}}(s),~~
\lim_{d \rightarrow\infty}\phi_{Y_{d,r-1}}(s)=\phi_{Y_{\infty,r-1}}(s).
\end{align*}}
\end{lemma}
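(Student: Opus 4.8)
The plan is to reduce the statement to pointwise convergence on $[0,1]$ of the probability generating functions (p.g.f.'s), which for $\rZ_+$-valued random variables amounts exactly to the asserted convergence of the point masses, and then to establish that convergence by analyzing a pair of scalar functional iterations. Write $g_d=\phi_{X_{d,r}}$ and $h_d=\phi_{Y_{d,r}}$. The first step is to extract from \eqref{eq:recursionX}--\eqref{eq:recursionY} the clean recursions
\begin{align*}
g_0(s)=s,\qquad g_d(s)=1-\int_s^1 g_{d-1}(t)^r\,dt\quad(d\ge 1),\\
h_0(s)=1,\qquad h_d(s)=\Big(1-\int_s^1 h_{d-1}(t)\,dt\Big)^{r}\quad(d\ge 1).
\end{align*}
These come from a direct computation: if $Z\ge 1$ is integer-valued with p.g.f. $\psi$, then under the coupling prescribed just after \eqref{eq:recursionY} the variable $(Z+1)B(Z+1)$ equals $m\ge 1$ with probability $\tfrac1m\pr(Z=m-1)$ and equals $0$ otherwise, so its p.g.f. is $s\mapsto 1-\int_s^1\psi(t)\,dt$; applying this with $Z=X_{d-1,r}^{(r)}$ (whose p.g.f. is $g_{d-1}^r$), and respectively to the inner variable of \eqref{eq:recursionY} followed by an $r$-fold convolution, yields the two displays.

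The second step analyzes these iterations. Both maps $\Phi\colon g\mapsto\bigl(s\mapsto 1-\int_s^1 g^r\bigr)$ and $\Psi\colon h\mapsto\bigl(s\mapsto(1-\int_s^1 h)^r\bigr)$ are \emph{order-reversing} on the set of p.g.f.'s on $[0,1]$ under the pointwise order. Moreover, for \emph{any} p.g.f. $g$ one has $\Phi(g)(s)\ge 1-\int_s^1 1\,dt=s=g_0(s)$ and $\Psi(g)(s)\le 1=h_0(s)$. Hence $g_0\le g_d$ and $h_d\le h_0$ for every $d\ge 1$; pushing these inequalities through the order-reversing maps and iterating (using that $\Phi^2,\Psi^2$ are order-preserving) shows that $(g_{2k})_k$ is pointwise nondecreasing and $(g_{2k+1})_k$ pointwise nonincreasing, with the two monotonicities swapped for $(h_d)$, and that $g_{2k}\le g_{2k+1}$, $h_{2k+1}\le h_{2k}$. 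Being monotone and bounded in $[0,1]$, each subsequence converges pointwise; call the limits $g_{\mathrm{ev}}\le g_{\mathrm{od}}$ and $h_{\mathrm{od}}\le h_{\mathrm{ev}}$. By dominated convergence $\Phi$ interchanges $g_{\mathrm{ev}}$ and $g_{\mathrm{od}}$, and $\Psi$ interchanges $h_{\mathrm{od}}$ and $h_{\mathrm{ev}}$.

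The third step closes the gap between the two limits. Put $\delta=g_{\mathrm{od}}-g_{\mathrm{ev}}\ge 0$; since $\Phi$ swaps the limits, $\delta(s)=\int_s^1\bigl(g_{\mathrm{od}}(t)^r-g_{\mathrm{ev}}(t)^r\bigr)\,dt$, and since $0\le g_{\mathrm{ev}}\le g_{\mathrm{od}}\le 1$ we have $g_{\mathrm{od}}^r-g_{\mathrm{ev}}^r\le r\,\delta$ pointwise, so $\delta(s)\le r\int_s^1\delta(t)\,dt$ with $\delta(1)=0$; a Grönwall-type argument (e.g. $s\mapsto e^{rs}\int_s^1\delta$ is nonnegative, nondecreasing, and vanishes at $s=1$) forces $\delta\equiv 0$. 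The same computation with $|a^r-b^r|\le r|a-b|$ on $[0,1]$ gives $h_{\mathrm{ev}}=h_{\mathrm{od}}$. Thus $g_d\to g_\infty$ and $h_d\to h_\infty$ pointwise on $[0,1]$, with $g_\infty(1)=h_\infty(1)=1$. Finally, since $\E[X_{d,r}]=1$ and $\E[Y_{d,r}]=r$ by \eqref{eq:expectationXY}, Markov's inequality gives uniform tail bounds, so both families are tight; every subsequential weak limit of $(X_{d,r})$ has p.g.f. $g_\infty$ on $[0,1)$ (test weak convergence against the bounded continuous function $x\mapsto s^x$) and is therefore uniquely determined, so defining $X_{\infty,r}$ to have p.g.f. $g_\infty$ (and $Y_{\infty,r}$ to have p.g.f. $h_\infty$) gives $X_{d,r}\Rightarrow X_{\infty,r}$ and $Y_{d,r}\Rightarrow Y_{\infty,r}$, which is precisely the claimed convergence of the point masses. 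I expect the main obstacle to be the fact that the functional recursions are order-\emph{reversing} rather than monotone: a single monotone passage to the limit is unavailable, and one must run the even/odd subsequence decomposition and then prove, via the Grönwall estimate, that the two resulting fixed points of $\Phi^2$ (resp. $\Psi^2$) coincide.
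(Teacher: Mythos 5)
Your proof is correct and takes a genuinely different route from the paper. The paper's proof never works with p.g.f.'s at this stage: it first establishes the existence of $\lim_{d\to\infty}\pr(X_{d,r}=0)$ by invoking the combinatorial correlation-decay machinery (Lemma~\ref{lemma:WBONUSDIST} identifies $\pr(X_{d,r}=0)$ with $\pr(0\notin{\cal IG}(T))$, and the last part of Theorem~\ref{theorem:GraphTreeRelation}, proved via influence blocking subgraphs, shows that sequence is Cauchy), and then obtains the limits of $\pr(X_{d,r}=k)$ for $k\ge 1$ by a direct induction on $k$, using that $\pr(X_{d,r}=k)=\tfrac1k\sum_{(k_1,\dots,k_r)}\prod_i\pr(X_{d-1,r}=k_i)$ involves only indices $k_i\le k-1$. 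Your argument instead isolates the clean integral recursions $g_d=1-\int_s^1 g_{d-1}^r$ and $h_d=(1-\int_s^1 h_{d-1})^r$ for the p.g.f.'s, observes the maps are order-reversing with $g_0$ below and $h_0$ above the entire iteration, extracts monotone even/odd subsequences, and closes the gap with a Gr\"{o}nwall estimate; tightness via $\E[X_{d,r}]=1$, $\E[Y_{d,r}]=r$ then upgrades pointwise p.g.f.\ convergence to convergence of point masses (both proofs use this last tightness step). The paper's route is shorter given the machinery already built up in Section~\ref{section:BLOCKINGSUBTREES}, but it is not self-contained: removing Theorem~\ref{theorem:GraphTreeRelation} breaks it. Your route is entirely analytic and self-contained, at the cost of having to handle the order-reversing (rather than monotone) nature of $\Phi$ and $\Psi$ via the even/odd decomposition and the Gr\"{o}nwall argument for uniqueness of the fixed point of $\Phi^2$ (resp. $\Psi^2$). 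As a small bonus your computation of the recursion for $g_d$ essentially re-derives the differential identity the paper later proves in Lemma~\ref{lemma:derivimport1}, so your argument anticipates Section~\ref{section:genfunsec}.
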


\begin{proof}
We begin by establishing  the existence of the limit $\lim_{d \rightarrow\infty} \pr(X_{d,r} = k)$
for $k=0$. The case of $k\ge 1$ will be established by induction.
Consider $T=T(r,d)$ with root $0$ whose nodes are weighted i.i.d. with an arbitrary continuous
distribution $F$. From Proposition~\ref{prop:SUPERGOODENOUGH}, part 1, we have that
$S(0)=0$ iff $0\notin {\cal IG}(T)$. Therefore by Lemma~\ref{lemma:WBONUSDIST}
\begin{align*}
\pr(S(0)=0)=\pr(0\notin {\cal IG}(T))=\pr(X_{d,r}=0).
\end{align*}
But the last quantity has a limit as $d\rightarrow\infty$ as asserted by the last part of
Theorem~\ref{theorem:GraphTreeRelation}.

Assume now that the limits exist for all $k'\le k-1$. We have
\begin{align*}
\pr(X_{d,r}=k)={1\over k}\pr(X_{d-1,r}^{(r)}=k-1)={1\over k}\sum_{(k_1,k_2,...,k_r)}\prod_{1\le i\le r}
\pr(X_{d-1,r}=k_i),
\end{align*}
where the sum is over all partitions $(k_1,k_2,...,k_r)$ with $k_i\ge 0, \sum_{1\le i\le r}k_i=k-1$.
Since $k_i\le k-1$ for each $i$, by the inductive assumption
the limits $\lim_{d\rightarrow\infty}\pr(X_{d-1,r}=k_i)$ exist. The same assertion then follows for
$\pr(X_{d,r}=k)$ and the proof is complete.

Define $X_{\infty,r}$ by $\pr(X_{\infty,r}=k)=\lim_{d\rightarrow\infty}\pr(X_{d,r}=k)$.
We need to show that $\sum_k\pr(X_{\infty,r}=k)=1$. Fix $\epsilon>0$ and $K>1/\epsilon$.
Applying Markov's inequality to (\ref{eq:expectationXY}) we have
$1\ge \sum_{0\le k\le K}\pr(X_{d,r}=k)\ge 1-1/K>1-\epsilon$. Then the same applies to the limits
as $d\rightarrow\infty$. The assertion then follows.

The proof for the matching case is similar.
\end{proof}

The recursion properties (\ref{eq:recursionX}) which are used to define $\lbrace X_{d,r} \rbrace , \lbrace Y_{d,r} \rbrace$ carry on to
$X_{\infty,r},Y_{\infty,r}$, which, as a result, satisfy recursive distributional equations.

\begin{lemma}\label{lemma:recursionXYinfty}
The following equality in distribution takes place
\begin{align*}
&X_{\infty,r}\stackrel{D}{=}(X_{\infty,r}^{(r)}+1)B(X_{\infty,r}^{(r)}+1), \\
&Y_{\infty,r}\stackrel{D}{=}\Big((Y_{\infty,r}+1)B(Y_{\infty,r}+1)\Big)^{(r)}.
\end{align*}
\end{lemma}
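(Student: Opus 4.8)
The plan is to pass to the limit $d\to\infty$ in the defining recursions (\ref{eq:recursionX}) and (\ref{eq:recursionY}), using the convergence established in Lemma~\ref{XYCONVERGE}. Concretely, for the independent set case, I would fix $k\ge 0$ and compute $\pr\big((X_{\infty,r}^{(r)}+1)B(X_{\infty,r}^{(r)}+1)=k\big)$. Using the convention $\pr(Z=z,B(Z)=1)=(1/z)\pr(Z=z)$ that is part of the definition, this probability equals $\tfrac1k\pr(X_{\infty,r}^{(r)}=k-1)$ for $k\ge 1$ (and the $k=0$ case is a complementary-probability statement handled separately), which in turn expands, exactly as in the proof of Lemma~\ref{XYCONVERGE}, as
\[
\frac1k\sum_{(k_1,\dots,k_r)}\prod_{1\le i\le r}\pr(X_{\infty,r}=k_i),
\]
the sum ranging over $k_i\ge 0$ with $\sum_i k_i=k-1$. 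On the other hand, the finite-$d$ recursion (\ref{eq:recursionX}) gives the identical expression with $X_{\infty,r}$ replaced by $X_{d-1,r}$, namely $\pr(X_{d,r}=k)=\tfrac1k\sum_{(k_1,\dots,k_r)}\prod_i\pr(X_{d-1,r}=k_i)$. Since each partition sum is finite (only finitely many tuples with $\sum k_i=k-1$), I can take $d\to\infty$ term by term and invoke $\pr(X_{d-1,r}=k_i)\to\pr(X_{\infty,r}=k_i)$ from Lemma~\ref{XYCONVERGE}; the left side converges to $\pr(X_{\infty,r}=k)$ by the same lemma. Equating the two limits yields the claimed distributional identity for every $k\ge 1$; for $k=0$ one uses that the masses sum to $1$ (established in Lemma~\ref{XYCONVERGE}) so matching all $k\ge 1$ forces the $k=0$ masses to agree as well.

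For the matching recursion (\ref{eq:recursionY}) the argument is structurally the same but one first identifies the distribution of the inner random variable $U_d\triangleq (Y_{d-1,r}+1)B(Y_{d-1,r}+1)$, whose mass function is $\pr(U_d=j)=\tfrac1j\pr(Y_{d-1,r}=j-1)$ for $j\ge1$ and $\pr(U_d=0)=\sum_{j\ge 1}(1-\tfrac1j)\pr(Y_{d-1,r}=j-1)$, and then $Y_{d,r}\stackrel{D}{=}U_d^{(r)}$, so $\pr(Y_{d,r}=k)=\sum_{(j_1,\dots,j_r):\sum j_i=k}\prod_i\pr(U_d=j_i)$ — again a finite sum for each $k$. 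Passing $d\to\infty$, using $\pr(Y_{d-1,r}=\cdot)\to\pr(Y_{\infty,r}=\cdot)$ and hence $\pr(U_d=\cdot)\to\pr(U_\infty=\cdot)$ where $U_\infty\triangleq(Y_{\infty,r}+1)B(Y_{\infty,r}+1)$, gives $\pr(Y_{\infty,r}=k)=\sum_{(j_1,\dots,j_r):\sum j_i=k}\prod_i\pr(U_\infty=j_i)=\pr(U_\infty^{(r)}=k)$, which is exactly the asserted identity $Y_{\infty,r}\stackrel{D}{=}\big((Y_{\infty,r}+1)B(Y_{\infty,r}+1)\big)^{(r)}$.

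The only point requiring mild care — and the closest thing to an obstacle — is the interchange of limit and (infinite) summation when one phrases things in terms of generating functions or of the full convolution $X_{d-1,r}^{(r)}$: a priori $X_{d-1,r}^{(r)}=k-1$ involves summing over unboundedly many tuples if the $k_i$ are not constrained. This is handled precisely as in Lemma~\ref{XYCONVERGE}: for fixed target value $k-1$ the constraint $\sum_i k_i=k-1$ with $k_i\ge0$ leaves only finitely many tuples, so no dominated-convergence machinery is needed and the term-by-term passage to the limit is legitimate. Everything else is bookkeeping with the joint law of $(Z,B(Z))$ dictated by the stated convention $\pr(Z=z,B(Z)=1)=(1/z)\pr(Z=z)$, so the proof is short once this finiteness observation is in place.
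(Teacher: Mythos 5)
Your proof is correct and takes essentially the same route as the paper: both pass to the limit $d\to\infty$ in the defining recursion, exploiting the pointwise convergence of Lemma~\ref{XYCONVERGE} together with the fact that for a fixed target value $k$ the partition sum $\sum_{k_1+\cdots+k_r=k-1}\prod_i\pr(X_{d-1,r}=k_i)$ has only finitely many terms, so the limit can be taken term by term. The only cosmetic difference is that you make the $k=0$ case explicit (via masses summing to one) while the paper leaves it implicit.
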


\begin{proof}
Applying Lemma~\ref{XYCONVERGE}, for each $k>0$,
\begin{align*}
\pr(X_{\infty,r}=k)&=\lim_{d\rightarrow\infty}\pr(X_{d,r}=k) \\
&=\lim_{d\rightarrow\infty}
{1\over k}\sum_{(k_1,...,k_{r})}\prod_{1\le l\le r}\pr(X_{d-1,r}=k_l) \\
&={1\over k}\sum_{(k_1,...,k_{r})}\prod_{1\le l\le r}\pr(X_{\infty,r}=k_l),
\end{align*}
where the sums are over all partitions $(k_1,...,k_{r}), k_l \ge 0, \sum_{1\le l\le r}k_l=k-1$.
But the last expression is exactly the probability that $(X_{\infty,r}^{(r)}+1)B(X_{\infty,r}^{(r)}+1)$
takes value $k$. The assertion then follows. A similar argument shows the identity for $Y_{\infty,r}$.
\end{proof}

\subsection{Solving for the distribution of  $X_{d,r}$ and $Y_{d,r}$}\label{section:genfunsec}
We now show that $\phi_{X_{\infty,r}}(s)$ and $\phi_{Y_{\infty,r}}(s)$ have a very simple
explicit form. We first show that they satisfy  simple differential equations.

\begin{lemma}\label{lemma:derivimport1}
For every $s\in [0,1)$
\begin{align*}
&\frac{d}{ds}\phi_{X_{\infty,r}}(s)  =\phi^{r}_{X_{\infty,r}}(s), \\
&\frac{d}{ds}
\phi^{\frac{1}{r}}_{Y_{\infty,r}}(s) = \phi_{Y_{\infty,r}}(s).
\end{align*}
\end{lemma}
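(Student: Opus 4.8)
The plan is to bypass the finite-$d$ recursions and work directly with the distributional fixed-point identities for $X_{\infty,r}$ and $Y_{\infty,r}$ furnished by Lemma~\ref{lemma:recursionXYinfty}, converting each into a relation between probability mass functions and then differentiating the resulting power series term by term.

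For the first identity I would start from $X_{\infty,r}\stackrel{D}{=}(X_{\infty,r}^{(r)}+1)B(X_{\infty,r}^{(r)}+1)$ and invoke the stated convention $\pr(Z=z,B=1)=(1/z)\pr(Z=z)$ with $Z=X_{\infty,r}^{(r)}+1\ge 1$: the product is nonzero precisely when the Bernoulli equals $1$, in which case it equals $Z$, so for every integer $k\ge 1$
\[
\pr(X_{\infty,r}=k)=\frac{1}{k}\,\pr\!\big(X_{\infty,r}^{(r)}=k-1\big),
\]
which is exactly the relation already written down inside the proof of Lemma~\ref{lemma:recursionXYinfty}. Hence
\[
\phi_{X_{\infty,r}}(s)=\pr(X_{\infty,r}=0)+\sum_{k=1}^{\infty}\frac{s^k}{k}\,\pr\!\big(X_{\infty,r}^{(r)}=k-1\big).
\]
Because $\phi_{X_{\infty,r}}$ is the p.g.f. of a $\rZ_+$-valued random variable it converges on $[0,1]$, so its radius of convergence is at least $1$ and it may be differentiated term by term on $[0,1)$; the constant term drops, the factor $1/k$ cancels the $k$ brought down, and reindexing gives $\frac{d}{ds}\phi_{X_{\infty,r}}(s)=\sum_{j\ge 0}s^j\,\pr(X_{\infty,r}^{(r)}=j)=\phi_{X_{\infty,r}^{(r)}}(s)$. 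Since $X_{\infty,r}^{(r)}$ is the $r$-fold convolution of $X_{\infty,r}$, its p.g.f. is $\phi^r_{X_{\infty,r}}$, which is the first claimed equation.

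For the second identity I would set $U\triangleq(Y_{\infty,r}+1)B(Y_{\infty,r}+1)$, so that Lemma~\ref{lemma:recursionXYinfty} says $Y_{\infty,r}\stackrel{D}{=}U^{(r)}$ and therefore $\phi_{Y_{\infty,r}}=\phi_U^r$; taking non-negative $r$-th roots (a p.g.f. is non-negative on $[0,1]$) gives $\phi_U=\phi^{1/r}_{Y_{\infty,r}}$ as an identity of functions, whence $\frac{d}{ds}\phi^{1/r}_{Y_{\infty,r}}(s)=\frac{d}{ds}\phi_U(s)$. The same mass-function computation as above, now with $Z=Y_{\infty,r}+1\ge 1$ so that $\pr(U=k)=\frac{1}{k}\pr(Y_{\infty,r}=k-1)$ for $k\ge 1$, yields $\frac{d}{ds}\phi_U(s)=\sum_{j\ge 0}s^j\,\pr(Y_{\infty,r}=j)=\phi_{Y_{\infty,r}}(s)$, completing the proof.

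I do not anticipate a genuine obstacle; the only points needing care are (i) reading off the mass-function identities from the fixed-point equations via the convention on the joint law of $(Z,B(Z))$, which is legitimate precisely because $X_{\infty,r}^{(r)}+1$ and $Y_{\infty,r}+1$ are at least $1$ almost surely, and (ii) justifying term-by-term differentiation, which is automatic since generating functions of $\rZ_+$-valued random variables have radius of convergence at least $1$. An alternative would be to establish the analogous finite-$d$ identities and pass to the limit using Lemma~\ref{XYCONVERGE}, but the direct route above is shorter.
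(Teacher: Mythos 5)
Your proof is correct and is essentially the same as the paper's: both read the mass-function identity off the fixed-point relation of Lemma~\ref{lemma:recursionXYinfty}, express $\phi_{X_{\infty,r}}$ (resp. $\phi^{1/r}_{Y_{\infty,r}}=\phi_U$) as the corresponding power series, and differentiate term by term to identify the derivative with $\phi_{X_{\infty,r}^{(r)}}=\phi^r_{X_{\infty,r}}$ (resp. $\phi_{Y_{\infty,r}}$). Your version merely makes the reindexing and the $p$-th-root step slightly more explicit than the paper does.
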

\begin{proof}
We first prove the identity for $X_{\infty,r}$.  Applying Lemma~\ref{lemma:recursionXYinfty},
\begin{align*}
\phi_{X_{\infty,r}}(s) =  \pr(X_{\infty,r} = 0) +
\sum_{k=0}^{\infty} \frac{1}{k+1}s^{k+1} \pr(X_{\infty,r}^{(r)} = k).
\end{align*}
Thus since the p.g.f. of any non-negative integer-valued r.v. is differentiable on [0,1), and  can be
 differentiated term-by-term, we obtain
\begin{align*}
\frac{d}{ds} \phi_{X_{\infty,r}}(s)
& = \frac{d}{ds} \sum_{k=0}^{\infty} \frac{1}{k+1}s^{k+1} \pr (X_{\infty,r}^{(r)} = k)\\
&= \sum_{k=0}^{\infty} s^{k} \pr (X_{\infty,r}^{(r)} = k) \\
&=\phi^{r}_{X_{\infty,r}}(s).
\end{align*}
As for for $Y_{\infty,r}$ we have from Lemma~\ref{lemma:recursionXYinfty} that
$\phi^{1\over r}_{Y_{\infty,r}}(s)$ is equal to the p.g.f. of $(Y_{\infty,r}+1)B(Y_{\infty,r}+1)$.
Therefore
\begin{align*}
\frac{d}{ds} \phi^{1\over r}_{Y_{\infty,r}}(s)&=
\sum_{k=0}^{\infty} \frac{d}{ds}\frac{1}{k+1}s^{k+1} \pr(Y_{\infty,r} = k) \\
&=\phi_{Y_{\infty,r}}(s).
\end{align*}
\end{proof}

We now  solve for the p.g.f. of $X_{\infty,r},Y_{\infty,r}$.
\begin{proposition}\label{prop:formofpgf}
For every $s \in [0,1]$,
\begin{align*}
&\phi_{X_{\infty,r}}(s) = (r -(r-1)s)^{-\frac{1}{r-1}}\\
&\phi_{Y_{\infty,r}}(s) =  (r- (r-1)s)^{-\frac{r}{r-1}}.
\end{align*}
\end{proposition}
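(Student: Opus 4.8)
The plan is to solve the two first-order ODEs from Lemma~\ref{lemma:derivimport1} subject to the obvious boundary conditions coming from the fact that $\phi_{X_{\infty,r}},\phi_{Y_{\infty,r}}$ are probability generating functions. For the $X$ case, the equation $\frac{d}{ds}\phi_{X_{\infty,r}}(s)=\phi^r_{X_{\infty,r}}(s)$ is separable: writing $\phi=\phi_{X_{\infty,r}}$, we have $\phi^{-r}\,d\phi=ds$, so $\frac{\phi^{1-r}}{1-r}=s+C$ on $[0,1)$, giving $\phi(s)=\big((r-1)(C'-s)\big)^{-\frac{1}{r-1}}$ for a constant $C'$. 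The boundary condition is $\phi_{X_{\infty,r}}(1)=1$, since the limiting p.g.f. evaluated at $1$ must equal the total mass, which is $1$ by Lemma~\ref{XYCONVERGE}. Plugging $s=1$ and $\phi=1$ forces $C'=\frac{r}{r-1}$, i.e. $(r-1)C'=r$, and hence $\phi_{X_{\infty,r}}(s)=\big(r-(r-1)s\big)^{-\frac{1}{r-1}}$, as claimed. One should note $r-(r-1)s>0$ for $s\in[0,1)$ so the expression is well-defined and the solution is the unique one through the boundary point; continuity at $s=1$ handles the endpoint.

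For the $Y$ case, set $\psi(s)=\phi^{\frac{1}{r}}_{Y_{\infty,r}}(s)$, so Lemma~\ref{lemma:derivimport1} reads $\psi'(s)=\psi(s)^r$ (since $\phi_{Y_{\infty,r}}=\psi^r$). This is the same separable ODE as before, so $\psi(s)=\big(r-(r-1)s\big)^{-\frac{1}{r-1}}$ using the boundary condition $\psi(1)=\phi_{Y_{\infty,r}}(1)^{1/r}=1$, which again follows from total mass $1$. Raising to the $r$-th power, $\phi_{Y_{\infty,r}}(s)=\big(r-(r-1)s\big)^{-\frac{r}{r-1}}$, as desired.

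The routine obstacles are minor: justifying that the differential equation pins down the solution uniquely (standard ODE uniqueness, since the right-hand side is locally Lipschitz in $\phi$ away from $\phi=0$, and $\phi$ is bounded away from $0$ on $[0,1]$ because $\phi_{X_{\infty,r}}(0)=\pr(X_{\infty,r}=0)>0$ — indeed one can check this probability is positive, or simply note that a p.g.f. with $\phi(0)=0$ would make the ODE degenerate and $\phi(1)=1$ impossible), and transferring the identity from $[0,1)$ to $s=1$ by continuity of p.g.f.'s on $[0,1]$. The only mildly delicate point is ensuring $\phi_{X_{\infty,r}}(0)>0$, equivalently $\pr(X_{\infty,r}=0)>0$; this is immediate since $\pr(X_{d,r}=0)\ge \pr(X_{1,r}=0)>0$ is bounded below along the sequence, or one verifies directly from the candidate formula that $\phi(0)=r^{-1/(r-1)}>0$ is consistent. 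I expect essentially no real difficulty here — this proposition is the clean payoff of the work done in establishing the recursions and the ODEs, and the proof is a short integration. If anything, the main thing to be careful about is simply matching the integration constant to the normalization $\phi(1)=1$ rather than, say, to the value at $0$, since the latter is what one would want to read off directly but is itself only known through the formula being derived.
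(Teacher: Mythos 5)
Your proof is correct and follows essentially the same route as the paper's: integrate the first-order ODE from Lemma~\ref{lemma:derivimport1} on $[0,1)$ and determine the integration constant from the normalization $\phi(1)=1$, then pass to $s=1$ by continuity. The paper's only cosmetic difference is that it packages the integration as showing $\frac{d}{ds}\phi^{-(r-1)}_{X_{\infty,r}}(s)=-(r-1)$ (and analogously $\frac{d}{ds}\phi^{-(r-1)/r}_{Y_{\infty,r}}(s)=-(r-1)$) is constant, rather than separating variables, and it does not spell out the uniqueness and $\phi(0)>0$ remarks that you add; those remarks are correct and harmless.
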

\begin{proof}
Applying the chain rule
and Lemma~\ref{lemma:derivimport1}
\begin{align*}
\frac{d}{ds}
\phi^{-(r-1)}_{_{\infty,r}}(s) &=
-(r-1)(\phi_{X_{\infty,r}}(s))^{-r}\frac{d}{ds}\phi_{X_{\infty,r}}(s)\\
&=-(r-1)(\phi_{X_{\infty,r}}(s))^{-r}\phi^{r}_{X_{\infty,r}}(s) \\
&=-(r-1).
\end{align*}
We conclude that $\phi_{X_{\infty,r}}(s)=(c-(r-1)s)^{-{1\over r-1}}$ for some constant $c$ for all $s\in [0,1)$.
Since $\lim_{s\uparrow 1}\phi_{X_{\infty,r}}(s)=\phi_{X_{\infty,r}}(1)=1$, we conclude
that $c=r$ and therefore
$\phi_{X_{\infty,r}}(s)=(r- (r-1)s)^{-\frac{1}{r-1}}$ and the required identity is established.

Similarly, we find
\begin{align*}
\frac{d}{ds}
(\phi_{Y_{\infty,r}}(s))^{-{r-1\over r}} &=
\frac{d}{ds}
\Big(\phi^{{1\over r}}_{Y_{\infty,r}}(s)\Big)^{-(r-1)}\\
&=-(r-1)\Big(\phi^{{1\over r}}_{Y_{\infty,r}}(s)\Big)^{-r}
\frac{d}{ds}\phi^{{1\over r}}_{Y_{\infty,r}}(s) \\
&=-(r-1)(\phi_{Y_{\infty,r}}(s))^{-1}\phi_{Y_{\infty,r}}(s) \\
&=-(r-1).
\end{align*}
Using this and $\phi_{Y_{\infty,r}}(1)=1$ the required identity is established.
\end{proof}

\subsection{Proofs of Theorems ~\ref{theorem:MWISWEIGHT} and \ref{theorem:MWMWEIGHT}}
We now have all the necessary results to complete the proofs of our main theorems.

\begin{proof}[Proof of Theorem~\ref{theorem:MWISWEIGHT}]
Applying the last part of Lemma~\ref{XYCONVERGE} and the Dominated Convergence Theorem (see \cite{durrett})
we have that if $W\stackrel{D}{=}F$, then
\begin{align*}
\lim_{d\rightarrow\infty}\E[W\ \phi^r_{{X_{d-1,r-1}}}(F(W))]&=\E[W\phi^r_{{X_{\infty,r-1}}}(F(W))].
\end{align*}
Here $W$ serves as a dominating random variable.
Applying Proposition~\ref{prop:formofpgf} the right-hand side of this expression equals
\begin{align*}
\E[W(r-1 -(r-2)F(W))^{-\frac{r}{r-2}}]&=\int_{0}^{\infty} x \Big((r-1) - (r-2)F(x)\Big)^{-\frac{r}{r-2}}f(x)dx.
\end{align*}
Now observe that $\E[W[{\cal IG}]]=\sum_{i\in V(G)}\E[W_iI(i\in{\cal IG})]$.
Applying part (\ref{eq:WgraphWtreeIS}) of Theorem~\ref{theorem:GraphTreeRelation},
Lemma~\ref{lemma:PROPPELWATER}, and letting $d\rightarrow\infty$ we obtain the result.
\end{proof}

\begin{proof}[Proof of Theorem~\ref{theorem:MWMWEIGHT}]
Observe that $\E[W[{\cal MG}]]=\sum_{e\in E(G)}\E[W_eI(e\in{\cal MG})]$. The rest of the proof is similar to the
case for independent sets.
\end{proof}

\section{The variance of $GREEDY$}\label{section:varrsec}
In this section we prove our main results on the variance of $GREEDY$.

\begin{proof}[Proof of Theorem~\ref{theorem:VAR12}]
Since $W[{\cal IG}]=\sum_{i\in V}W_iI(i\in {\cal IG})$, we have
\begin{align}
Var(W[{\cal IG}])&=\sum_{i,j\in V}\Big(\E[W_iW_jI(i,j\in {\cal IG})]-
\E[W_iI(i\in {\cal IG})]\E[W_jI(j\in {\cal IG})]\Big) \notag\\
&=\sum_{i\in V}\Big(\E[W_i^2I(i\in {\cal IG})]-(\E[W_iI(i\in {\cal IG})])^2\Big) \notag\\
&+\sum_{i\in V}\sum_{d\ge 0}\sum_{j\in N_{d+1}(i)\setminus N_d(i)}
\Big(\E[W_iW_jI(i,j\in {\cal IG})]-\E[W_iI(i\in {\cal IG})]\E[W_jI(j\in {\cal IG})]\Big) \notag\\
&\le n\E[W^2]\notag\\
&+\sum_{i\in V}\sum_{d\ge 0}\sum_{j\in N_{d+1}(i)\setminus N_d(i)}
\Big(\E[W_iW_jI(i,j\in {\cal IG})]-\E[W_iI(i\in {\cal IG})]\E[W_jI(j\in {\cal IG})]\Big). \label{eq:varexpanded}
\end{align}
Our proof approach is to show that the terms in parenthesis are sufficiently close to each other,
provided that the distance
between nodes $i$ and $j$ is sufficiently large.

Fix an arbitrary $i\in V$ and $j\in N_{d+1}(i)\setminus N_d(i)$ for $d\ge 2$. Recall the notion of the influence blocking
subgraph from Section~\ref{section:BLOCKINGSUBTREES}. Denote $IB(N(i))$ and $IB(N(j))$ by $H_i$ and $H_j$ for short.
Let $l=\lfloor d/2 \rfloor - 1$. Consider the event
${\cal E}\triangleq \big(H_i\subset N_l(i)\wedge H_j\subset N_l(j)\big)^c$.

We have
\begin{align}
\E[W_iW_jI(i,j\in {\cal IG})]&=\E[W_iW_jI(i,j\in {\cal IG},H_i\subset N_l(i),H_j\subset N_l(j))] \notag\\
&+\E[W_iW_jI(i,j\in {\cal IG})I({\cal E})]. \label{eq:overlap}
\end{align}
We first analyze the second summand.
\begin{align*}
\E[W_iW_jI(i,j\in {\cal IG})I({\cal E})]
&\le \E[W_iW_jI({\cal E})]\\
&\le \E[W_iW_j(1-I(H_i\subset N_l(i)))]+\E[W_iW_j(1-I(H_j\subset N_l(j)))]\\
&=\E[W_i(1-I(H_i\subset N_l(i)))]\E[W_j]+\E[W_j(1-I(H_j\subset N_l(j)))]\E[W_i]
\end{align*}
where the equality holds since both $W_i$ and the event $H_i\subset N_l(i)$ depend only on the weight
configuration inside $N_{l+1}(i)$ which does not contain node $j$ and vice verse. Next, applying the second part of
Lemma~\ref{lemma:ind11} and Lemma~\ref{lemma:shortandsweet} we have
\begin{align}
\E[W_i(1-I(H_i\subset N_l(i)))]&=\E[W_i](1-\pr(H_i\subset N_l(i))) \notag\\
&\le \E[W_i]r(r-1)^l/(l+1)!. \label{eq:notinNi}
\end{align}
Thus we obtain
\begin{align}
\E[W_iW_jI(i,j\in {\cal IG})I({\cal E})]&\le 2\E[W_i]\E[W_j]r(r-1)^l/(l+1)! \notag\\
&=2\E[W]^2r(r-1)^l/(l+1)! \label{eq:CalE}
\end{align}
We now analyze the first summand in (\ref{eq:overlap}). Let $\hat H_i=H_i\cap N_l(i),\hat H_j=H_j\cap N_l(j)$.
Namely, $\hat H_i$ and $\hat H_j$ are the subgraphs of $N_l(i)$ and $N_l(j)$ induced by nodes
$V(H_i)\cap V(N_l(i))$ and  $V(H_j)\cap V(N_l(j))$, respectively.
Observe that the random variables $W_iI(i\in {\cal IG}(\hat H_i), \hat H_i=H_i)$ and
$W_jI(j\in {\cal IG}(\hat H_j), \hat H_j=H_j)$ are independent. Indeed, since $I(\hat H_i=H_i) = I(H_i \in N_l(i))$ and $I(\hat H_j=H_j) = I(H_j \in N_l(j))$,
they are completely determined
by the weights inside $N_{l+1}(i)$ and $N_{l+1}(j)$ (respectively) and those do not intersect. Therefore
\begin{align}
\E[W_iW_j&I(i\in {\cal IG}(\hat H_i), \hat H_i=H_i,j\in {\cal IG}(\hat H_j), \hat H_j=H_j)] \notag\\
&=
\E[W_iI(i\in {\cal IG}(\hat H_i), \hat H_i=H_i)]
\E[W_jI(j\in {\cal IG}(\hat H_j), \hat H_j=H_j)] \label{eq:independence}
\end{align}
On the other hand
\begin{align*}
I(i\in {\cal IG}(\hat H_i), \hat H_i=H_i)&=I(i\in {\cal IG}(\hat H_i), H_i\subset N_l(i))\\
&=I(i\in {\cal IG}, H_i\subset N_l(i))
\end{align*}
where the second equality follow from Lemma~\ref{lemma:SBOUNDEDSAYSITALL}. Similarly we obtain
\begin{align*}
I(j\in {\cal IG}(\hat H_j), \hat H_j=H_j)&=I(j\in {\cal IG}, H_j\subset N_l(j)).\\
\end{align*}
Thus, we can rewrite (\ref{eq:independence}) as
\begin{align*}
\E[W_iW_j&I(i,j\in {\cal IG},H_i\subset N_l(i),H_j\subset N_l(j))] \\
&=\E[W_iI(i\in {\cal IG}, H_i\subset N_l(i))]\E[W_jI(j\in {\cal IG}, H_j\subset N_l(j))].
\end{align*}
We recognize the left-hand side of this equation as the first summand in (\ref{eq:overlap}).
Returning to (\ref{eq:overlap}) we obtain
\begin{align}
\Big|\E[W_iW_jI(i,j\in {\cal IG})]&-
\E[W_iI(i\in {\cal IG}, H_i\subset N_l(i))]\E[W_jI(j\in {\cal IG}, H_j\subset N_l(j))]\Big| \notag\\
\le
&2\E[W]^2r(r-1)^l/(l+1)! . \label{eq:bound1}
\end{align}
Also we have
\begin{align*}
\E[W_iI(i\in {\cal IG})]&=
\E[W_iI(i\in {\cal IG},H_i\subset N_l(i))]+\E[W_iI(i\in {\cal IG}(H_i))(1-I(H_i\subset N_l(i)))] ,
\end{align*}
and
\begin{align*}
\E[W_iI(i\in {\cal IG}(H_i))(1-I(H_i\subset N_l(i)))]\le \E[W_i(1-I(H_i\subset N_l(i)))]
\le \E[W]r(r-1)^l/(l+1)!,
\end{align*}
where the second inequality is (\ref{eq:notinNi}). It follows
\begin{align*}
\Big|\E[W_iI(i\in {\cal IG})]-\E[W_iI(i\in {\cal IG},H_i\subset N_l(i))]\Big|
\le \min( \E[W]r(r-1)^l/(l+1)!, \E[W_iI(i\in {\cal IG})]).
\end{align*}
A similar inequality holds for $j$. Putting these two bounds together we obtain
\begin{align*}
\Big|\E[W_iI(i\in {\cal IG})]&\E[W_jI(j\in {\cal IG})]-
\E[W_iI(i\in {\cal IG},H_i\subset N_l(i))]
\E[W_jI(j\in {\cal IG},H_j\subset N_l(j))]\Big| \\
&\le 2\E[W]^2r(r-1)^l/(l+1)!,
\end{align*}
where  trivial bounds $\E[W_iI(i\in {\cal IG})]\le \E[W_i],
\E[W_jI(j\in {\cal IG})]\le \E[W_j]$ are used.
Combining with bound (\ref{eq:bound1}) we obtain
\begin{align*}
\Big|\E[W_iW_jI(i,j&\in {\cal IG})]-\E[W_iI(i\in {\cal IG})]\E[W_jI(j\in {\cal IG})]\Big| \\
&\le 4\E[W]^2r(r-1)^l/(l+1)! \\
&= 4 \frac{r}{r-1} \E[W^2] (r-1)^{l+1}/(l+1)! .
\end{align*}
We now use this estimate in (\ref{eq:varexpanded}). Observe that $|N_{d+1}(i)\setminus N_d(i)| \leq r(r-1)^d$.
Recall that $l+1=\lfloor d/2\rfloor $. Then for each $i$, considering the cases of
odd and even $d$ separately and observing that the estimate is also trivially an upper bound for the $d=0,1$ cases, the double sum
$\sum_{d\ge 0}\sum_{j\in N_{d+1}(i)\setminus N_d(i)}$ in (\ref{eq:varexpanded}) is upper bounded by
\begin{align*}
4 \frac{r}{r-1} \E[W^2] \sum_{d \ge 0} r (r-1)^d {(r-1)^{\lfloor \frac{d}{2} \rfloor}\over (\lfloor d/2\rfloor)!}
&= 4 \frac{r^2}{r-1} \E[W^2] \sum_{k\ge 0} {(r-1)^{3k}\over k!}
+4 \frac{r^2}{r-1} \E[W^2] \sum_{k\ge 0} {(r-1)^{3k+1}\over k!} \\
&<8 \frac{r^2}{r-1} \E[W^2] \sum_{k\ge 0} {(r-1)^{3k+1}\over k!} \\
& < 8\E[W^2] r^2 \exp((r-1)^3).
\end{align*}
Our final upper bound on $Var(W[{\cal IG}])$ becomes
\begin{align*}
n\E[W^2]+8n\E[W^2]r^2\exp((r-1)^3)<9n\E[W^2]r^2\exp((r-1)^3)
\end{align*}
This completes the proof.  The proof for matchings follows similarly, and is omitted.
\end{proof}

\section{Numerical results}\label{section:NUMERICS}
In this section we numerically evaluate the performance of $GREEDY$ in several settings, and compare
our results to the prior work.
We first compare our bound, marked NEW in the table below, on the cardinality (normalized by the number of nodes)
of a $MIS$ in an $r$-regular
graph of girth at least $g$ (Corollary~\ref{coro:MICOR}) to the previous bounds in~\cite{S.91} and \cite{L.}.  The bounds of \cite{S.91} are coming from their Theorem 3 (when $g < 127$) and their Theorem 4 (when $g \geq 127)$, with $w_i = 1$ for all $i$ (their formulas involve a notion of weighted girth).  The bounds of \cite{L.} are coming from their Table 2.  Omitted values are those for which no corresponding results are given or the given bounds are trivial.  Certain values of the form $2k + 3$ are emphasized to be compatible with the Table 2 given in \cite{L.}.  All values are rounded up to the nearest thousandth.
\begin{table}[h]
\caption{Comparison of bounds for the cardinality of $MIS$ in $r$-regular large-girth graphs}
\centering
\begin{tabular}{|c|ccc|ccc|ccc|}
\hline
\myfrac{g}{r} &\multicolumn{3}{c|}{5}&\multicolumn{3}{|c|}{7}&\multicolumn{3}{|c|}{10} \\
\hline
\ &  NEW & $\cite{S.91}$ & $\cite{L.}$ &  NEW & $\cite{S.91}$ & $\cite{L.}$ &  NEW & $\cite{S.91}$ & $\cite{L.}$\\
\hline
50 & .302 & .288 & - & .256 & .239 & - & .160 & .194 & -  \\
100 & .302 & .294 & - & .256 & .243 & - & .211 & .197 & -  \\
203 & .302 & .304 & .262 & .256 & .250 & - & .211 & .201 & .169  \\
403 & .302 & .306 & .277 & .256 & .251 & - & .211 & .202 & .184  \\
2003 & .302 & .308 & .294 & .256 & .252 & - & .211 & .203 & .202  \\
\hline
\end{tabular}
\label{tabble}
\end{table}
As we see our new bounds are the strongest for many calculated values of $g$ and $r \geq 7$.  Recall that
for $r \geq 7$, our bounds are asymptotically (as $g \rightarrow \infty$) equivalent to those of \cite{L.},
and superior to those of \cite{S.91}.  Note that our bounds converge to their limit much faster than the bounds
of \cite{S.91} and \cite{L.}.
\newpage
 We now give our bounds for the cardinality of a $MM$ (also normalized by the number of nodes) in an $r$-regular graph of girth
at least $g$ (Corollary~\ref{coro:MMCOR}).
These are the first results for $MM$ in this setting.
\begin{table}[h]
\caption{Bounds for the cardinality of $MM$ in $r$-regular large-girth graphs}
\centering
\begin{tabular}{|c|c|c|c|c|c|c|c|}
\hline
\myfrac{g}{r} &3&4&5&6&7&10&13 \\
\hline
25 &  .437  &  .427  & -  &  -   & -    & - & - \\
40 &  .438  &  .444  & .450  &  .454   & .424    & -& -  \\
50 &  .438 & .444    & .450  & .455 & .459 & - &-\\
75 &  .438 & .444    & .450  & .455 & .459 & .468&.449 \\
100 &  .438 & .444    & .450  & .455 & .459 & .468&.473 \\

\hline
\end{tabular}
\label{tabble2}
\end{table}

Note that as $r$ increases, the asymptotic (in $r$) size of a $MM$ approaches that of a perfect matching ($\frac{n}{2}$),
as expected from Corollary~\ref{coro:MMCORunweight}.

 We now give our results for $MWIS$ and $MWM$ with i.i.d Exp(1) (exponentially distributed with parameter $1$)
 weights, and compare to the results
 given in \cite{GamarnikNowickiSwirscszExpDyn}.  The $GREEDY$ columns show the expected asymptotic weight (normalized by
 the number of nodes)
 of the weighted independent set and matching returned by $GREEDY$
 as given in Theorems~\ref{theorem:MWISWEIGHT} and~\ref{theorem:MWMWEIGHT},
 while the \cite{GamarnikNowickiSwirscszExpDyn} columns reflect the expected asymptotic weight of a true $MWIS$ and $MWM$ as computed in \cite{GamarnikNowickiSwirscszExpDyn}.  We only give results for $r$-regular graph with limiting girth, as no results for fixed girth are given in \cite{GamarnikNowickiSwirscszExpDyn}.

\vspace{+.1in}
\begin{table}[h]
\caption{Exact MWIS and MWM vs. GREEDY for $r$-regular large-girth graphs with i.i.d Exp(1) weights}
\centering
\begin{tabular}{|c|cc|cc|}
\hline
\ &\multicolumn{2}{c|}{$MWIS$}&\multicolumn{2}{|c|}{$MWM$} \\
\hline
$r$  &  $\cite{GamarnikNowickiSwirscszExpDyn}$&$GREEDY$&$\cite{GamarnikNowickiSwirscszExpDyn}$&$GREEDY$ \\
\hline
3   & .6077 & .5966 & .7980 & .7841\\
4   & $.5631^{1}$ & .5493 & .9022 & .8826\\
5   & - & .5119 & .9886 & .9643\\
10   & - & .3967 & 1.282 & 1.242 \\
\hline
\end{tabular}
\label{tabble3}
\end{table}

In all cases, $GREEDY$ is nearly optimal.
\footnotetext[1]{reported incorrectly in \cite{GamarnikNowickiSwirscszExpDyn}.}

\section{Conclusion}\label{section:CONCLUSION}
We have provided new results for the performance of a simple
randomized greedy algorithm for finding large independent sets and
matchings in regular graphs with large finite girth. This provided
new constructive and existential results in several settings. One of
the interesting insights from this work is demonstrating a
correlation decay property of the greedy algorithm which  aids
greatly the analysis of this algorithm. In addition to several
explicit bounds on the sizes and weights of independent sets and
matchings produced by the $GREEDY$ algorithm, we established
concentration results by bounding the variance of the values
produced by $GREEDY$. As a byproduct, this shows that $GREEDY$ is
very robust w.r.t. random choices: running it several times will
produce roughly the same result in terms of the cardinality of the
produced independent set and matching.

\section*{Acknowledgements}\label{section:ACK}
The authors would like to thank Alan Frieze and Nick Wormald for a
very helpful correspondence explaining the state of the art results
in the area.  The authors also thank Dimitriy Katz for his insights
into the performance of the $GREEDY$ algorithm, and Theo Weber for
several interesting  discussions about the decay of correlations
phenomenon.

\bibliographystyle{amsalpha}

\providecommand{\bysame}{\leavevmode\hbox to3em{\hrulefill}\thinspace}
\providecommand{\MR}{\relax\ifhmode\unskip\space\fi MR }
\providecommand{\MRhref}[2]{%
  \href{http://www.ams.org/mathscinet-getitem?mr=#1}{#2}
}
\providecommand{\href}[2]{#2}

\end{document}